\documentclass[a4paper,UKenglish,cleveref, autoref, thm-restate, numberwithinsect]{lipics-v2021}

\pdfoutput=1 
\hideLIPIcs  

\usepackage{graphicx}
\usepackage{makecell} 
\usepackage{tikz}
\usetikzlibrary{positioning}
\usetikzlibrary{backgrounds,fit,shapes.misc,hobby}
\usetikzlibrary{arrows.meta, calc}

\newif\iffullappendix
\fullappendixtrue

\tikzset{
	every picture/.style={line width=1pt},
	point/.style args={#1}{circle, draw=#1, fill=#1, minimum size=9pt, inner sep=0pt},
	center/.style args={#1}{rectangle, draw=black, fill=#1, line width= .5, minimum size=10pt, inner sep=0pt},
	outlier/.style args={#1}{circle, draw=#1, minimum size=9pt, inner sep=0pt},
    anchormark/.style={
        append after command={\pgfextra{%
        \draw[#1, thin, rotate around={45:(\tikzlastnode)}] (\tikzlastnode) 
        +(-7pt,-7pt) rectangle +(7pt,7pt);
    }}},
}

\newcommand{\convexpath}[2]{
  [
    create hullcoords/.code={
      \global\edef\namelist{#1}
      \foreach [count=\counter] \nodename in \namelist {
        \global\edef\numberofnodes{\counter}
        \coordinate (hullcoord\counter) at (\nodename);
      }
      \coordinate (hullcoord0) at (hullcoord\numberofnodes);
      \pgfmathtruncatemacro\lastnumber{\numberofnodes+1}
      \coordinate (hullcoord\lastnumber) at (hullcoord1);
    },
    create hullcoords
  ]
  ($(hullcoord1)!#2!-90:(hullcoord0)$)
  \foreach [
    evaluate=\currentnode as \previousnode using \currentnode-1,
    evaluate=\currentnode as \nextnode using \currentnode+1
  ] \currentnode in {1,...,\numberofnodes} {
    -- ($(hullcoord\currentnode)!#2!-90:(hullcoord\previousnode)$)
    let \p1 = ($(hullcoord\currentnode)!#2!-90:(hullcoord\previousnode) - (hullcoord\currentnode)$),
        \n1 = {atan2(\y1,\x1)},
        \p2 = ($(hullcoord\currentnode)!#2!90:(hullcoord\nextnode) - (hullcoord\currentnode)$),
        \n2 = {atan2(\y2,\x2)},
        \n{delta} = {-Mod(\n1-\n2,360)}
    in {arc [start angle=\n1, delta angle=\n{delta}, radius=#2]}
  }
  -- cycle
}

\NewDocumentCommand{\fairletarea}{O{gray!27} O{8pt} m}{%
  \scoped[on background layer]{%
    \fill[#1] \convexpath{#3}{#2};%
  }%
}

\usepackage[utf8]{inputenc}
\usepackage{xcolor}
\usepackage{color}

\definecolor{red}{RGB}{213, 94, 0}
\definecolor{blue}{RGB}{0, 114, 178}
\definecolor{purple}{RGB}{255, 133, 207}
\definecolor{green}{RGB}{30, 210, 138}

\usepackage{amsmath,amsthm,amsfonts}
\usepackage{mathtools}
\usepackage{MnSymbol}
\usepackage{mathrsfs}

\usepackage[linesnumbered,ruled,vlined]{algorithm2e}

\SetCommentSty{mycommfont}

\usepackage{caption}
\usepackage{subcaption}

\usepackage{hyperref}

\usepackage{booktabs}
\usepackage{longtable}
\usepackage{paracol}
\usepackage{tabularray}
\UseTblrLibrary{booktabs}

\DeclareMathOperator{\anc}{anc}
\DeclareMathOperator{\OPT}{OPT}

\DeclareMathOperator{\F}{\mathcal{F}} 

\DeclareMathOperator{\C}{\mathcal{C}} 
\DeclareMathOperator{\Representatives}{REP}
\renewcommand{\Representatives}{A}

\DeclareMathOperator{\OPTfair}{OPT_{fair}}

\DeclareMathOperator{\cost}{cost}

\title{Exact ratio preservation via outliers for fair \texorpdfstring{\(k\)}{k}-center clustering}

\author{Anna Arutyunova}{Heinrich Heine University Düsseldorf, Faculty of Mathematics and Natural Sciences, Germany}{anna.arutyunova@hhu.de}{}{}
\author{Irina Fast}{Heinrich Heine University Düsseldorf, Faculty of Mathematics and Natural Sciences, Germany}{irfas101@hhu.de}{}{}
\author{Annika Hennes}{Heinrich Heine University Düsseldorf, Faculty of Mathematics and Natural Sciences, Germany}{annika.hennes@hhu.de}{https://orcid.org/0000-0001-9109-3107}{}
\author{Carsten Krollmann}{Heinrich Heine University Düsseldorf, Faculty of Mathematics and Natural Sciences, Germany}{cakro105@hhu.de}{}{}
\author{Daniel R. Schmidt}{Heinrich Heine University Düsseldorf, Faculty of Mathematics and Natural Sciences, Germany}{dschmidt@hhu.de}{https://orcid.org/0000-0001-7381-912X}{}
\author{Melanie Schmidt}{Heinrich Heine University Düsseldorf, Faculty of Mathematics and Natural Sciences, Germany}{mschmidt@hhu.de}{https://orcid.org/0000-0003-4856-3905}{}

\authorrunning{A. Arutyunova, I. Fast, A. Hennes, C. Krollmann, D. Schmidt and M. Schmidt} 

\Copyright{Anna Arutyunova, Irina Fast, Annika Hennes, Carsten Krollmann, Daniel R. Schmidt, and Melanie Schmidt} 

\ccsdesc[500]{Theory of computation~Facility location and clustering}
\ccsdesc[500]{Theory of computation~Approximation algorithms analysis}

\keywords{Fairness, k-center, approximation algorithms}

\supplement{\hfill}

\supplementdetails[subcategory={Source Code}, cite={}, swhid={}]{Software}{https://github.com/algo-hhu/fair-k-center-via-outliers/}

\funding{Funded by the Deutsche Forschungsgemeinschaft (DFG, German Research Foundation) - Project 456558332 through the Emmy Noether Programme}

\nolinenumbers 

\EventEditors{John Q. Open and Joan R. Access}
\EventNoEds{2}
\EventLongTitle{42nd Conference on Very Important Topics (CVIT 2016)}
\EventShortTitle{CVIT 2016}
\EventAcronym{CVIT}
\EventYear{2016}
\EventDate{December 24--27, 2016}
\EventLocation{Little Whinging, United Kingdom}
\EventLogo{}
\SeriesVolume{42}
\ArticleNo{23}

\begin{document}

\maketitle

\begin{abstract}
    We study the $k$-center clustering problem under demographic fairness constraints, where the point set is partitioned into groups, and the aim is to compute clusters that exhibit a given group proportion. Previous work in this direction assumes that the entire point set already respects the desired proportions or uses relaxed notions of fairness.

    In this work, we propose a model that facilitates the creation of clusters that exactly match given target ratios, even when the input point set does not. We combine the well-known fair clustering model initiated by Chierichetti, Kumar, Lattanzi, and Vassilvitskii \cite{chierichetti2017fair} with the notion of outliers to obtain a practical combinatorial framework that provides constant-factor approximate solutions for all proportion settings from $1:1$ for two groups to $t_1:t_2:\ldots:t_m$ for $m\geq 2$ groups, where $t_1,\ldots,t_m$ are integers.
    
    We implement and evaluate our algorithms, compare different variants, and provide evidence of the practicability of this approach.
\end{abstract}

\thispagestyle{empty}
\newpage
\setcounter{page}{1}

\section{Introduction}

Fair clustering is a very active line of research in clustering algorithm design, introduced by Chierichetti, Kumar, Lattanzi, and Vassilvitskii~\cite{chierichetti2017fair} in 2017. The key idea is that, when applying $k$-clustering, the user may specify a protected attribute such as gender, race, or another demographic variable to which the algorithm should pay particular attention. The goal is to ensure group-level fairness: each cluster shall contain the same proportion of attribute values (i.\,e., demographic groups) as observed in the overall dataset. That is, the composition of every cluster shall match the global distribution of the protected attribute.

Enforcing such fairness can be valuable in a variety of real-world settings, as it helps ensure diversity within each group. This is why the model is called \emph{fair}: It makes sure that every group is adequately represented in each cluster. Fairness is a desired or necessary condition in many applications, e.\,g., representation of people with protected characteristics in committees, creating geographic zones with demographic constraints (e.g., schools~\cite{doi:10.1137/1.9781611977912.97}), allocation of scarce resources (e.g., access to childcare, or charging times for electric vehicles~\cite{10251779}), or preventing dominance of a single actor in ads~\cite{ahmadian2019clustering}.

To further explain the model, let us consider a simple base case (the general case is defined in Section~\ref{sec:def}): Given is a set of points $P$, a metric $d : P \times P \to \mathbb{R}_{\ge 0}$, the desired number of centers $k$, and a mapping $\gamma : P \to \{\text{blue},\text{red}\}$ that satisfies $|\gamma^{-1}(\text{blue})|=|\gamma^{-1}(\text{red})|$, i.\,e., the input contains an equal number of red and blue points. Then the fair $k$-center problem is to partition $P$ into $k$ clusters where each of the clusters $C_1,\ldots, C_k$ also contains an equal number of red and blue points, i.\,e.,  $|C_i \cap \gamma^{-1}(\text{blue})|=|C_i \cap \gamma^{-1}(\text{red})|$  for all $i\in [k]$. Quite intuitively, this problem can be solved via a combination of matching and unconstrained clustering: First, pair each blue point with one red point such that the maximum distance between a point and its partner is minimized (see Figure~\ref{fig:fair-clustering:b}). Second, compute an unconstrained $k$-center clustering on $P$ and assign each pair $(r,b)$ to the center $c$ that minimizes $\max\{d(r,c),d(b,c)\}$ (see Figure~\ref{fig:fair-clustering:c}). 
This approach indeed yields a constant factor approximation, and it can be refined to achieve a $3$-approximation for the fair $k$-center problem for this special two-color case with color classes of equal cardinality.

\tikzset{
    pointset/.pic={
    \node[point=red] (r1) at (0,0.2) {};
    \node[point=blue] (b1) at (1,1) {};
    \node[point=red] (r2) at (1,0) {};
    \node[point=blue] (b2) at (1,-1) {};
    \node[point=red] (r3) at (1.4,.45) {};
    \node[point=blue] (b3) at (1.6,-.1) {};
    \node[point=red] (r4) at (3.2,0) {};
    \node[point=blue] (b4) at (2.3,.6) {};
    \node[point=red] (r5) at (5,.4) {};
    \node[point=blue] (b5) at (4.5,-.3) {};
  }
}

\begin{figure}[htbp]
    \captionsetup[subfigure]{justification=centering}
    \centering
    \begin{subfigure}[t]{0.3\textwidth}
        \scalebox{.64}{%
        \begin{tikzpicture}
            \useasboundingbox (-1,-3) rectangle (5,3);
            \pic{pointset};
            
            \draw (r2) circle (1.55);
            \draw (b5) circle (1.45);

            \node[center=red] (r2) at (r2) {};
            \node[center=blue] (b5) at (b5) {};
        \end{tikzpicture}}
        \subcaption{A clustering that is not aware of the colors.}
        \label{fig:fair-clustering:a}
    \end{subfigure}\hfill
    
    \begin{subfigure}[t]{0.3\textwidth}
    \scalebox{.63}{%
    \begin{tikzpicture}
        \useasboundingbox (-1,-3) rectangle (5,3);
        \pic{pointset};
        \draw[-] (r1) -- (b1);
        \draw[-] (r2) -- (b2);
        \draw[-] (r3) -- (b3);
        \draw[-] (r4) -- (b4);
        \draw[-] (r5) -- (b5);
    \end{tikzpicture}}
    \subcaption{A matching between red and blue points.}
    \label{fig:fair-clustering:b}
    \end{subfigure}\hfill
    \begin{subfigure}[t]{0.3\textwidth}
    \scalebox{.63}{%
    \begin{tikzpicture}
        \useasboundingbox (-1,-3) rectangle (5,3);
        \pic{pointset};
        \draw[-] (r1) -- (b1);
        \draw[-] (r2) -- (b2);
        \draw[-] (r3) -- (b3);
        \draw[-] (r4) -- (b4);
        \draw[-] (r5) -- (b5);

        \draw (r2) circle (2.3);
        \draw (b5) circle (1.0);

        \node[center=red] (r2) at (r2) {};
        \node[center=blue] (b5) at (b5) {};
    \end{tikzpicture}}
    \subcaption{A fair clustering that respects a $1:1$ color ratio in every cluster.}
    \label{fig:fair-clustering:c}
    \end{subfigure}\hfill
    \begin{subfigure}[t]{0.3\textwidth}
        \scalebox{.63}{%
    \begin{tikzpicture}
        \useasboundingbox (-1,-3) rectangle (5,3);
        \pic{pointset};
        \node[point=blue] (b6) at ($(r4)+(.7,0)$) {};
        
        \draw[-] (r1) -- (b1);
        \draw[-] (r2) -- (b2);
        \draw[-] (r3) -- (b3);
        \draw[-] (r4) -- (b6);
        \draw[-] (r5) -- (b5);

        \draw (r2) circle (1.15);
        \draw (b5) circle (1.45);

        \node[center=red] (r2) at (r2) {};
        \node[center=blue] (b5) at (b5) {};
    \end{tikzpicture}}
    \caption{A $1:1$-fair clustering with $1$ blue outlier.}
    \label{fig:fairwithoutlier}
    \end{subfigure}
    \caption{Visualizations of fair decomposition and fair clustering without/with one outlier. \label{fig:fair-clustering}}
\end{figure}

The first step is a special case of a so-called \emph{fairlet decomposition} which decomposes $P$ into micro clusters called fairlets which 1) exactly match a given ratio and 2) are inclusion-wise minimal under this property, i.\,e., no subset satisfies the ratio. 
When applied to multiple colors and arbitrary ratios, the combinatorial structure is more involved, but computing fairlet decompositions that exactly meet given ratios remains a combinatorially nice problem. For example~\cite{rosner2018privacy} describe how to compute fairlet decompositions in general.

The main hurdle now is that in real-world instances, enforcing exact proportional representation is often infeasible or undesirable. Imagine that a point set $P$ contains $999$ red points and $1001$ blue points. 
Since these numbers are coprime, the only subset of $P$ that matches the input ratio is $P$ itself, i.\,e., the only existing fairlet decomposition is a composition in just one subset, namely, $P$. 

There are different ways out of this scenario. The path pursued so far is to relax the notion of fairness and require that the input ratio is only nearly matched, e.\,g., by allowing an $\epsilon$-deviation. This idea is studied in 
\cite{bera2019fairclustering} and \cite{bercea2018bi} for multicolor inputs and arbitrary input ratios. The resulting algorithms achieve bicriteria guarantees, allowing a small additive violation in the fairness constraints, and are based on LP rounding. Combinatorial algorithms for these interval-based models are not known. The reliance on LP methods can be expected for this variant, as the introduction of approximate preservation disrupts the underlying combinatorial structure that fair clustering problems typically exhibit and that combinatorial approximation algorithms could exploit.

We go a different route. Our model is inspired by the well-studied $k$-center with outlier problem where a few points can be excluded from the clusters. 
For an initial motivation, consider the problem of finding a minimum cost maximum matching in a bipartite graph. If the two sides of the graph are of different cardinality, then we can only find a perfect matching for the smaller side. The larger side will have unmatched vertices, which can be viewed as outliers.
The number of outliers is determined by the difference in cardinality of the two sides and the outliers are chosen such that the cost of the matching is minimized. This scenario is a special case of our model, with \(m=2\) and a target ratio of \(1:1\).
We want to generalize this idea to fair clustering. In its most general form, we have an arbitrary number of classes and allow the specification of a target ratio that might be different from the input ratio. In the fair clustering, the target ratio is achieved by excluding a certain number of points from different classes as outliers such that the ratio of the remaining points matches the target ratio.

Our model, applied to the two-color case, is as follows: Say the ratio of red to blue points in the input point set is $r:b$, where these numbers are coprime and $r < b$. Then the user can specify a target ratio $1:s$ where $s \le b/r$, i.\,e., it is achievable by excluding points from the majority color. Then our fair $k$-center with outliers formulation asks for $1:s$-fair clusters and allows the respective number of points of the second color to be marked as outliers. See \Cref{fig:fairwithoutlier} for a $1:1$-fair clustering with $1$ blue outlier.

The benefit of this model is that it keeps the nice combinatorial structure of the exactly fair model (compared to the approximately fair models) but enables us to compute reasonable clusterings for any point set, not only those where the ratio of colors in the input is specifically nice. 
We extend known algorithms for fair $k$-center clustering to obtain our results. 
\Cref{tab:summary of cases} summarizes our results for fair $k$-center for different cases. The first column describes the target ratio that we want to achieve. The first row is only for two colors and the important special case that the ratio is of the form $1:t$ which allows for better approximation. Rows 2 and 3 are for arbitrary numbers of colors where row 3 allows ratios of the form $2:3:3$ while row 2 gives the easier case that is an extension of row 1. We explicitly list how expensive the fairlet decomposition is; the final clustering only increases the cost mildly.
\begin{table} [htbp] 
    \caption{Summary of our results on fair $k$-center with outliers. The final clustering is always computed by  \Cref{alg:general-case-outlier-algorithm}, see \Cref{lem:4-approximation}. $\OPT$ refers to the optimal solution for the respective fairness condition.}
    \label{tab:summary of cases}
    \begin{center}
    \begin{tabular}{ccccc} 
            \toprule
        & compute fairlets & fairlet cost & clustering cost \\
        \midrule
        $1:t$ & \Cref{alg:compute-1-t-fairlets} & $2\OPT$ (\Cref{cor:1-t-fairlet-decomposition-2OPT}) & $4\OPT$ (\Cref{thm:1-t-unbalanced}) \\ 
        $1:t_2:...:t_m$ & \Cref{alg:compute-1:t2:...:tm-fairlets} & $2\OPT$ (\Cref{cor:1-t2-...-tm-fairlet-decomposition-2OPT}) & $4\OPT$ (\Cref{thm:1-t1-...-tm-unbalanced}) \\ 
        $t_1:\ldots:t_m$ & \Cref{alg:compute-s1-s2-...-sm-fairlets} & $12\OPT$ (\Cref{lem:properties-computed-fairlets-s1:s2:...:sm}) & $14\OPT$ (\Cref{thm:s1:...:sm-14-approx})
    \end{tabular}
    \end{center}
\end{table}

One can view our algorithms also as a contribution to the field of combining different clustering constraints, in this case, fairness and outliers. However, note that the number of outliers is computed automatically from the target ratio, and that we never choose outliers from the minority color class.

\subsection{Related work}
We review work that is most related to our work and follows the seminal paper~\cite{chierichetti2017fair}. 
Various variants of fair clustering have been proposed that we do not discuss in detail, e.\;g.,~\cite{anegg2022trueapprox,jia2021faircolorful,kim2025fairclusteringalignment,KleindessnerSAM19}.
Chierichetti, Kumar, Lattanzi, and Vassilvitskii~\cite{chierichetti2017fair} introduced the following model: For a point set $P = R \cup B$ consisting of red and blue points, the \emph{balance} of a set $Y \subset P$ is the value $\min ( \frac{|Y \cap R|}{|Y \cap B|}, \frac{|Y \cap B|}{|Y \cap R|} ) \in [0, 1]$. The balance of a clustering is the smallest balance of any of its clusters. The focus of this model is to produce clusterings that are as balanced as possible, ideally reaching a balance of $1$ (i.\,e., a \(1:1\) ratio). A clustering is $t$-fair in this model if every cluster has a balance value of at least $t$.

\cite{chierichetti2017fair} also introduced the notion of fairlets.  One striking observation in the paper was that for objective functions such as $k$-center and $k$-median, the computation of a cheap \emph{fairlet decomposition}, i.\,e., a partitioning of $P$ into fairlets, often becomes a clean combinatorial optimization problem. For example, if $P$ itself is (exactly) $1$-balanced, then computing a fairlet decomposition becomes a matching problem (in a suitable bipartite graph with weights depending on the objective), and constant factor approximations for computing $1$-balanced clusterings can be deduced from this fact.
Early follow-up works \cite{huang2019coresetsfair} and \cite{schmidt2019faircoresets} provide speed-ups that increase the scalability of this approach.

\subparagraph*{Multi-color generalizations.}
If the protected attribute has more than two values, 
then a literal generalization of the balance-based model is the $\alpha$-capped $k$-center problem introduced by~\cite{ahmadian2019clustering}. There, the goal is to partition $P$ in a way such that no single color (group) exceeds an $\alpha$-fraction of any cluster. 
This generalization is motivated by real-world scenarios like online advertising, where keywords (points) are clustered, and advertisers (colors) should not dominate any single cluster to prevent manipulation. For two colors, the $\alpha$-capped $k$-center problem equals the $t$-balanced fair $k$-center problem. This model focuses on limiting the majority's share in each cluster, not on matching the cluster representation to the original distribution of group proportions.
Our work builds on the line of work that emphasizes accurate group representation and minority protection and uses a different generalization. The works~\cite{bercea2018bi,BohmFLMS21,rosner2018privacy} for the multi-color case demand that each protected group's share of each cluster must match the share in the overall population exactly. We call this model \emph{exact representation} in Section~\ref{sec:def}. If the input contains the same number of points of each color, then~\cite{BohmFLMS21} provides $O(1)$-approximations for exact representation for various $k$-clustering objectives, including $k$-median and $k$-center. For preserving other input ratios $t_1:t_2:\ldots:t_m$, 
$O(1)$-approximations are only known for the $k$-center and $k$-supplier case, see~\cite{bercea2018bi,rosner2018privacy}. The best approximation ratio for fair $k$-center with exact representation is a $5$-approximation for multiple colors~\cite{bercea2018bi}.

\subparagraph*{Approximate representation.} The works of~\cite{bercea2018bi} and~\cite{bera2019fairclustering} independently introduced a model for approximate representation where the ratio of color $i$ in each cluster has to be in an interval $[\rho_i-\ell_i,\rho_i+u_i]$ where $\rho_i$ is the input ratio and $\ell_i$ and $u_i$ can be specified in the input. Both papers provide bi-criteria approximations for this model that incur a small additive fairness violation, and both results are obtained via LP rounding. Harb and Lam~\cite{harb2020kfc} employ a randomized approach that reduces the size of the LP in practice, providing a $3$-approximation where clusters are fair only in expectation.

\subsection{Definitions}\label{sec:def}

Let $P$ be a set of $n$ points, $d\colon P\times P\rightarrow \mathbb R_{\geq 0}$ be a metric on $P$ and $k \in\{1,\ldots,n\}$ be the desired number of clusters. A  $k$-clustering is a sequence of $k$ pairwise disjoint sets $C_1,\ldots, C_k\subset P$ and points $c_i\in C_i$ for $1\leq i\leq k$ such that  $\bigcup_{i=1}^k C_i=P$. 
For a given clustering instance $(P, d, k)$ the $k$-center problem is to find such a $k$-clustering while minimizing the objective $\max_{i=1,\ldots, k}\max_{p\in C_i}d(p,c_i).$
The fair $k$-center problem is a constrained version in which every point in the set $P$ is assigned a color between $1$ and $m$, and the clustering must preserve certain color ratios. 
We first define the input setting for such a problem.

\begin{definition}
\label{def:fair_clustering_instance:copy}
    A fair clustering instance $(P,d,k,m,\gamma)$ consists of a set $P$, a metric $d$ on P, an upper bound on the number of clusters $k\in\mathbb N$, the number of colors $m$, and a color assignment $\gamma\colon P\rightarrow \{1,\ldots, m\}$. We denote by $\Gamma=[m]$ the set of colors. For every $i\in \Gamma$ we denote by $H_i=\gamma^{-1}(i)$ the points in $P$ with color $i$.  
\end{definition}
We assume that the colors are numbered such that $|H_1|\le |H_2|\le \ldots \le |H_m|$. 

\subparagraph*{Fair $k$-clustering with outliers.}
Let $t_1,\ldots,t_m \in \mathbb{N}$ be \emph{coprime} integers that indicate the desired proportion of points from different protected groups within the clusters, i.\,e., a fair cluster should contain points from the $m$ protected groups in the ratio $t_1\colon t_2\colon \ldots \colon t_m$. We define:

\begin{definition}
    A cluster $C_j$ is \emph{$t_1:t_2:\ldots:t_m$-fair} for coprime integers $t_1,\ldots,t_m \in \mathbb{N}$ if it satisfies for all $i\leq m$:
\begin{align}
\label{eq:fairness:new}
    \frac{\left|C_j \cap H_i\right|}{\left|C_j\right|} = \frac{t_i}{\sum_{\ell\leq m}t_\ell}.
\end{align}
\end{definition}
If there are no outliers, then fair clusters under this definition can only be obtained if the $t_i$ match the input ratios. 
In the \emph{exact representation model} one chooses the $t_i$ by setting $t_i = |H_i| / \gcd(|H_1|, \ldots, |H_m|)$  for all $i\leq m$. 
In fair $k$-clustering with outliers, we can also use other values for the $t_i$ by allowing the fitting number of outliers for all but the minority color. 
 
\begin{definition}
 Given an instance $(P,d,k, m,\gamma)$ as defined in \Cref{def:fair_clustering_instance:copy}, a \emph{$t_1:t_2:\ldots:t_m$-fair $k$-clustering with $z$ outliers} is given by a set of $k$ centers $C=\{c_1,\ldots,c_k\}\subseteq P$ and an assignment $\alpha\colon P \to C\cup \{\perp\}$ such that: (a) $\left|Z\right| \leq z$ where $Z = \alpha^{-1}(\perp)$, (b) every cluster $C_i \coloneqq \alpha^{-1}(c_i)$ induced by $c_i\in C$ is $t_1:t_2:\ldots:t_m$-fair.

 Equivalently, such a clustering can also directly be defined as a sequence of pairwise disjoint clusters $\C=\{C_1,\ldots, C_k\}$ from $P$ such that: (a) $Z=P\backslash\bigcup_{i\leq k}C_i$ has cardinality at most $z$, (b)  $C_i$ is $t_1:t_2:\ldots:t_m$-fair for all $1\leq i\leq k$.
\end{definition}
We will use both definitions interchangeably.
Fair $k$-clustering with outliers can be applied to any $k$-clustering problem. In this paper, we initiate the study by investigating the metric $k$-center problem where 
$r(C,\alpha)=\max_{x\in P\setminus Z} d(x,\alpha(x))$
needs to be minimized.

\subsubsection{General fairlets and fairlet decompositions}
We now give a general definition of fairlets.
A fairlet is an atomic fair cluster, i.\,e., a cluster that is fair and as small as possible.

\begin{definition}[$t_1:\ldots:t_m$-fairlet]\label{def:t1:...:tm-fairlet}
    Let $P=H_1\cupdot H_2 \cupdot \ldots \cupdot H_m$ be a set of points, and $t_1,t_2,\ldots,t_m\in \mathbb{N}$ with $\text{gcd}(t_1,\ldots, t_m)=1$.
    A subset $f\subseteq P$ is a $t_1:\ldots:t_m$-fairlet if $|f\cap H_i|=t_i$ for all $i\le m$.
\end{definition}
The concept of fairlets was introduced by~\cite{chierichetti2017fair}, and generalized in subsequent papers~\cite{bercea2018bi,BohmFLMS21,rosner2018privacy}, and our algorithms are adapted from \cite{chierichetti2017fair} and \cite{rosner2018privacy}.
The structural insight that we inherit from the exact representation model is that if every cluster in the optimal solution is $t_1:\ldots:t_m$-fair, then every optimal cluster can be decomposed into fairlets, and thus $P$ can be decomposed into a set of fairlets (see Figure~\ref{fig:fairlet-decomp-outlier} for an optimal $t_1:\ldots:t_m$-fair $k$-center clustering with outliers with $m=3$, $t_1=1$, $t_2=2$, $t_3=3$ and $k=2$ and its decomposition into fairlets), and these fairlets are cheap (their diameter cannot exceed the diameter of an optimal cluster). 
Since this decomposition into cheap fairlets exists, finding it and then computing a clustering from it becomes the core of the problem, and this can be tackled via combinatorial algorithms.

\begin{figure} 
    \centering
    \scalebox{0.5}{
\begin{tikzpicture}[x=1cm,y=1cm]
  \tikzset{
    edge/.style   ={gray!70, line width=0.8pt}
  }

  \coordinate (L1) at (4,7.3);
  \node[point=purple] at (L1) {};
  \node[point=red] (l1a) at (2.6,8.1) {};
  \node[point=purple] (l1b) at (3.5,8.2) {};
  \node[point=blue]  (l1c) at (4.9,7.7) {};
  \node[point=purple] (l1e) at (2.6,6.6) {};
  \node[point=blue] (lm1) at (2.7,7.0) {};

  \fairletarea{l1e,lm1,l1a,l1b,l1c,L1};

  \coordinate (L2) at (3.2,3.6);
  \node[point=red] at (L2) {};
  \node[point=purple] (l2a) at (2.4,3.0) {};
  \node[point=blue]  (l2b) at (4.2,3.0) {};
  \node[point=blue]  (l2c) at (3.2,1.9) {};
  \node[point=purple] (l2d) at (4.0,2.1) {};
  \node[point=purple] (l1f) at (1.9,4.0) {};
  \fairletarea{l2b,l2d,l2c,l2a,l1f,L2};

  \node[point=blue]  (m1) at (2.5,5.2) {};
  \node[point=purple] (m2) at (3.7,5.1) {};
  \node[point=purple] (m3) at (6.0,5.5) {};
  \coordinate (m5) at (6.0,6.0);
  \node[point=purple] (l1d) at (0.6,5.0) {};
  \node[point=blue] (m4) at (6.6,6.6) {};
  \node[point=red] (mred) at (6.6,5.5) {};
  \fairletarea{mred,m3,m2,m1,l1d,m5,m4}

  \coordinate (R1) at (11.9,5.2);
  \node[point=red] at (R1) {};
  \node[point=blue]  (r1a) at (11.9,6.2) {};
  \node[point=purple] (r1b) at (13.5,5.1) {};
  \node[point=purple] (r1c) at (12.85,4.55) {};
  \node[point=blue]  (r1d) at (11.1,5.2) {};
  \node[point=purple] (r1e) at (10.5,5.2) {};
  \fairletarea{r1a,r1b,r1c,R1,r1d,r1e}

  \coordinate (R2) at (11.5,4.0);
  \node[point=red] at (R2) {};
  \node[point=purple] (r2a) at (10.2,2.2) {};
  \node[point=blue]  (r2b) at (12.0,2.1) {};
  \node[point=purple] (r2c) at (12.8,3.4) {};
  \node[point=blue]  (r2d) at (10.0,3.8) {};
  \node[point=purple] (r2e) at (11.1,1.8) {};
  \fairletarea{r2a,r2d,R2,r2c,r2b,r2e}
  \draw (R2) circle (2.7cm);

  \draw (m2) circle (3.75cm);

  \node[point=blue]  at (6.0,9.0) {};
  \node[point=purple]  at (1.7,9) {};
  \node[point=blue]  at (6.5,1.1) {};
  \node[point=purple] at (1.7,0.5) {};
  \node[point=blue]  at (13.0,8.1) {};

\end{tikzpicture}
}
    \caption{An optimal clustering with two pink and three blue outliers and a decomposition of the optimal clusters into fairlets.}
    \label{fig:fairlet-decomp-outlier}
\end{figure}

For every fairlet $f$, we fix an \emph{anchor} $\anc(f)\in P$. This is a point that will represent the fairlet. It can be a point in the fairlet, but also outside of the fairlet. 
We define the \emph{cost} of a fairlet $f$ as $\cost(f) = \max_{p\in f}d(\anc(f),p)$. 
A \emph{fairlet decomposition} of a set $P$ is a maximal set of pairwise disjoint fairlets inside $P$. Let $\F$ be a fairlet decomposition. By $P(\F) = \bigcup_{f\in \F}f$, we denote the set of points covered by the fairlet decomposition.
The \emph{cost of a fairlet decomposition} is given by the maximal cost among any of its fairlets.
We will also need the notion of agreement between a fairlet decomposition and a clustering.
\begin{definition}
\label{def:fairlet-agree-with-clustering}
	Let $\C$ be a fair $k$-center clustering with outliers and $\F$ be a fairlet decomposition of $P$. We say that $\F$ and $\C$ \emph{agree} with each other iff for all $f\in\F$ there exists a cluster $C\in\C$ such that $f\subseteq C$.
\end{definition}
We assume that the desired color ratio $t_1:\ldots:t_m$ fulfills $\gcd(t_1,\ldots,t_m)=1$. Otherwise, we can divide the $t_i$ by their greatest common divisor.
We distinguish between two cases. In the \emph{one-sided outlier case}, we fix one minority color class from which we do not remove any outliers and the numbers of outliers in the other colors are determined by the excess of the number of points in relation to the minority color class. In this case, the outliers are a means to achieve the target ratio. 
In the \emph{two-sided outlier case}, we are given a number \(z\) of outliers we may exclude instead of it being determined by the number of points in the minority class. Thereby, we can choose outliers from all color classes. In this case, the outliers are a means to achieve the target ratio as well as to reduce the clustering cost. This variant is closer to the original idea of clustering with outliers; we give an algorithm for the \(2\)-color case \(m=2\).
In the one-sided outlier case, we further assume that $t_i|H_1|\le t_1|H_i|$ for all $i\leq m$ and that $|H_1|/t_1\in \mathbb{Z}$. Under these assumptions, it is possible to find a fairlet decomposition of size $|H_1|/t_1$ that fully covers $H_1$, and the fairlet decomposition problem is to compute such a decomposition of minimal cost.

\section{The algorithms}

We extend a collection of results for fair $k$-center with exact representation to the case of fair $k$-center with outliers from~\cite{chierichetti2017fair,rosner2018privacy}. The general approach of our adaptation is as follows: Find a set of fairlets and so-called \emph{anchors} and declare the points not contained in any fairlet as outliers. Then, find a set of centers among the non-outlier points (e.g., within the set of all anchors) and assign the non-outlier points to centers such that points inside the same fairlet are assigned to the same center. Outliers are never chosen from the color class that has the smallest cardinality. 
We exclude the minimum number of points from the other classes such that the point set achieves the target ratio.
That is, the number of outliers is determined by the fairness criterion and the actual color distribution inside the given point set.
(Note that in Appendix~\ref{sec:bothsides} we also study a variant with two colors where outliers can be chosen from both color classes and give an $11$-approximation for this scenario, but the main part of our paper does not allow choosing outliers from the minority color class.)

\begin{algorithm}
	\LinesNumbered
	\caption{\textsc{Fair-\texorpdfstring{$k$}{k}-center-with-outliers}}
	\label{alg:general-case-outlier-algorithm}
	\SetKwInOut{Input}{Input}
	\SetKwInOut{Output}{Output}
	\BlankLine
	\Input{Point set $P$, distances $d\colon P\times P \to \mathbb{R}_{\ge 0}$, number of clusters $k$}
	\Output{Set of centers $C$, assignment $\alpha\colon P(\F)\to C$}
	\BlankLine
        $\mathcal{F}, \anc \gets \textsc{compute-fairlets}(P,d,k) $ \label{alg-line:call-compute-fairlets}\\
        $\Representatives \gets \{\anc(f) \mid f\in \F\}$\\
	$C \gets \textsc{compute-centers}(\Representatives,d,k)$ \label{alg-line:call-compute-centers}\\
	\For{$f\in \F$}{
		$c_f \gets \arg\min_{c\in C}d(c,\anc(f))$\\
        $\forall p\in f\colon\ \alpha(p) \gets c_f$
	}
    \Return{$C,\alpha$}
\end{algorithm}

\begin{figure}
    \centering
    \scalebox{0.5}{
\begin{tikzpicture}[x=1cm,y=1cm]
  \tikzset{
    edge/.style   ={gray!70, line width=0.8pt}
  }

  \coordinate (L1) at (4,7.3);
  \node[point=red, anchormark=black] (l1r1) at (2.6,6.6) {};
  \node[point=blue]  (l1b1) at (4.3,8) {};
  \node[point=blue] (l1b2) at (2.7,10) {};
  \draw (l1r1) -- (l1b2) node[draw=none,fill=none,font=\huge,midway,left] {$\varphi$};

  \fairletarea{l1b1,l1r1,l1b2};

  \node[point=blue]  (l2b1) at (3.5,5.2) {};
  \coordinate (L2) at (6.0,6.0);
  \node[point=blue] (l2b2) at (6.6,6.6) {};
  \node[point=red, anchormark=black] (l2r1) at (6.1,5.5) {};
  \fairletarea{l2b1,l2b2,l2r1};
 
  \coordinate (L3) at (3.2,3.6);
  \node[point=red, anchormark=black] (l3r1) at (L3) {};
  \node[point=blue]  (l3b1) at (4.2,3.0) {};
  \node[point=blue]  (l3b2) at (3.2,1.9) {};
  
  \fairletarea{l3b1,l3b2,L3};

  

  \coordinate (R1) at (11.9,5.2);
  \node[point=red, anchormark=black] (r1r1) at (R1) {};
  \node[point=blue]  (r1b1) at (11.9,6.8) {};
  \node[point=blue]  (r1b2) at (11.1,5.2) {};
  \fairletarea{r1b1,R1,r1b2};

  \coordinate (R2) at (11.5,4.0);
  \node[point=red, anchormark=black] (r2r1) at (R2) {};
  \node[point=blue]  (r2b1) at (12.0,2.1) {};
  \node[point=blue]  (r2b2) at (9.0,3.5) {};
  \fairletarea{r2b2,R2,r2b1};

  \draw[red] (R1) circle (1.3cm);

  \draw[red] (L3) circle (3.5cm);
  \draw (l3r1) -- (l2r1) node[draw=none,fill=none,font=\huge,midway,below] {$\rho$};

  \node[outlier=blue]  at (6.0,9.0) {};
  \node[outlier=blue]  at (6.5,1.1) {};
  \node[outlier=blue]  at (13.0,3.4) {};

\end{tikzpicture}
}
    \caption{Illustration of the clustering framework. Anchors are depicted by diamond frames, the fairlet decomposition is indicated by gray areas, and the outliers are depicted by unfilled circles. 
    The red circles show an unconstrained clustering of the anchors.
    The fair clustering is obtained by assigning all points of a fairlet to the center that is closest to their anchor. The resulting fair clustering costs \(\rho+\varphi\), where \(\rho\) is the cost of the unconstrained clustering of the red anchors and \(\varphi\) is the cost of the fairlet decomposition.}
    \label{fig:cluster-framework}
\end{figure}

\Cref{alg:general-case-outlier-algorithm} describes the framework in detail. It uses the subroutines \textsc{compute-fairlets} and \textsc{compute-centers}. The latter is given in \Cref{alg-line:end-of-gonzalez-1:1} and consists of running the well-known farthest-first traversal on $\Representatives$. Gonzalez showed that farthest-first traversal yields a $2$-approximation for $k$-center that can be implemented to run in time $O(|\Representatives|\cdot k)$ \cite{Gonz85}. In \Cref{lem:compute-centers-2OPT-Gonzalez} (Appendix~\ref{sec:framework-lemmata}), we generalize the approximation result and show that the cost of the solution computed by farthest-first traversal on $A$ is bounded by twice the cost of any valid $k$-center solution $S$ on $P$. The difference here is that $S$ might use points as centers that are not present in $\Representatives$.
It is important that we run \textsc{compute-centers} on the set of anchors, not on $P$ (this might result in outliers being chosen as centers), and also not on the inlier points. The latter idea results in a worse approximation guarantee as shown in \Cref{sec:considerations:choosingcenters}. 
\begin{algorithm}
    \LinesNumbered
    \caption{\textsc{compute-centers} \cite{Gonz85}}
    \label{alg:compute-centers-Gonzalez}
    \SetKwInOut{Input}{Input}
    \SetKwInOut{Output}{Output}
    \BlankLine
    \Input{Set of points $\Representatives$, distances $d\colon A\times A\to \mathbb{R}_{\ge 0}$, number of clusters $k$}
    \Output{Set of centers $C$}
    \BlankLine
    pick $c_1$ arbitrarily from $\Representatives$\label{alg-line:gonzalez-pick-c1-1:1}\\
    \For{$i=2,\ldots,k$}{
        $c_i \gets \arg\max_{c\in \Representatives}d(c,\{c_1,\ldots,c_{i-1}\})$
    } \label{alg-line:end-of-gonzalez-1:1}
\end{algorithm}

The subroutine \textsc{compute-fairlets} differs depending on the case we are studying.
\Cref{fig:cluster-framework} illustrates the framework for $1:2$-fairness. We want to achieve $1:2$-fairness between the red and blue points. We use the red points as anchor points, which is convenient because every fairlet contains exactly one red point. As we discuss later in \Cref{sec:considerations:choosingcenters}, this choice is crucial for the approximation guarantee. We will give an example on which choosing the centers from the set of all inliers merely yields an \(8\)-approximation. The gray areas indicate the fairlet decomposition of cost $\varphi$, computed by \textsc{compute-fairlets}.
The circles indicate a vanilla $k$-center clustering of the red points with radius $\rho$, computed by \textsc{compute-centers}. 
The fairlet decomposition excludes three blue points as outliers.
We build a fair clustering by assigning each point in a fairlet to the center closest to its anchor. The fair clustering has radius $\varphi + \rho$.
The analysis of the framework is done in Appendix~\ref{sec:framework-lemmata}. We restate the cost result here.

\begin{restatable}{lemma}{CombinedCost}
    \label{lem:4-approximation}
    Assume that \textsc{compute-fairlets} in Line~\ref{alg-line:call-compute-fairlets} of \Cref{alg:general-case-outlier-algorithm} returns a set of fairlets $\F$ such that $\varphi \coloneqq \max_{f\in \F}\max_{p\in f}d(p,\anc(f))$. Let $A$ denote the set of anchors of the fairlets in $\F$. 
    Further, assume that $\textsc{compute-centers}$ in Line~\ref{alg-line:call-compute-centers} gives a set of centers $C\subseteq \Representatives$ such that $\rho \coloneqq \max_{a\in A}\min_{c\in C}d(a,c)$. 
    Then, \Cref{alg:general-case-outlier-algorithm} computes a fair $k$-center solution on $P$ with outliers $P\setminus P(\F)$ and cost $\varphi + \rho$ (also see \Cref{fig:cluster-framework}).
\end{restatable}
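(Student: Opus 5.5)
The plan is to check directly that the output $(C,\alpha)$ of \Cref{alg:general-case-outlier-algorithm} is a valid $t_1:\ldots:t_m$-fair $k$-clustering with outlier set $P\setminus P(\F)$, and then to bound each assignment distance by the triangle inequality through the anchor.

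\textbf{Validity.} We first note that $C\subseteq \Representatives\subseteq P$. Since \textsc{compute-centers} returns at most $k$ centers, we have $|C|\le k$. If fewer than $k$ distinct anchors exist, we pad $C$ with arbitrary points of $P$; these extra centers receive no points and therefore form empty clusters, which are trivially fair (or, equivalently, we read the definition with ``at most $k$'' clusters).

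The assignment $\alpha$ is defined exactly on $P(\F)$, and every point outside $P(\F)$ is mapped to $\perp$. Hence $Z=P\setminus P(\F)$, as claimed. Because the fairlets in $\F$ are pairwise disjoint, $\alpha$ is well defined. By construction each fairlet is sent entirely to the single center $c_f$, so every cluster $C_i=\alpha^{-1}(c_i)$ is a disjoint union of fairlets. Each fairlet $f$ satisfies $|f\cap H_j|=t_j$ for all $j$. Therefore, if $C_i$ contains $q$ fairlets, then $|C_i\cap H_j|=q t_j$ and $|C_i|=q\sum_\ell t_\ell$. This gives exactly the ratio required in \eqref{eq:fairness:new} for $q\ge 1$; for $q=0$ the cluster is empty.

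\textbf{Cost.} Take any $p\in P(\F)$ and let $f$ be its fairlet. Then
\[
d(p,\alpha(p)) = d(p,c_f) \le d(p,\anc(f)) + d(\anc(f),c_f) \le \varphi + \min_{c\in C} d(\anc(f),c) \le \varphi+\rho .
\]
The middle inequality uses the definition of $\varphi$ together with the choice $c_f=\arg\min_{c\in C}d(c,\anc(f))$. The last inequality holds because $\anc(f)\in A$. Taking the maximum over all inliers $p$ gives $r(C,\alpha)\le\varphi+\rho$.

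The only mildly delicate point is the bookkeeping in the validity step: empty or duplicate centers, and the fact that the fairness condition \eqref{eq:fairness:new} is only meaningful for nonempty clusters. The cost bound itself is a one-line triangle inequality.
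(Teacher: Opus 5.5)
Your proposal is correct and matches the paper's proof: each cluster is a union of fairlets and hence fair, the outliers are exactly $P\setminus P(\F)$, and the cost bound is the triangle inequality through $\anc(f)$, using $d(\anc(f),c_f)\le\rho$ because $\anc(f)\in A$. Your extra bookkeeping on empty or padded clusters is a harmless refinement that the paper leaves implicit.
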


\subsection{\texorpdfstring{$1:t$}{1:t}-fairness}

The algorithm for this case follows~\cite{chierichetti2017fair,rosner2018privacy}. 
We have two colors, and the desired color ratio is $1:t$ for some integer $t\geq 1$, i.\,e., we want to achieve $|B\cap C| = t\cdot |R\cap C|$ for all clusters $C$.
Let us assume that $|B| \geq t|R|$. To guarantee a $1:t$-fair clustering, we exclude $|B|-t\cdot|R|$ blue outliers.
Fairlets consist of one red and $t$ blue points.
For the special case of $t=1$, fairlets consist of one blue and one red point each, and it suffices to compute a matching. If there are no outliers, i.\,e., if \(|B|=|R|\), choosing the anchors carefully from the set of all points yields an approximation guarantee of \(3\), which we explain in detail in \Cref{sec:appendix-1:1-fairness}.
However, for fair $k$-center with outliers, the same approach does not yield the same guarantee, but only a \(6\)-approximation. Instead, if we limit the set of potential anchors to the red points, we are able to achieve a \(4\)-approximation. As the approach for the \(1:1\)-case is a special case of the \(1:t\)-case, we directly proceed with the general $1:t$-case.
There, we have to match a red point to $t$ blue points, and the red point is the fairlet's anchor. We identify $R=A$ for the remainder of this section, and justify this decision further in \Cref{sec:considerations:choosingcenters}. The algorithm is described in \Cref{alg:compute-1-t-fairlets}.
\begin{algorithm}
    \LinesNumbered
    \caption{\textsc{compute-$1:t$-fairlets}}
    \label{alg:compute-1-t-fairlets}
    \SetKwInOut{Input}{Input}
    \SetKwInOut{Output}{Output}
    \BlankLine
    \Input{Point set $P$, $R\subseteq P$, $B\subseteq P$, distance metric $d$, integer $k$, threshold factor $\tau$}
    \Output{Set of fairlets $\F$, anchor assignment $\anc\colon \F\to R$}
    \BlankLine
    $\F \gets \emptyset$\\
    $D\gets \{d(x,y)\mid x,y\in P\}$\\
    sort $D$ increasingly\\
    \For{all $\delta \in D$}{
        construct flow network $N_{\delta}=(V, E(\tau\delta),v,w,c)$\\
        $\mathbf{f}\gets$ maximum integral flow in $N_{\delta}$ \label{alg-line:compute-max-flow-1:t}\\
        \If{$|\mathbf{f}|=t|R|$}{
            \For{all $r\in R$ \label{alg-line:start-constructing-fairlets-1:t}}{
            	$f_r \gets \{r\}$\label{alg-line:add-r-to-fairlet-1:t}\\
                $\anc(f_r) \gets r$\\
                \For{all $b$ such that $(b,r) \in E_B^R(\tau\delta)$}{
                    \If{$\mathbf{f}(b,r) = 1$}{
                        $f_r \gets f_r \cup \{b\}$ \label{alg-line:add-b-to-fairlet-1:t}
                    }
                }
                $\F \gets \F \cup \{f_r\}$ \label{alg-line:end-constructing-fairlets-1:t}
            }
            \Return{$\F,\anc$}
        }
    }
\end{algorithm}
To find a fairlet decomposition, it sets up a flow network $N=(V,E(T),v,w,c)$ as follows: 
We set $V = B \cup R \cup \{v,w\}$ and $E(T) = E_v^B \cup E_B^R(T) \cup E_R^w$, where $E_v^B = \{(v,b) \mid b \in B\}$, $E_B^R(T) = \{(b,r) \mid b\in B, r\in R, d(b,r) \leq T\}$, for some threshold $T\geq 0$, and $E_R^w = \{(r,w) \mid r\in R\}$. 
In general, we want $T=\tau\delta$ for some guess $\delta$ of the radius of an optimal solution, and some threshold factor $\tau$.
Let \(\OPT\) denote the cost of an optimal solution. As \(\OPT\) equals the pairwise distance between two points, it is \(\OPT\in D\) and the for-loop will eventually consider \(\delta = \OPT\). In this iteration the threshold becomes \(T=\tau\OPT\).
We choose \(\tau\) such that we can prove that we can construct a fairlet decomposition of cost \(\tau\OPT\), which depends on the fairness case. For the $1:t$ and $1:t_2:\ldots:t_m$ cases, we set $\tau=2$ because we can show that the cost of a fairlet decomposition that agrees with an optimal solution can be bounded by \(2\OPT\) if we use the anchor as a center, and we can find such a decomposition easily by a max flow computation. In the \(t_1:t_2:\ldots:t_m\) case, we need to set $\tau=12$ because of a more intricate procedure to compute the fairlets.
The capacities are given by 
\[ c(e) = \begin{cases}
    1 & \text{if } e\in E_v^B \cup E_B^R(T)\\
    t & \text{if } e\in E_R^w
\end{cases} \]
If an edge $(b,r)\in E_B^R(T)$ carries flow, this means that $b$ and $r$ belong to the same fairlet.

The overview of the analysis of this method is presented in Appendix~\ref{appendix:analysis-1-t}. We state the main cost result here; note that a more general result, including the running time, is given in the next section for the related multicolor case.
\begin{restatable}{theorem}{thmonetunbalanced}
\label{thm:1-t-unbalanced} 
    let \(t\in \mathbb{N}\) be an integer and let \(P=B\cup R\) be set of points consisting of blue and red points with \(|B|\ge t|R|\).
    \Cref{alg:general-case-outlier-algorithm} with \textsc{compute-centers} = \Cref{alg:compute-centers-Gonzalez} and \textsc{compute-fairlets} = \Cref{alg:compute-1-t-fairlets} computes a center-aware 4-approximation for $1:t$-fair $k$-center with $|B|-t|R|$ blue outliers.
\end{restatable}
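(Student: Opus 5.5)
The plan is to combine \Cref{lem:4-approximation} with two bounds: a fairlet cost $\varphi\le 2\OPT$ coming from \Cref{alg:compute-1-t-fairlets} with $\tau=2$, and a center cost $\rho\le 2\OPT$ coming from \Cref{alg:compute-centers-Gonzalez} run on $A=R$. Throughout, fix an optimal $1:t$-fair $k$-center solution with centers $c_1^*,\ldots,c_k^*$, clusters $C_1^*,\ldots,C_k^*$, cost $\OPT$ and $|B|-t|R|$ blue outliers. Since no red point may be an outlier and every cluster satisfies $|C_j^*\cap B|=t|C_j^*\cap R|$, the clusters cover all of $R$ and exactly $t|R|$ blue points.

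\textbf{Step 1 (a feasible flow exists at $\delta=\OPT$).} Inside each optimal cluster $C_j^*$, pair its red points arbitrarily with disjoint sets of $t$ of its blue points. Each red $r$ and each of its blue partners $b$ lie within $\OPT$ of $c_j^*$, so $d(b,r)\le 2\OPT$ by the triangle inequality. Hence every edge $(b,r)$ used lies in $E_B^R(2\OPT)$. Sending one unit along $v\to b\to r\to w$ for each pair gives an integral flow of value $t|R|$, which saturates $E_R^w$ and is therefore maximum.

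\textbf{Step 2 (the algorithm's output is a valid decomposition).} The sorted loop stops at the first $\delta\in D$ whose network admits a flow of value $t|R|$. Since $\OPT\in D$, Step 1 shows this $\delta$ satisfies $\delta\le\OPT$. I then verify that the returned sets are fairlets. The capacity $t$ on $(r,w)$ together with full value $t|R|$ forces every red $r$ to receive exactly $t$ units. Unit capacities on $E_v^B$ mean each blue point carries at most one unit, so it enters at most one fairlet. Integrality ensures each unit corresponds to a single edge $(b,r)$. So each $f_r$ consists of $r$ and exactly $t$ blue points, all within $2\delta\le 2\OPT$ of its anchor $r$. The fairlets are pairwise disjoint and cover $R$. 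The uncovered points are exactly $|B|-t|R|$ blue points, which become the outliers. This gives $\varphi\le 2\OPT$.

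\textbf{Step 3 (center cost).} I apply \Cref{lem:compute-centers-2OPT-Gonzalez}: farthest-first traversal on $A=R$ has cost at most twice that of any $k$-center solution on a set containing $R$. The optimal fair solution covers $R$ with radius $\OPT$, so $\rho\le 2\OPT$. Plugging both bounds into \Cref{lem:4-approximation} gives a $1:t$-fair clustering with the prescribed outliers and cost at most $4\OPT$. The centers are red anchors and hence inliers, so the solution is center-aware.

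The main subtlety is Step 3. The optimal centers $c_j^*$ need not lie in $R$, and they could even be blue outliers. The cited lemma is stated exactly to handle centers outside $A$. If it is unavailable, I would redo Gonzalez's argument directly. Suppose some anchor is at distance greater than $2\OPT$ from the chosen centers. Then the $k$ chosen centers together with that anchor form $k+1$ points of $R$ that are pairwise more than $2\OPT$ apart. Each of them lies within $\OPT$ of some optimal center, so by pigeonhole two of them share an optimal center and are at most $2\OPT$ apart, a contradiction.
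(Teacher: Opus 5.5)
Your proposal is correct and follows the paper's route. The paper combines \Cref{lem:4-approximation} with $\varphi\le 2\OPTfair$ from \Cref{cor:1-t-fairlet-decomposition-2OPT} (an agreeing fairlet decomposition plus the flow argument of \Cref{lem:properties-computed-fairlets-1:t} at $\delta=\OPTfair$, $\tau=2$) and $\rho\le 2\OPTfair$ from \Cref{lem:compute-centers-2OPT-Gonzalez} on $A=R$; your Steps 1--3 just unfold these lemmas, and you rightly note the Gonzalez argument only needs $R$ to be covered by the optimal solution. One small tightening: center-awareness does not follow from the centers being inliers, but from each center $r\in C$ being the anchor of its own fairlet $f_r$, so the nearest center to $\anc(f_r)=r$ is $r$ itself and $\alpha(r)=r$.
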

The term \emph{center-aware} in this theorem refers to the property that a center point is assigned to itself. This natural property is sometimes violated by algorithms for fair clustering, i.\,e., it can happen that a cluster can have a center which is itself not part of the cluster. We say that the approximation algorithm is center-aware if it is ensured that this cannot happen.

\Cref{fig:comparison-4-approx} illustrates a worst-case scenario that matches the 4-approximation guarantee. It demonstrates that, even in the $1:1$ case, no better approximation can be guaranteed with this approach.
\Cref{fig:comparison-4-approx-opt} shows an optimal solution for $k=1$. \Cref{fig:comparison-4-approx-algo} illustrates a solution computed by \Cref{alg:general-case-outlier-algorithm} for the same set of points. 
In this case, each fairlet has a diameter of $2\OPT$.
The radius of the resulting clustering is four times larger than the optimal one. This example illustrates how the choice of fairlets and outliers can affect the quality of the resulting clustering solution.

\begin{figure}[htbp]
    \captionsetup[subfigure]{justification=centering}
    \centering
    \begin{subfigure}[t]{0.49\linewidth}
        \scalebox{0.5}{
\begin{tikzpicture}[x=1cm,y=1cm]
  \tikzset{
    transp/.style={circle, opacity=0, minimum size=0pt, inner sep=0pt},
    edge/.style   ={gray!70, line width=0.8pt}
  }

    \coordinate (L) at (0,0) {};
    \node[point=blue] (lb1) at (L) {};
    \node[point=blue] (lb2) at (1.2,0) {};
    \node[point=red] (lr1) at (0,1.2) {};
    \node[point=red] (lr2) at (0,-1.2) {};
    \node[outlier=blue] (lb3) at (0,2.4) {};
    \node[outlier=blue] (lb4) at (0,3.6) {};
    \node[outlier=blue] (lb5) at (0,-2.4) {};
    \node[outlier=blue] (lb6) at (0,-3.6) {};

    \draw (L) -- (lb2) node[draw=none,fill=none,font=\normalsize,midway,above] {$\OPT$};

    \draw (L) circle (1.2cm);
  


    \node[transp] at (0,6) {};
    \node[transp] at (-4.8,1.2) {};
    \node[transp] at (4.8,1.2) {};
    \node[transp] at (0,-4.2) {};

\end{tikzpicture}
}
        \subcaption{Optimal solution.}
        \label{fig:comparison-4-approx-opt}
    \end{subfigure}
    \begin{subfigure}[t]{0.49\linewidth}
        \scalebox{0.5}{
\begin{tikzpicture}[x=1cm,y=1cm]
  \tikzset{
    edge/.style   ={gray!70, line width=0.8pt}
  }

    \coordinate (L) at (0,0);
    \node[outlier=blue] (lb1) at (L) {};
    \node[outlier=blue] (lb2) at (1.2,0) {};
    \node[point=red, anchormark=black] (lr1) at (0,1.2) {};
    \node[point=red, anchormark=black] (lr2) at (0,-1.2) {};
    \node[outlier=blue] (lb3) at (0,2.4) {};
    \node[point=blue] (lb4) at (0,3.6) {};
    \node[outlier=blue] (lb5) at (0,-2.4) {};
    \node[point=blue] (lb6) at (0,-3.6) {};

    \draw (lr1) -- (4.8,1.2) node[draw=none,fill=none,font=\normalsize,midway,above] {$4\OPT$};

    \draw (lr1) circle (4.8cm);

    \fairletarea{lr1,lb4,lb4,lr1}
    \fairletarea{lr2,lb6}

\end{tikzpicture}
}
        \subcaption{A possible solution of \Cref{alg:general-case-outlier-algorithm}.}
        \label{fig:comparison-4-approx-algo}
    \end{subfigure}     
    \caption{Comparison of optimal solution and solution computed by \Cref{alg:general-case-outlier-algorithm} for $1:1$-fair $k$-center with $k=1$. The gray areas indicate fairlets, and the unfilled circles are outliers. The fairlets consist only of the two filled points contained within the gray areas.}
    \label{fig:comparison-4-approx}
\end{figure}

\subsection{\texorpdfstring{$1:t_2:\ldots:t_m$}{1:t2:...:tm}-fairness}

Now consider the scenario that $P$ can be partitioned into $m \geq 2$ colors: $P=H_1\cupdot \ldots\cupdot H_m$, and we aim for a color ratio of $1:t_2:\ldots:t_m$ for given values $t_2,\ldots,t_m$. We assume that $|H_i|\geq t_i|H_1|$ so that it suffices to exclude outliers from the sets $H_2,\ldots,H_m$. To achieve a fair clustering, we exclude $|H_i|-t_i|H_1|$ many points from $H_i$.  
Set $\Representatives = H_1$.
Fairlets consist of one point from $H_1$ and $t_i$ points from $H_i$ for all $2\leq i \leq m$. Similar to before, we will choose a fairlet's anchor from $H_1$. 
We can find a fairlet decomposition by finding a $1:t_i$-fairlet decomposition on $H_1\cup H_i$ for every $2\leq i \leq m$ and gluing together fairlets that share a point in $H_1$. This approach is visualized in \Cref{fig:multi-color-fairlets-gluing}, formalized in \Cref{alg:compute-1:t2:...:tm-fairlets} and its analysis can be found in Appendix~\ref{appendix:analysis-1-many-t}.

\begin{algorithm}
    \LinesNumbered
    \caption{\textsc{\texorpdfstring{compute-1:$t_2$:$\ldots$:$t_m$-fairlets}{compute-multiple-color-fairlets}}}
    \label{alg:compute-1:t2:...:tm-fairlets}
    \SetKwInOut{Input}{Input}
    \SetKwInOut{Output}{Output}
    \BlankLine
    \Input{Point set $P$, subsets $H_1,\ldots,H_m\subseteq P$, distance metric $d$, number of clusters $k$, threshold factor $\tau$}
    \Output{Set of fairlets $\F$, anchor assignment $\anc\colon \F\to H_1$}
    \BlankLine
    \For{$i=2,\ldots,m$}{
    	$\F_i \gets \textsc{compute-1:$t_i$-fairlets}(P,H_1,H_i,d,k,\tau)$ \label{alg-line:compute1:t1:...:tm-fairlets-call-compute-1:t-fairlets}
    }
    $\F=\emptyset$\\
    \For{$h_1\in H_1$}{
    	$f \gets \bigcup_{2\leq i\leq m}\{f'\mid f'\in F_i, h_1\in f'\}$ \label{alg-line:glue-small-fairlet-1:t1:...:tm}\\
        $\anc(f)\gets h_1$\\
    	$\F \gets \F \cup \{f\}$
    }
    \Return{$\F,\anc$}
\end{algorithm}

\begin{figure}
    \centering
    \scalebox{0.5}{
    \begin{tikzpicture}
        \node[point=green] (g1) at (-0.55, 2.15) {};
        \node[point=green] (g2) at (-1.50, 1.45) {};
        \node[point=green] (g3) at (-0.50, 1.55) {};
        \node[point=green] (g4) at (-0.35, 0.65) {};
        \node[point=blue] (b1) at (3.30, 1.75) {};
        \node[point=purple] (p1) at (0.95, 0.20) {};
        \node[point=purple] (p2) at (-1.0,-1.75) {};
        \node[point=red, anchormark=black] (r1) at (0,0) {};

        \fairletarea[gray, fill opacity=0.25][14pt]{r1,g2,g1}  
        \fairletarea[gray, fill opacity=0.25][14pt]{b1,r1}        
        \fairletarea[gray, fill opacity=0.25][14pt]{p1,p2,r1}        
    \end{tikzpicture}
}
    \caption{The gluing-process as described in \Cref{alg:compute-1:t2:...:tm-fairlets} for \(1:1:2:4\)-fairness. We identify the anchor color (red) and start by individually computing \(1:1\)-fairlets for the red and blue points, \(1:2\)-red-pink fairlets, and \(1:4\)-red-green-fairlets. The individual fairlets are indicated by gray areas, which overlap at one red point. The union of these fairlets forms one \(1:1:2:4\)-fairlet.}
    \label{fig:multi-color-fairlets-gluing}
\end{figure}

\begin{theorem}\label{thm:one-to-many}
	For $t_2,\ldots, t_m\in \mathbb N$ and a fair clustering instance satisfying $t_i|H_1|\leq |H_i|$ for all $2\leq i\leq m$, there exists a center-aware $4$-approximation for $1:t_2:\ldots: t_m$-fair $k$-center with $|H_i| - t_i|H_1|$ outliers in $H_i$ that runs in time $O(n^{2+o(1)}\log n)$.
\end{theorem}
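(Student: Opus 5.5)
We plan to combine \Cref{lem:4-approximation} with two bounds: the fairlet decomposition returned by \Cref{alg:compute-1:t2:...:tm-fairlets} (with $\tau=2$) has cost $\varphi\le 2\OPT$, and farthest-first traversal on $\Representatives=H_1$ gives $\rho\le 2\OPT$. This yields cost $\varphi+\rho\le 4\OPT$.

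\textbf{Fairlet cost.} Fix an optimal $1:t_2:\ldots:t_m$-fair solution with clusters $C_1,\ldots,C_k$, centers $c_1,\ldots,c_k$ and radius $\OPT$. Since every optimal cluster is fair, each $C_j$ contains exactly $t_i|C_j\cap H_1|$ points of $H_i$. Moreover, the total number of inliers of color $i$ is exactly $t_i|H_1|$, because no point of $H_1$ is an outlier and the clusters are fair. For a fixed $i$, inside each $C_j$ we match every $h\in C_j\cap H_1$ to $t_i$ distinct points of $C_j\cap H_i$. Any two points of $C_j$ are at distance at most $d(h,c_j)+d(c_j,b)\le 2\OPT$. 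So in the iteration $\delta=\OPT$ the network $N_\delta$ restricted to $H_1\cup H_i$ with threshold $2\OPT$ admits a flow of value $t_i|H_1|$. By integrality of max flow, \textsc{compute-1:$t_i$-fairlets} stops at some $\delta\le\OPT$ and returns fairlets in which every point of $H_i$ is within $2\delta\le 2\OPT$ of its anchor in $H_1$. (This is the argument for \Cref{thm:1-t-unbalanced}, applied per color.) Each $h_1\in H_1$ receives exactly $t_i$ points of each color $i$, since the flow saturates all edges $(h_1,w)$ of capacity $t_i$. The point sets for different $i$ are disjoint because they have different colors. Hence gluing produces valid $1:t_2:\ldots:t_m$-fairlets anchored at $h_1$, and every point is within $2\OPT$ of its anchor, so $\varphi\le 2\OPT$. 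The calls for different $i$ may stop at different $\delta$, but each bound holds separately, so this causes no problem.

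\textbf{Center cost.} Since $\Representatives=H_1$ consists of inliers of the optimal solution, each $a\in H_1$ lies within $\OPT$ of some optimal center. By the generalization of Gonzalez in \Cref{lem:compute-centers-2OPT-Gonzalez}, we get $\rho\le 2\OPT$. Combining this with \Cref{lem:4-approximation} gives a fair solution of cost at most $4\OPT$ whose outliers are $P\setminus P(\F)$. Exactly $|H_i|-t_i|H_1|$ of these lie in $H_i$, and none lie in $H_1$. The solution is center-aware because each center $c\in C\subseteq H_1$ is the anchor of its own fairlet, and $d(c,c)=0$ forces that fairlet to be assigned to $c$. Ties should be broken in favor of $c$ itself, which we note explicitly.

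\textbf{Running time.} This is the step we expect to be the main obstacle. A naive loop over all $O(n^2)$ values of $\delta$, with one max-flow call each, is too slow. We instead binary search over the sorted $D$, which is valid because feasibility is monotone in $\delta$. This needs $O(\log n)$ max-flow computations per color $i$. Each network has $O(n)$ vertices and $O(n^2)$ edges, so the almost-linear-time max-flow algorithm runs in $O(n^{2+o(1)})$ per call. Summing over the colors, and absorbing $m$ as the paper does (or noting $\sum_i |H_i|\le n$ so the networks together are no larger than $O(n^2)$ edges per search step), gives $O(n^{2+o(1)}\log n)$. Sorting $D$ costs $O(n^2\log n)$ and Gonzalez costs $O(nk)$, both within this bound. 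The delicate point is making the color-dependence disappear; if that fails, we would state the bound with the number of colors $m$ treated as a constant.
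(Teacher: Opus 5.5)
Your proposal is correct and follows essentially the paper's own proof. Both combine \Cref{lem:4-approximation} with the per-color max-flow bound $\varphi\le 2\OPT$, the farthest-first bound $\rho\le 2\OPT$ on the anchor set $H_1$, and binary search with almost-linear max flow for the running time. Your hedge about the number of colors is unnecessary: since $\sum_{i\ge 2}(|H_1|+|H_i|)\le 2n$, the networks for all colors together have $O(n^2)$ edges, which is exactly how the paper removes the dependence on $m$. Your explicit remarks that $H_1$ contains no optimal outliers (so the Gonzalez bound applies) and about tie-breaking for center-awareness fill in details the paper leaves implicit.
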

The importance of \Cref{thm:one-to-many} lies in its practicability and broad applicability. Computing maximum flows in practice is a well-studied problem; the approximation ratio is decent, and we can apply the result to any input point set. To do so, set $t_1=1$ and $t_i = \lfloor \frac{|H_i|}{|H_1|}\rfloor$, which satisfies the precondition of the theorem. 
This can be applied to any input point set without further parameterization.

\section{General case, outliers in two colors and further considerations}\label{sec:considerations}

The general case is of lesser practical interest since it requires the computation of capacitated $k$-center solutions which is demanding in itself. We defer the algorithm and its analysis to Appendix~\ref{appendix:generalcase}. 
\begin{restatable}{theorem}{outliermulticolorgeneralcase} \label{thm:t1:...:tm-beginning}
    For $t_1, t_2,\ldots, t_m\in \mathbb N$ and a fair clustering instance satisfying $t_i|H_1|\leq t_1 |H_i|$ for all $2\leq i\leq m$ and $ \frac{|H_1|}{t_1}\in \mathbb{Z}$, 
    there exists a center-aware $14$-approximation for $t_1:t_2:\ldots:t_m$-fair $k$-center with $|H_i|-\frac{t_i}{t_1}|H_1|$ outliers from $H_i$ for all $2\le i \le m$ that runs in time $O(n^{2+o(1)}\log n)$.
\end{restatable}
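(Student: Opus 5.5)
The plan is to instantiate the framework of \Cref{alg:general-case-outlier-algorithm} and bound the two cost terms of \Cref{lem:4-approximation} separately. The target split is $\varphi \le 12\OPT$ for the fairlet decomposition and $\rho \le 2\OPT$ for the centers, giving $14\OPT$.

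\textbf{The center term.} This part is routine. \textsc{compute-centers} is farthest-first traversal on the anchor set $A$. By \Cref{lem:compute-centers-2OPT-Gonzalez}, its radius on $A$ is at most twice the cost of any valid $k$-center solution covering $A$. The anchors will be chosen among inlier points of an optimal solution, or at least points within $\OPT$ of optimal centers, so $\rho\le 2\OPT$.

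Center-awareness follows because every center $c\in C\subseteq A$ is the anchor of some fairlet $f$. That fairlet is assigned to the center closest to $\anc(f)=c$, namely $c$ itself. I therefore need anchors to lie in their own fairlet, and I will choose the anchor of each fairlet from its $H_1$-part.

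\textbf{The fairlet term, existence.} Here the threshold factor is $\tau=12$, and I follow the approach of \cite{rosner2018privacy}. Fix an optimal solution with clusters $O_1,\dots,O_k$. Each $O_j$ contains $t_i\cdot s_j$ points of $H_i$ for some integer $s_j$, so it splits into $s_j$ fairlets of diameter at most $2\OPT$. Hence a decomposition agreeing with $\OPT$ exists in which all of $H_1$ is covered and exactly $|H_i|-\frac{t_i}{t_1}|H_1|$ points of each $H_i$ are left out. This uses the assumption $\frac{|H_1|}{t_1}\in\mathbb{Z}$.

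\textbf{The fairlet term, construction.} We cannot find such a decomposition directly, since it requires grouping $t_1>1$ points of $H_1$. So the algorithm will proceed in three steps.
\begin{enumerate}
\item Cluster $H_1$ into groups of size exactly $t_1$ (with representatives) by a capacitated/partition-type routine on $H_1$ alone. Its radius must be bounded by $O(\OPT)$, which the existence of the optimal decomposition certifies: grouping the $H_1$-points of each optimal fairlet gives a feasible solution of radius $2\OPT$.
\item For each other color $i$, run a max-flow as in \Cref{alg:compute-1-t-fairlets}. Each representative demands $t_i$ points of $H_i$ within distance $\tau'\OPT$, with supply capacities $1$. Feasibility is witnessed by the optimal decomposition rerouted through the representatives.
\item Glue the results per representative, as in \Cref{alg:compute-1:t2:...:tm-fairlets}.
\end{enumerate}
Bounding the chained distances gives the constant. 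Going from a point to its representative costs the grouping radius; going from the representative to an optimal-fairlet partner costs the triangle-inequality detours. Tracking these should give $12\OPT$ when the grouping step itself is only a constant approximation (e.g.\ a $\sim 5$ factor for the capacitated step on $H_1$). I will set up the thresholds so that the flow step needs threshold $\tau\delta$ with $\tau=12$.

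\textbf{Main obstacle.} The hardest part is the grouping step. It must group $H_1$ into sets of exactly $t_1$ points with a radius certified against $\OPT$. It must also be compatible with the later flow step, so that the optimal blue/pink partners of a group lie near that group's representative. I would first try an approach in the spirit of \cite{rosner2018privacy}: take a maximal $2\delta$-separated net of $H_1$, move points to net points, and solve a flow that rounds the counts to multiples of $t_1$. Accepting the resulting constant, I would then verify feasibility of the second flow by Hall-type reasoning transported from the optimal fairlets.

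\textbf{Running time.} The search over $\delta\in D$ can be binary-searched, which gives $O(\log n)$ iterations. Each iteration consists of max-flow computations on graphs with $O(n^2)$ edges in almost-linear time $n^{2+o(1)}$. Together these yield the claimed $O(n^{2+o(1)}\log n)$.
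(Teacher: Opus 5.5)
Your skeleton is the paper's: capacitated multi-$\ell$-center on $H_1$ with $\ell=|H_1|/t_1$ and capacity $t_1$, whose optimum is at most $2\OPTfair$ as certified by the $H_1$-parts of fairlets agreeing with an optimal solution. A $5$-approximation then gives groups $\kappa^{-1}(a)$ of radius at most $10\OPTfair$. After that come the $1:t_i$ flows from the group centers with $\tau=12$, gluing, and farthest-first traversal on the anchors for $\rho\le 2\OPTfair$.

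\textbf{The grouping step is left open.} What you call the main obstacle stays unresolved, and your fallback (a net plus a rounding flow, ``accepting the resulting constant'') does not yield $14$. The paper simply invokes the Khuller--Sussmann $5$-approximation for capacitated multi-$\ell$-center with soft capacities (\Cref{lem:cap-k-center-5-approx-khuller}), which also gives the running time of that step. Since $\ell t_1=|H_1|$, every group has exactly $t_1$ points.

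\textbf{No compatibility is needed.} The flow-feasibility argument you only gesture at works for every partition of $H_1$ into exact $t_1$-blocks. Take the bipartite graph between the optimal fairlets $\mathcal{F}^*$ and the groups, with an edge whenever they intersect. Any $W\subseteq\mathcal{F}^*$ covers $t_1|W|$ points of $H_1$, and each group has exactly $t_1$ of them, so $|N(W)|\ge|W|$. Hall's theorem then gives a perfect matching. If $f^*$ is matched to $\kappa^{-1}(a)$ via $p\in f^*\cap\kappa^{-1}(a)$, then $d(h,a)\le d(h,p)+d(p,a)\le 2\OPTfair+10\OPTfair$ for all $h\in f^*\cap H_i$. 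This is exactly feasibility of the flow at threshold $12\OPTfair$. Your ``rerouting'' phrase can also be made rigorous by splitting each $h$ fractionally, $1/t_1$ to each of the $t_1$ group centers of its fairlet's $H_1$-points, and using flow integrality.

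\textbf{Your anchor choice breaks the constant.} The $12\OPTfair$ bound is measured to the capacitated center $a$ itself. With soft capacities, $a$ need not lie in its own group and may occur several times in $A$; this is why the paper uses these centers, as a multiset, as the anchors. Your center-awareness argument instead takes each anchor from the $H_1$-part of its own fairlet, say $q\in\kappa^{-1}(a)$. That costs an extra $d(a,q)\le 10\OPTfair$, so the fairlet cost can reach $22\OPTfair$ and the total $24\OPTfair$. As written, you cannot have both center-awareness via ``the anchor lies in its own fairlet'' and the factor $14$. You would need a grouping routine that returns distinct in-group representatives at factor $5$, or a different center-awareness argument. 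Note also that the paper's proof of this theorem only bounds $\varphi+\rho$ and contains no separate center-awareness argument for this case, so that part of the claim needs care either way.
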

The aforementioned results use outliers for colors $2$ to $m$ to achieve the designated ratios. 
We also report a result for a variant that allows outliers in color $1$. For this variant, the number of outliers $z$ is an input parameter. In the above results, $z$ is computed from the input point set to match the designated ratios. We study this concept only for the two-color case and balanced clusters. The algorithm and analysis can be found in Appendix~\ref{sec:bothsides}.

\begin{restatable}{theorem}{ThmOutBothSides}
\label{thm:outliers-both-sides}
    Let $|H_2|\geq|H_1|$. For any $z\geq |H_2|-|H_1|$ there exists an $11$-approximation for $1:1$-fair $k$-center with $z$ outliers from $H_1\cup H_2$ that runs in time $O(n^5)$.
\end{restatable}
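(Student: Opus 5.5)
The plan is to reduce to a thresholded problem and guess the optimal radius. Let $\OPT$ be the optimal radius for $1:1$-fair $k$-center with $z$ outliers from $H_1\cup H_2$. Since $\OPT$ is a pairwise distance, we iterate over all $\delta\in D=\{d(x,y)\}$ in increasing order. For each guess we run a procedure that either certifies $\delta<\OPT$ or outputs a fair clustering of cost $O(\delta)$ with at most $z$ outliers. The structural fact used throughout is the one inherited from the exact model: an optimal solution decomposes into $1:1$ fairlets, i.e.\ red--blue pairs inside a common optimal cluster. Each such pair has distance at most $2\OPT$. The optimal outlier set $Z^*$ has the same number of points of each color in excess, namely $|Z^*\cap H_2|-|Z^*\cap H_1|=|H_2|-|H_1|$, so the inliers admit a perfect matching of length $\le 2\OPT$.

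\textbf{Step 1: fairlets.} For a fixed $\delta$, in the bipartite graph with edges $\{(r,b): d(r,b)\le 2\delta\}$, we would like a matching of size at least $(n-z)/2$ whose matched points can be clustered with $k$ centers. A maximum matching alone does not suffice, because matched pairs may be spread out, and the outliers must simultaneously reduce the clustering cost. So the plan follows the $k$-center-with-outliers approach of Charikar et al.\ (greedy disks), combined with matchings.

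\textbf{Step 2: greedy disk selection.} Repeat $k$ times: for every candidate point $c$, compute the maximum matching $M_c$ among the uncovered points within distance $3\delta$ (or a similar constant) of $c$. Pick the $c$ maximizing $|M_c|$, and remove all points within an expanded radius, say $5\delta$ or $7\delta$. The expansion is chosen so that any optimal cluster intersecting the chosen ball lies entirely inside the removed region. Matching computations on $n$ candidates over $k$ rounds, each costing $O(n^{2.5})$ via Hopcroft--Karp, give the stated $O(n^5)$ bound when bounded crudely.

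\textbf{Step 3: charging argument (the main obstacle).} We must show that after $k$ rounds the total number of matched pairs is at least the number of inlier pairs of the optimal solution. In the Charikar charging, each greedy ball covers at least as many points as some not-yet-touched optimal ball, whose fairlets lie within distance $2\OPT$ of the optimal center. The difficulty is that matchings are not additive like point counts. After earlier removals, an optimal cluster may be partially eaten, leaving unbalanced remnants whose pairs cannot be matched locally. The plan is to charge each optimal cluster $O_j$ either to a greedy ball whose expanded region contains all of $O_j$, or to argue, by the greedy choice, that the selected ball's matching is at least $|O_j|/2$. To keep this clean, we remove the whole expanded neighborhood rather than just the matched points, so that untouched optimal clusters remain intact and fully matchable. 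Making the radii consistent, with ball radius roughly $3\OPT$, removal radius roughly $9\OPT$, and matched pairs assigned to a chosen center, is expected to give a final radius around $11\OPT$: the removal region around the ball center, plus the partner distance of the fairlet. This justifies the constant $11$.

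\textbf{Step 4: conclusion.} Unmatched points become outliers, and their number is at most $n-2\cdot\#\text{pairs}\le z$. Each cluster consists of matched pairs, so it is $1:1$-fair. The first $\delta$ for which the procedure succeeds is at most $\OPT$, which yields the $11$-approximation.
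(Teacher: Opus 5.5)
\textbf{There is a genuine gap in Step~3.} You flag Step~3 as the main obstacle, and it is where the whole proof would have to happen, but it is not carried out. The radii $3\OPT$, $9\OPT$ and the final $11\OPT$ are chosen to land on the target constant rather than derived. Step~4 (at most $z$ outliers) and the claim that the first successful $\delta$ is at most $\OPT$ both presuppose Step~3.

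Worse, the direct adaptation of the Charikar et al.\ charging fails. That argument works because the quantity counted is additive: it is the number of points covered by the expanded regions, so every removed point is credited and a partially eaten optimal cluster loses nothing. In your scheme only \emph{matched} points are credited, and matching size is not additive over the pieces into which removals cut the optimal clusters. Consider an optimal cluster that meets the expanded region but not the inner ball. If that region has a red surplus, the cluster's red points are removed unmatched, and their blue partners, left behind, may never be matched later. Removing the whole expanded neighborhood is exactly what creates such straddling clusters, and enlarging the radius only moves the boundary. The greedy rule (maximize the matching in the inner ball) also does nothing to control this waste. So you have no argument that $(n-z)/2$ pairs are found whenever $\delta\ge\OPT$.

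\textbf{How the paper avoids this.} It decouples the matching from outlier selection. For each threshold $\delta\ge\delta_{\min}$ it fixes an arbitrary maximum matching $\mathcal F_\delta$ in the bipartite graph with edges $d(r,b)\le\delta$. It then runs a black-box $3$-approximation for $k$-supplier with outliers on the matched red points, with facilities $P$ and $(z-|Z_\delta|)/2$ outliers. Each blue point goes to its partner's center, giving cost at most $r+\delta$.

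\textbf{The missing idea is an exchange lemma for $\delta=2\OPT$.} Overlay the optimal red--blue pairing $E_1$ with the maximum matching $E_2=\mathcal F_{2\OPT}$. The union decomposes into alternating paths and cycles. By maximality of $E_2$, every path has at least as many $E_2$-edges as $E_1$-edges. This lets one modify the optimal solution so that it:
\begin{itemize}
\item agrees with $\mathcal F_{2\OPT}$, i.e.\ each fairlet is entirely in or entirely out;
\item keeps at most $z$ outliers;
\item serves every surviving red point within $\OPT+2\OPT=3\OPT$, via its partner if necessary.
\end{itemize}
Hence $r\le 9\OPT$, the total cost is at most $11\OPT$, and the running time is $O(n^2)$ thresholds times $O(n^3)$ per $k$-supplier call, i.e.\ $O(n^5)$.
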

We need the lower bound of $|H_2|-|H_1|$ on $z$ to guarantee that a $1:1$-fair solution exists. The clustering computed in \Cref{thm:outliers-both-sides} is not necessarily center-aware. In  Appendix~\ref{sec:bothsides}, we also discuss how to obtain a center-aware solution, which is a $14$-approximation.

\subsection{Choosing centers from the inlier set}\label{sec:considerations:choosingcenters}
As a final note from the analysis, let us discuss our choice of computing centers from $\Representatives$ which matches $H_1$. This means that we choose centers only from the minority color class rather than from all input points. At first glance, this seems counterintuitive: We restrict the choice of centers to a smaller set. We do this because we cannot use the complete input point set since we may choose centers that are later deemed outliers. Indeed, this is quite likely since farthest-first traversal favours points that are far away. We could use the set of inliers to compute the centers. However, our inliers do not necessarily match the inliers of an optimum solution. Quite surprisingly, choosing the centers from the inliers -- a larger set than the set $\Representatives$ -- yields a worse approximation ratio. 
\begin{figure}[htbp]
     \centering
     \captionsetup[subfigure]{justification=centering}
    \begin{subfigure}[t]{.3\textwidth}
        \centering
        \scalebox{0.4}{
\begin{tikzpicture}[baseline,x=1cm,y=1cm,rotate=-90]
  \tikzset{
    transp/.style={circle, opacity=0, minimum size=0pt, inner sep=0pt},
    edge/.style   ={gray!70, line width=0.8pt}
  }
  \draw[opacity=0] (-3,-6) rectangle (18,6); 

    \coordinate (L) at (0,0) {};
    \node[point=blue] (lb1) at (L) {};
    \node[point=blue] (lb2) at (-1.2,0) {};
    \node[point=red] (lr1) at (0,1.2) {};
    \node[point=red] (lr2) at (0,-1.2) {};
    \node[outlier=blue] (lb3) at (0,2.4) {};
    \node[outlier=blue] (lb4) at (0,3.6) {};
    \node[outlier=blue] (lb5) at (0,-2.4) {};
    \node[outlier=blue] (lb6) at (0,-3.6) {};

    \draw (L) -- (lb2) node[draw=none,fill=none,font=\normalsize,midway,left] {$\OPT$};

    \draw (L) circle (1.2cm);    

    \coordinate (M) at (7.2,0) {};
    \node[outlier=blue, label={[label distance=.1cm, font=\huge]0:$a$}] (mb1) at (M) {};
    \node[point=blue] (mb2) at ($(M)+(155:1.2cm)$) {};
    \node[point=red] (mr1) at ($(M)+(335:1.2cm)$) {};

    \draw (M) circle (1.2cm); 
    
    \coordinate (R) at (14.4,0);
    \node[point=blue] (rb1) at (R) {};
    \node[point=blue] (rb2) at ($(R)+(1.2,0)$) {};
    \node[point=red] (rr1) at ($(R)+(0,-1.2)$) {};
    \node[point=red] (rr2) at ($(R)+(0,1.2)$) {};
    \node[outlier=blue] (rb3) at ($(R)+(0,2.4)$) {};
    \node[outlier=blue] (rb4) at ($(R)+(0,3.6)$) {};
    \node[outlier=blue] (rb5) at ($(R)+(0,-2.4)$) {};
    \node[outlier=blue] (rb6) at ($(R)+(0,-3.6)$) {};

    \draw (R) circle (1.2cm);




\end{tikzpicture}
}
        \caption{Optimal solution.}
    \end{subfigure}\hfill
    \begin{subfigure}[t]{.6\textwidth}
        \centering
        \scalebox{0.4}{
\begin{tikzpicture}[baseline,x=1cm,y=1cm,rotate=-90]
  \tikzset{
    transp/.style={circle, opacity=0, minimum size=0pt, inner sep=0pt},
    edge/.style   ={gray!70, line width=0.8pt}
  }
  \clip (-3,-8) rectangle (18,8);

    \coordinate (L) at (0,0) {};
    \node[outlier=blue] (lb1) at (L) {};
    \node[outlier=blue] (lb2) at (-1.2,0) {};
    \node[point=red, anchormark=black] (lr1) at (0,1.2) {};
    \node[point=red, anchormark=black] (lr2) at (0,-1.2) {};
    \node[outlier=blue] (lb3) at (0,2.4) {};
    \node[point=blue, label={[label distance=.2cm, font=\huge]90:$c_1$}] (lb4) at (0,3.6) {};
    \node[outlier=blue] (lb5) at (0,-2.4) {};
    \node[point=blue] (lb6) at (0,-3.6) {};

    \fairletarea{lb4,lr1};   
    \fairletarea{lb6,lr2};


    \coordinate (M) at (7.2,0) {};
    \node[outlier=blue, label={[label distance=.1cm, font=\huge]-25:$a$}] (mb1) at (M) {};
    \node[point=blue] (mb2) at ($(M)+(155:1.2cm)$) {};
    \node[point=red, anchormark=black] (mr1) at ($(M)+(335:1.2cm)$) {};

    \draw (mb2) -- (mb1) node[draw=none,fill=none,font=\Large,midway, left, sloped] {$\OPT$};
    \draw (mb1) -- (mr1) node[draw=none,fill=none,font=\Large,midway, left, sloped] {$\OPT$};

    \fairletarea{mb2,mr1};

    \coordinate (R) at (14.4,0);
    \node[outlier=blue] (rb1) at (R) {};
    \node[outlier=blue] (rb2) at ($(R)+(1,0)$) {};
    \node[point=red, anchormark=black] (rr1) at ($(R)+(0,-1.2)$) {};
    \node[point=red, anchormark=black] (rr2) at ($(R)+(0,1.2)$) {};
    \node[outlier=blue] (rb3) at ($(R)+(0,2.4)$) {};
    \node[point=blue, label={[label distance=.2cm, font=\huge]270:$c_3$}] (rb4) at ($(R)+(0,3.6)$) {};
    \node[outlier=blue] (rb5) at ($(R)+(0,-2.4)$) {};
    \node[point=blue, label={[label distance=.2cm, font=\huge]270:$c_2$}] (rb6) at ($(R)+(0,-3.6)$) {};

    \fairletarea{rb6,rr1};
    \fairletarea{rb4,rr2};

    \draw (lb4) circle (9.4cm);
    \draw (lb4) -- ($(lb4)+(9.4,0)$) node[draw=none,fill=none,font=\huge,midway,above,sloped,rotate=90] {$8\OPT-\varepsilon$};

    \draw (rb4) circle (2.4cm);
    \draw (rb6) circle (2.4cm);

    \draw (rb6) -- (mr1) node[draw=none,fill=none,font=\huge,midway, sloped, above right, rotate=90] {$6\OPT-\varepsilon$};

    \draw (lb4) -- (mb2) node[draw=none,fill=none,font=\huge,midway,sloped, rotate=90, above] {$6\OPT-\varepsilon$};


\end{tikzpicture}
}
        \caption{A possible solution computed by the algorithm that runs \textsc{compute-centers} on $P(\F)$.}
    \end{subfigure}
    \caption{Comparison of optimal solution and solution computed by the algorithm that runs \textsc{compute-centers} on $P(\F)$ for 1:1-fair $k$-center with outliers for $k=3$. The unfilled circles indicate outliers.}
    \label{fig:8-approx}
\end{figure}
Assume that we use \Cref{alg:general-case-outlier-algorithm}, but with the slight modification that the set of centers $C$ is computed 
\[C\gets \textsc{compute-centers}(P(\F),d,k)\] 
in Line 3. \Cref{fig:8-approx} shows an instance on which the algorithm only yields an $8$-approximation.
\Cref{fig:8-approx} (a) shows an optimal solution for $k=3$ with radius $\OPT$ and \Cref{fig:8-approx} (b) shows a potential solution computed by the considered algorithm variant with radius $8\OPT$. 
The problem, as compared to the optimal solution, is two-fold.
First, the fairlets are computed. 
Here, we can only guarantee a cost of $2\OPT$. As $\OPT$ is the smallest pairwise distance between any pair of points, the network for the max-flow computation does not differentiate between point pairs of distance $\OPT$ and distance $2\OPT$. Therefore, a fairlet decomposition as depicted in \Cref{fig:8-approx} might indeed be found.
Then, the farthest-first traversal on $P(\F)$ chooses the centers $c_1,c_2$, and $c_3$, in this order. In this construction, $d(c_1,a) = d(c_3,a) = d(c_2,a)=7\OPT-\varepsilon$, so it does not matter to which center the fairlet with anchor $a$ is assigned. Assigning it to $c_1$ leads to an overall clustering cost of $8\OPT-\varepsilon$.
In \Cref{sec:appendix-choice-of-centers-8-approx}, we complement this result by showing that the approximation factor is indeed bounded from above by $8$.

\section{Experimental Evaluation}

We have argued in the previous sections that fair clustering algorithms often assume an ideal distribution of the protected attributes (e.g., a \(1:1\) male to female ratio) while realistic data often deviates from these ideal ratios.
Previous work has dealt with this problem by randomly discarding input points (see, e.g., \cite{chierichetti2017fair}) in order to establish the exact ratio in the input.
We instead hypothesized that an informed choice of discarded outliers yields better results.
In order to test the hypothesis, we compare the resulting clustering cost of~\cite{chierichetti2017fair} with random discarded outliers to the clustering cost resulting from our algorithms with an informed choice of outliers in computational experiments.
We evaluate our fair $k$-center with outlier algorithms for the case that $t_1=1$, both for two colors and for multiple colors. Our evaluation is run on four datasets, \textit{bank} \cite{misc_bank_marketing_222}, \textit{census} \cite{misc_adult_2}, \textit{diabetes} \cite{misc_diabetes_130_richtig} and \textit{income} \cite{ding2021retiring}.

\begin{table}[htbp]
    \caption{Dataset parameters. $n$: subsample size, $d$: number of features, No. of test sets: number of test sets created from this dataset, Groups: The number of different attribute values of the protected attribute, Protected attribute: the name in the original dataset.}
    \label{datasettable}
    \centering
    \begin{tabular}{lrrrrl}
    \toprule
    Dataset   & $n$ & $d$ & No. of test sets & Groups & Protected attribute \\
    \midrule
    Bank        & 1{,}000 & 7  & 45  & 2 & married \\
    Census      &   600   & 6  & 54  & 2 & sex     \\
    Diabetes    & 1{,}000 & 15 & 91  & 2 & gender  \\
    Income      & 1{,}000 & 9  & 195 & 2 & SEX     \\
    Census-Race &  600    & 6 & 54 & 5 & race \\
    Diabetes-Race & 1{,}000 & 15 & 91 & 6 & race\\
    \bottomrule
    \end{tabular}
\end{table}

Our experimental setup mirrors the setup of the reference experiments in~\cite{chierichetti2017fair} where possible: We use the same datasets Bank, Census and Diabetes and the same protected attributes. Since the reference only exists for the \(1:1\) case, we generalize accordingly where appropriate.
Like~\cite{chierichetti2017fair}, we subdivide each dataset into numerous parts in order to create a large database for our evaluation and smooth out artifacts. The total number of test datasets is $385$, see Table~\ref{datasettable} (note that two datasets appear twice since we evaluate two different protected attributes). In order to create the test datasets, we randomized the order of the datasets and then cut it into pieces. In this fashion, all data present in the dataset is used, the expected ratio of each set matches the ratio of the overall dataset, but the individual test sets are all a bit different.

Before splitting the dataset, the data is normalized: We first subtract the mean value from each attribute, then scale the values such that all attribute values lie in the interval $[-1,1]$. We do this because the datasets contain columns with very different ranges.
For each test set, we evaluate the algorithms for all $k\in\{1, 2, 5, 10, 20, 30, 40, 50, 100\}$.
Since the algorithms are not randomized, each test set is evaluated once.

\subparagraph*{Algorithms.}
We implement \Cref{alg:general-case-outlier-algorithm} for fair \(k\)-center with outliers. In the two-color $1:t$ case, we use \Cref{alg:compute-1-t-fairlets} to compute the fairlets; in the multi-color $1:t_2:\ldots:t_m$ case, we use \Cref{alg:compute-1:t2:...:tm-fairlets}.
Our algorithm implicitly identifies a set of $z$ outliers during the fairlet decomposition, see below. 
In the two-color case, we compare ourselves to the algorithm proposed by Chierichetti et al. \cite{chierichetti2017fair} for fair \(k\)-center without outliers. This algorithm assumes that the given point set satisfies the desired color ratio; hence, we first identify the number of points $z$ needed to exclude from the majority class to achieve a balanced point set and subsample the points accordingly, i.\,e., we sample $n-z$ blue points uniformly at random and discard the other points.   

\subparagraph*{Choosing the desired ratios for fair $k$-clustering with outliers automatically.} Notice that we can \emph{choose} the $t_i$ for the specific input data at hand. In the exact representation model, this is done by setting $t_i = |H_i| / \gcd(|H_1|, \ldots, |H_m|)$  for all $i\leq m$. This results in no outliers.
We want to use outliers to allow for reasonable clusterings in cases where the input $|H_i|$ are, for example, coprime. For illustration, consider a dataset with \(|H_1|=100\) and \(|H_2|=201\). In the exact representation model without outliers, we would have \(t_1=100\) and \(t_2=201\) and a fairlet decomposition would just consist of one fairlet containing all points. However, the intuitively correct ratio is \(1:2\), which would lead to much smaller fairlets of size \(3\) each and likely more meaningful clusterings. Removing only \(1\) outlier allows for such a ratio. In the following, we formalize this intuition:
Recall that we assume $|H_1|\le|H_2|\le \ldots, |H_m|$. An automatic way to choose the $t_i$ in a beneficial and feasible way is to set 
\[
t_1=1 \text{ and } t_i = \lfloor |H_i| / |H_1| \rfloor
\]
This results in a $1:t_2:\ldots:t_m$-fair $k$-clustering problem with outliers, and it can be applied to any input dataset in an automatic fashion. We always get ratios with $t_1=1$ in this fashion, thus we can apply the above mentioned special case algorithms.
In \Cref{appendixinputandoutputratios}, we report the input ratios of all test sets and the ratios to which these were rounded down.

\subparagraph{Cost comparison for two colors.}
We compare the clustering cost of our method to the method by Chierichetti et al.~\cite{chierichetti2017fair} in Figure~\ref{clusteringcost-comparison}. For each $k$, we compare the mean, min and max clustering cost over all subinstances. In each subinstance, we cluster to the rounded color ratio as described above. For the input of~\cite{chierichetti2017fair}, we randomly sample outliers from the majority color to achieve the idealized ratio. The comparison shows the clustering cost by the more informed outlier selection of our method (blue) vs.~\cite{chierichetti2017fair} (orange). The diagrams also report the fairlet cost as it is often observed that this dominates the final cost in fair clustering, an effect that can also be seen here.
As expected, our method gives lower clustering cost.

\begin{figure}
 \includegraphics[width=.49\textwidth]{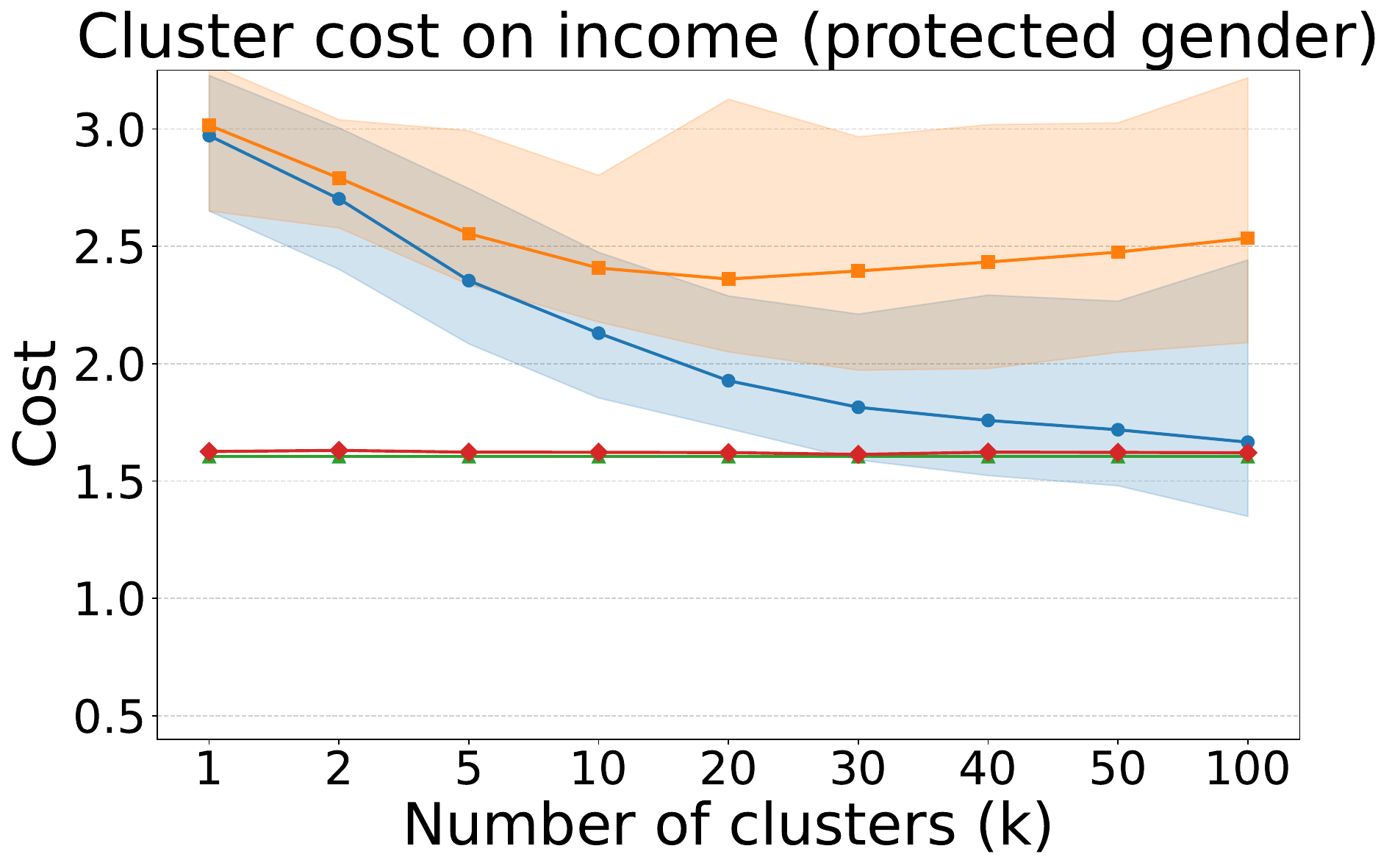}
 \hfill
  \includegraphics[width=.49\textwidth]{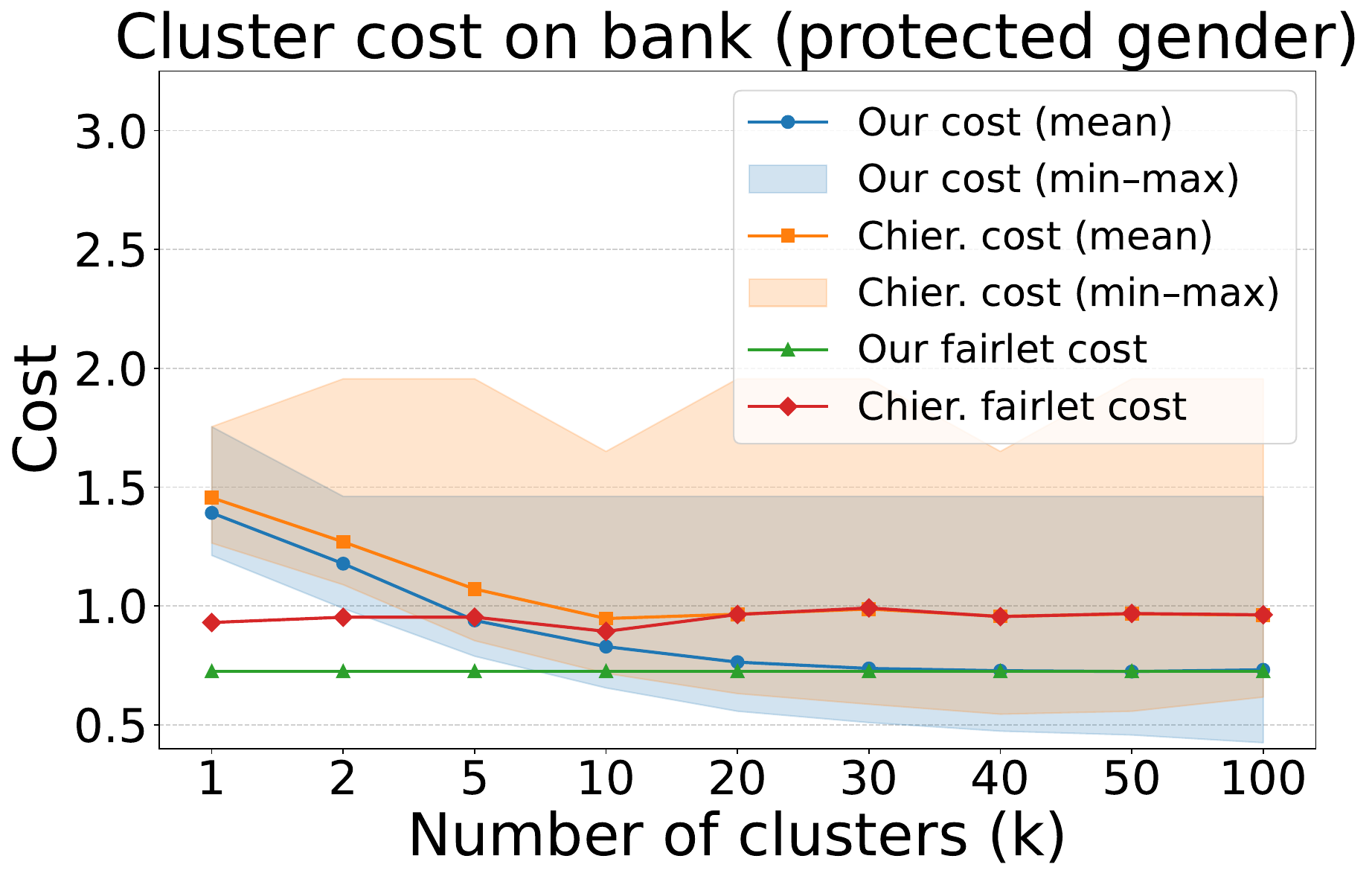}
 \hspace{0.5cm}

 \includegraphics[width=.49\textwidth]{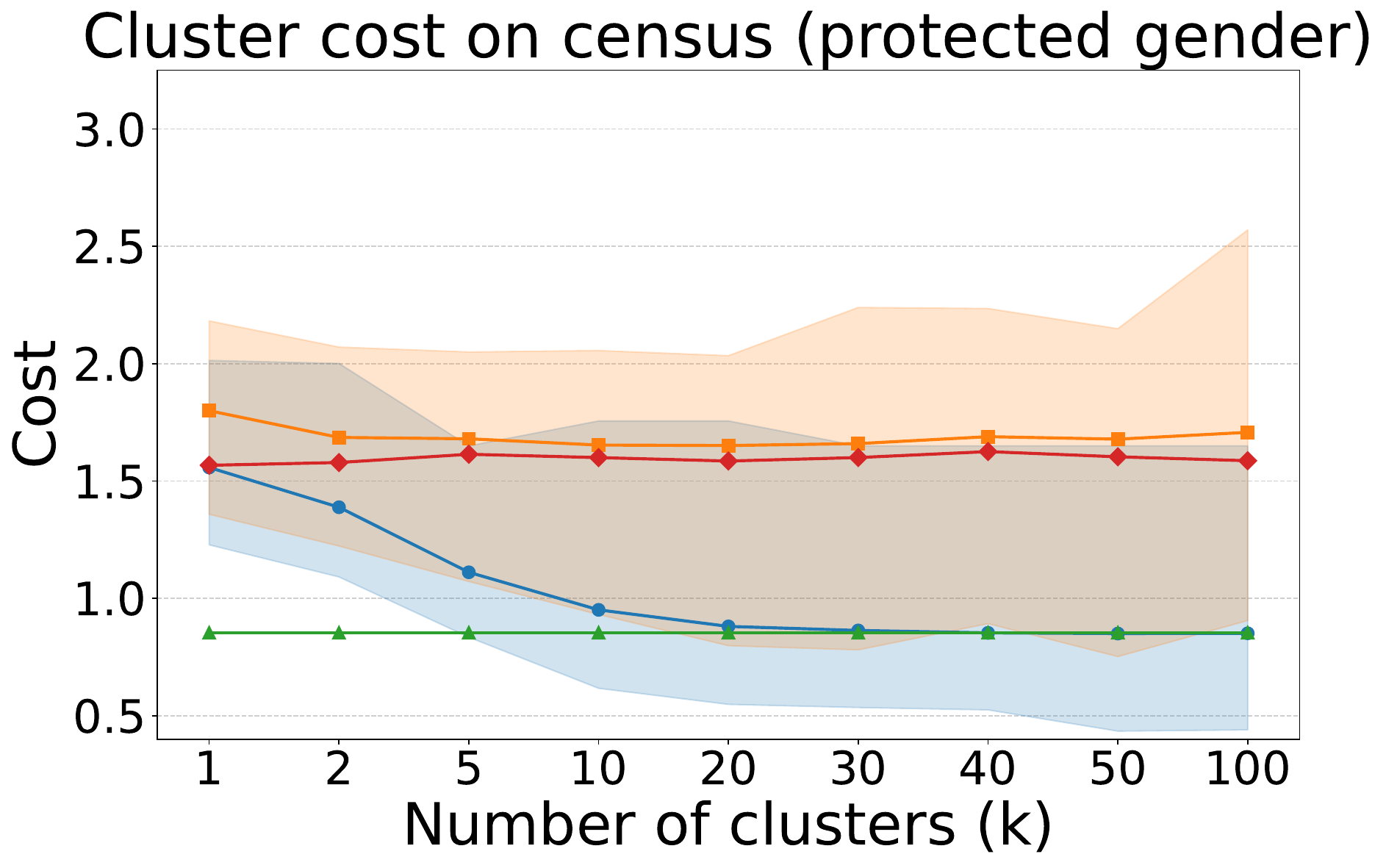}
 \hfill
 \includegraphics[width=.49\textwidth]{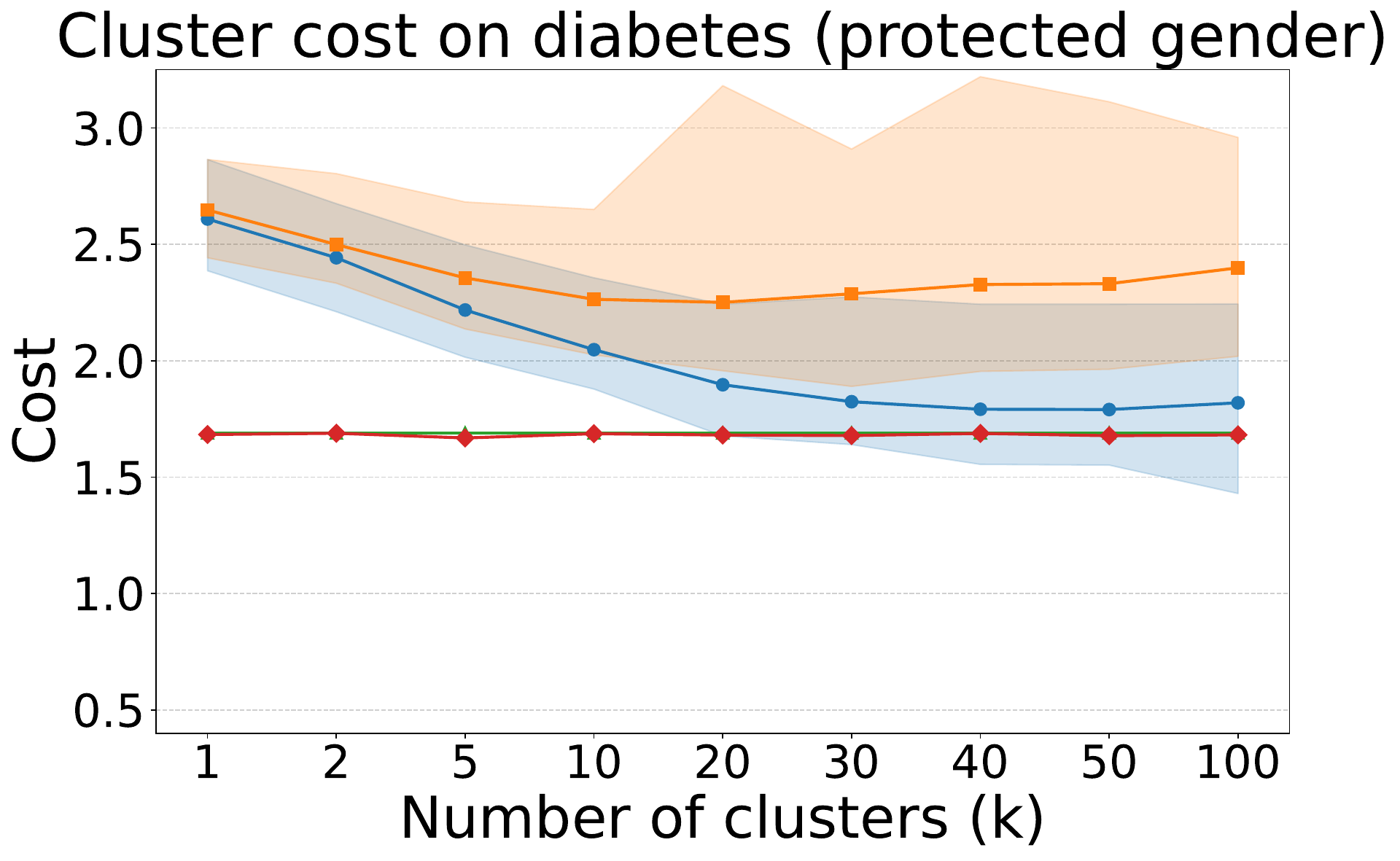}

 \caption{Comparison of the clustering cost of our method to random outlier selection.\label{clusteringcost-comparison}}
\end{figure}

\subparagraph{Effect of center choices on the clustering cost.}
In Figure~\ref{centerchoice}, we demonstrate that our findings on the center choice are not merely theoretical. 
As opposed to Figure~\ref{clusteringcost-comparison}, we now restrict~\cite{chierichetti2017fair} to \emph{only} use minority points as anchors, this forcing this algorithm to select minority points as centers. We observe a sizable decrease of the clustering cost. This is in line with our theoretical results. We observe that our method still gives better clustering cost.

\begin{figure}[hp]
 \includegraphics[width=.49\textwidth]{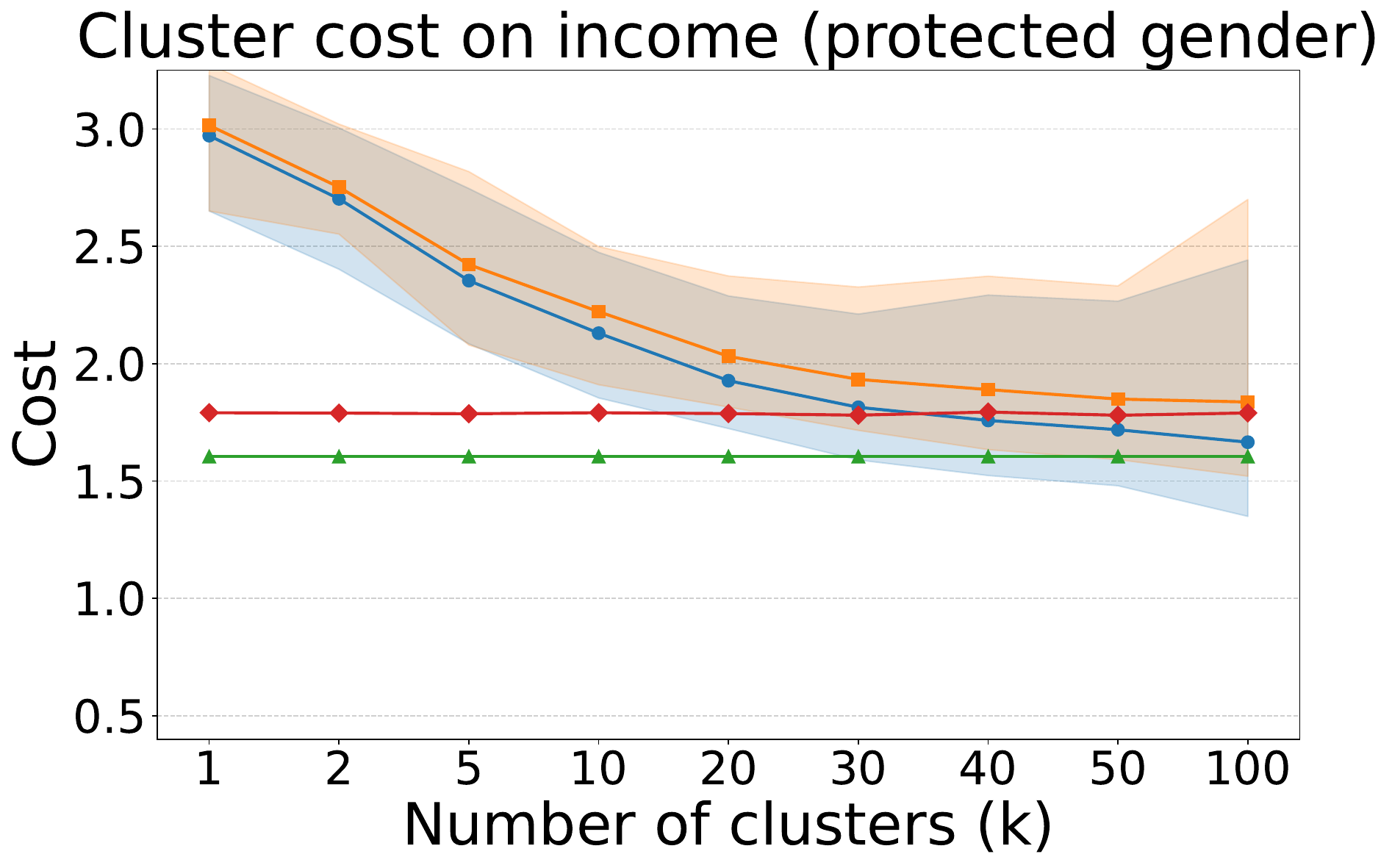}
 \hfill
  \includegraphics[width=.49\textwidth]{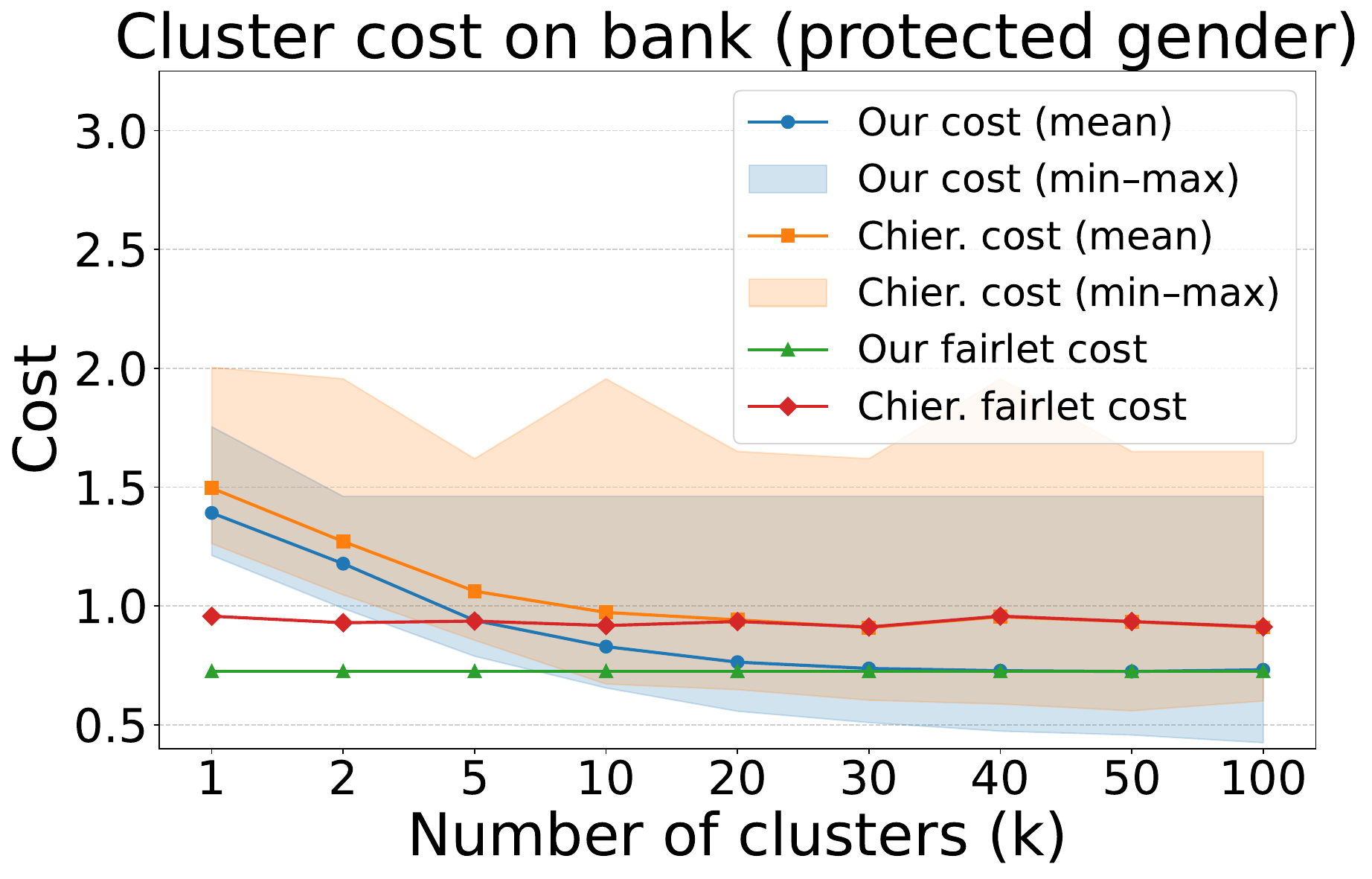}
 \hspace{0.5cm}

 \includegraphics[width=.49\textwidth]{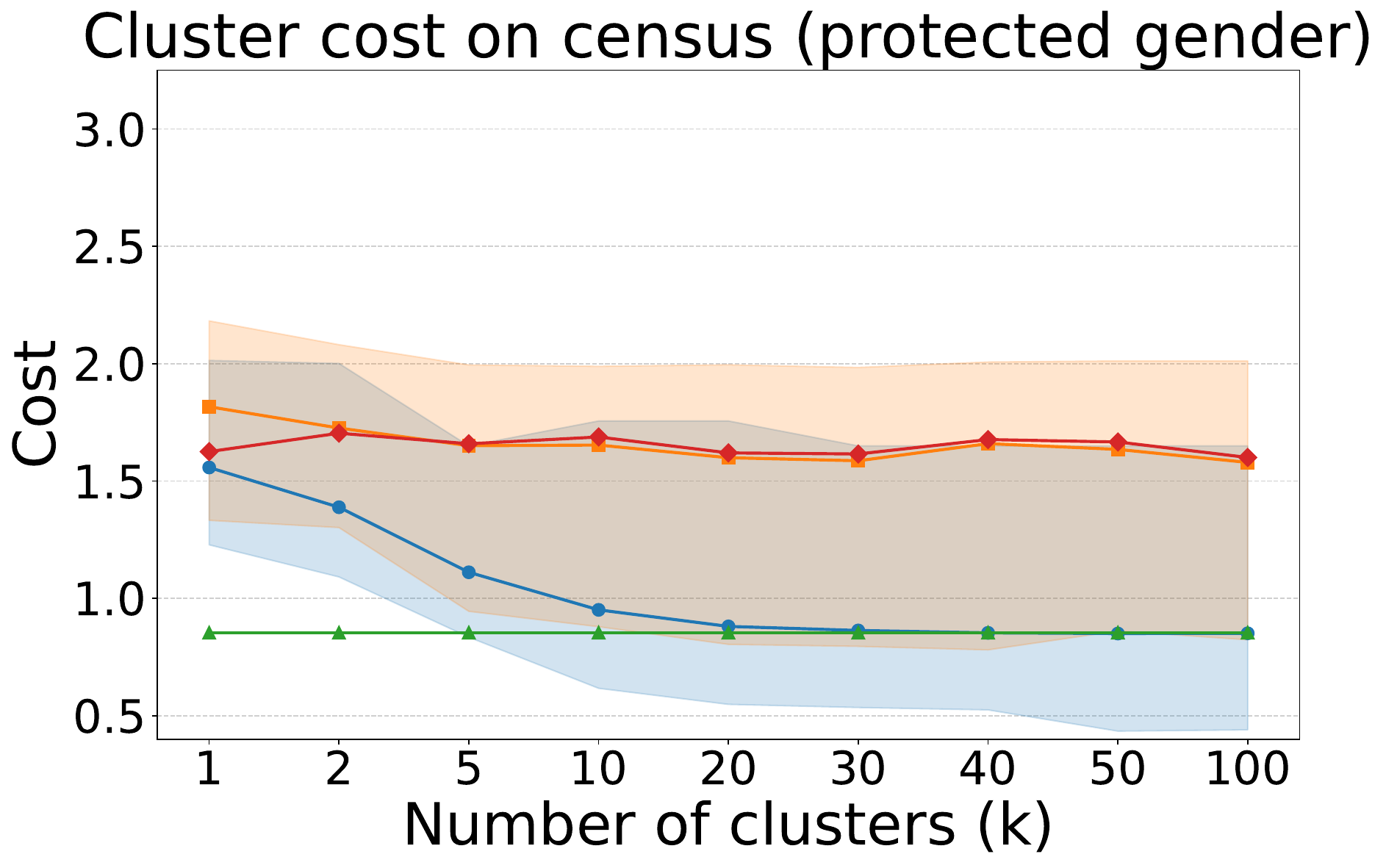}
 \hfill
 \includegraphics[width=.49\textwidth]{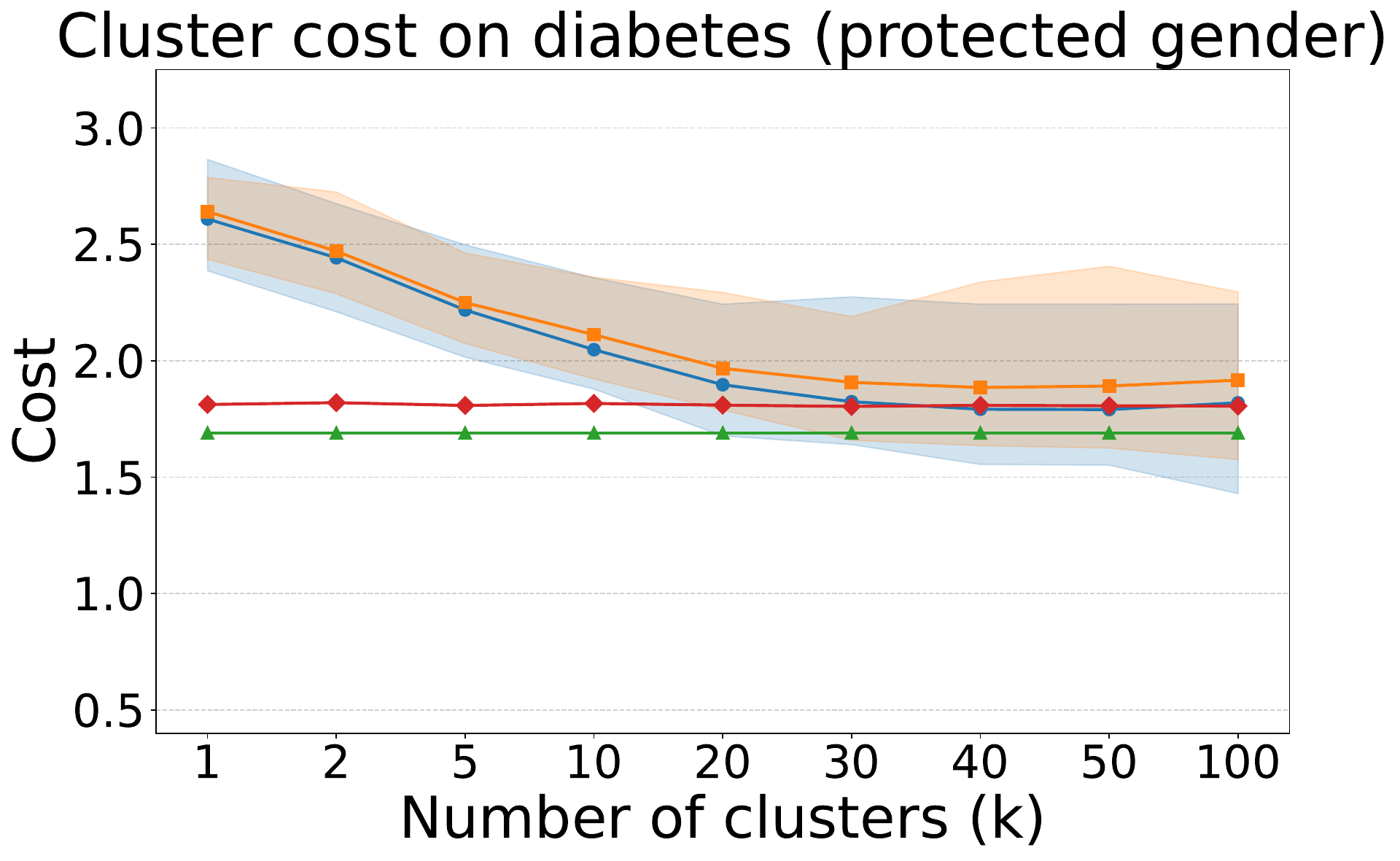}
\hspace{0.5cm}
\caption{The effect of center choices. \label{centerchoice}}
\end{figure}

\subparagraph{Multicolor results.}
\begin{figure}[hp]
  \includegraphics[width=.49\textwidth]{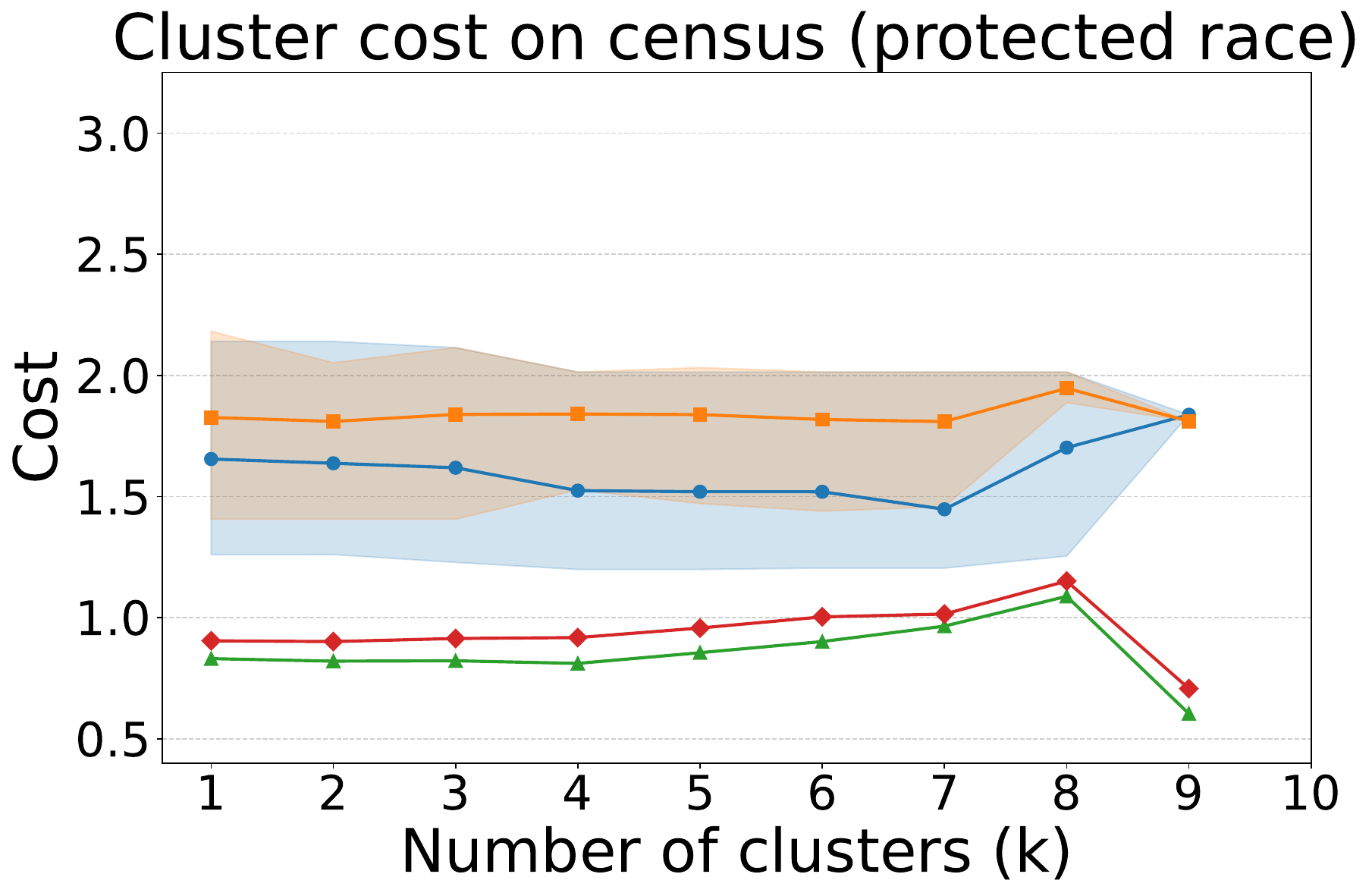}
 \hfill
 \includegraphics[width=.49\textwidth]{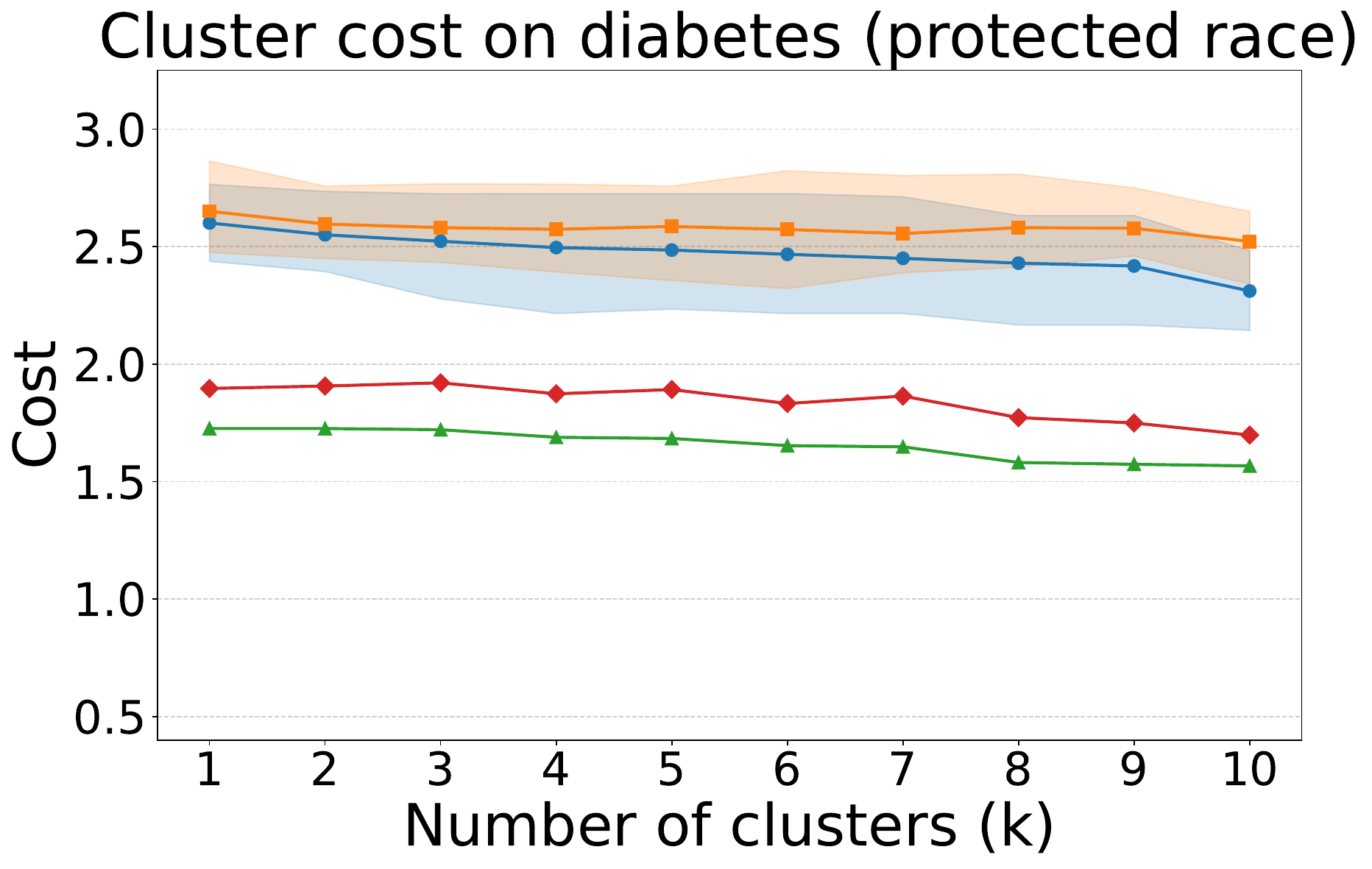}
\caption{Cluster cost comparison for multiple colors. \label{multicolor}}
\end{figure}

As in Figure~\ref{clusteringcost-comparison}, we evaluate how the clustering cost compares if we run our method and the method by~\cite{chierichetti2017fair} combined with random outlier selection. We cannot form as many clusters because the uneven distribution of the attribute values means that fairlets are large even after rounding the ratios. This is the case for \textsc{census} and \textsc{diabetes} with protected attribute \textsc{race} (also see \Cref{census-race} and \Cref{diabetes-race} in \Cref{appendixinputandoutputratios} for evidence)
where the minority group has very few points. This restricts the number of fairlets that we can build and the number of clusters that we can obtain. The result is depicted in Figure~\ref{multicolor}.
We report the raw data for the plots in Figure~\ref{clusteringcost-comparison} and \ref{centerchoice} in \Cref{appendix:costvalues}.

\clearpage
\bibliography{references}

\newpage
\appendix 
\section{Analysis of the framework}\label{sec:framework-lemmata}

\begin{restatable}[\cite{Gonz85}]{lemma}{GonzalezTwoApprox}
    \label{lem:compute-centers-2OPT-Gonzalez}
    Let $S$ be the radius of a valid $k$-center solution on $P$. Then, \textsc{compute-centers} (\Cref{alg:compute-centers-Gonzalez}) returns a set of centers $C\subseteq \Representatives$ such that $\max_{a\in \Representatives}\min_{c\in C}d(a,c)\leq 2S$.
\end{restatable}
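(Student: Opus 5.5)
The plan is to adapt Gonzalez's classical argument to a comparison solution whose centers need not lie in $\Representatives$. Let $S$ be a valid $k$-center solution on $P$, with centers $s_1,\ldots,s_k\in P$ and radius $S$. Every point of $P$, and in particular every point of $\Representatives\subseteq P$, lies within distance $S$ of some $s_j$. Define the balls $B_j=\{a\in\Representatives \mid d(a,s_j)\le S\}$. These $k$ sets cover $\Representatives$. We may assume the $k$ centers of $S$ are distinct; if $S$ uses fewer centers, this only reduces the number of balls, which helps. If $|\Representatives|\le k$, farthest-first traversal eventually picks all of $\Representatives$ and the bound is trivially $0$, so assume $|\Representatives|>k$.

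Let $C=\{c_1,\ldots,c_k\}$ be the output and $\rho=\max_{a\in\Representatives}d(a,C)$, attained at some $a^*\in\Representatives$. The key step is the standard separation property of farthest-first traversal. For each $i\ge 2$, $c_i$ maximizes $d(\cdot,\{c_1,\ldots,c_{i-1}\})$ over $\Representatives$. The distances to the growing center set are non-increasing, so for $i<j$ we get $d(c_i,c_j)\ge d(c_j,\{c_1,\ldots,c_{j-1}\})\ge d(a^*,\{c_1,\ldots,c_{j-1}\})\ge d(a^*,C)=\rho$. Also $d(a^*,c_i)\ge\rho$ for every $i$. Hence the $k+1$ points $c_1,\ldots,c_k,a^*$ of $\Representatives$ are pairwise at distance at least $\rho$.

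Now apply the pigeonhole principle: these $k+1$ points all lie in $\Representatives$, which is covered by the $k$ balls $B_j$, so two of them, say $x\neq y$, share a ball $B_j$. The triangle inequality through $s_j$ gives $\rho\le d(x,y)\le d(x,s_j)+d(s_j,y)\le 2S$. Note that $s_j$ need not belong to $\Representatives$; this is harmless, because only the points $x,y$ must lie in $\Representatives$, and they do.

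The only delicate point is the degenerate case $\rho=0$, or a point $a^*$ coinciding with a chosen center. Here the bound holds trivially, so the separation argument only needs to be invoked when $\rho>0$, in which case $a^*\notin C$ and the $k+1$ points are distinct. I do not expect any real obstacle beyond making explicit that the covering by the balls of $S$ is restricted to $\Representatives$.
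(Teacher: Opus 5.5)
Your proof is correct and follows the same route as the paper's: apply pigeonhole to the $k$ traversal centers plus the farthest remaining point of $A$, against the $k$ clusters (or balls) of the comparison solution. Then use the triangle inequality through that solution's center, which need not lie in $A$. You also spell out the separation property $d(c_i,c_j)\ge\rho$ and the degenerate cases, which the paper's proof uses only implicitly when it concludes $\min_i d(c'_{k+1},c'_i)\le 2S$.
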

\iffullappendix
\begin{proof}
    The centers $c'_1,\ldots,c'_k\in \Representatives$ are computed by farthest-first traversal on $\Representatives$. Let $c'_{k+1}\in \Representatives$ be a point that is farthest away from $\{c'_1,\ldots,c'_k\}$, i.e., $c'_{k+1}\in \arg\max_{p\in \Representatives}\min_{1\le i\le k}d(p,c'_i)$. 
    Consider a valid $k$-center clustering $\C$ on $P$ with value $S$. 
    By the pigeonhole principle, there exist two different points $c'_i,c'_j\in \{c'_1,\ldots,c'_{k+1}\}$ that are contained within the same cluster of $\mathcal{C}$. By triangle inequality and because the maximal radius of $\mathcal{C}$ is $S$, it follows that $d(c'_i,c'_j) \leq 2S$. Therefore, $\max_{a\in \Representatives}\min_{1\leq i \leq k}d(a,c'_i) = \min_{1\leq i \leq k}d(c'_{k+1},c'_i) \leq 2S$.
\end{proof}

\CombinedCost*
\begin{proof}
	Let $p\in P(\F)$. Then there exists a fairlet $f\in \F$ that contains $p$. Let $c_f = \arg\min_{c\in C}d(c,\anc(f))$ be the center closest to its anchor. It is $d(p,\anc(f))\leq \varphi$ by the assumption on \textsc{compute-fairlets} and $d(\anc(f),c_f) \leq \rho$ by the assumption on \textsc{compute-centers}. Hence, 
	\[ d(p,\alpha(p)) = d(p, c_f) \leq d(p,\anc(f)) + d(\anc(f),c_f) \leq \varphi + \rho.  \]
	The clustering $(C,\alpha)$ is fair as every cluster equals a union of fairlets. In the end, every fairlet is part of a cluster. Hence, by assumption, the solution excludes outliers $P\setminus P(\F)$.
\end{proof}
\fi
\begin{restatable}{lemma}{RunningTimeFramework}
    \label{lem:running-time-framework-outlier}
    Let $\F$ be the fairlet decomposition computed by $\textsc{compute-fairlets}(P,d,k)$ \linebreak in \Cref{alg-line:call-compute-fairlets}.
    Let $T_{\F}$ and $T_{C}$ denote the running times of $\textsc{compute-fairlets}(P,d,k)$ and \linebreak $\textsc{compute-centers}(\Representatives,d,k)$, respectively. Then, the running time of $\Cref{alg:general-case-outlier-algorithm}$ is in \linebreak $O(T_{\F}+T_{C}+|\F|\cdot k + n)$.
\end{restatable}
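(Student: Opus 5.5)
We account separately for each line of \Cref{alg:general-case-outlier-algorithm} and add the costs. Line~\ref{alg-line:call-compute-fairlets} is the call to \textsc{compute-fairlets} and costs $T_{\F}$ by definition. This call returns $\F$ together with the anchor map $\anc$.

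Building $\Representatives=\{\anc(f)\mid f\in\F\}$ takes one pass over $\F$. With a Boolean array indexed by the points of $P$ we can remove duplicate anchors in constant time per fairlet. The cost is $O(|\F|+n)$: $O(n)$ to initialise the array and $O(|\F|)$ for the pass. Line~\ref{alg-line:call-compute-centers} then runs \textsc{compute-centers} on $\Representatives$, which costs $T_C$ by definition.

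The for-loop handles each $f\in\F$ in two steps. First it computes $c_f=\arg\min_{c\in C}d(c,\anc(f))$ by scanning all $|C|\le k$ centers, assuming distances are available in constant time. This gives $O(|\F|\cdot k)$ in total. Second it sets $\alpha(p)\gets c_f$ for every $p\in f$. The fairlets in $\F$ are pairwise disjoint, so the total number of these assignments is $\sum_{f\in\F}|f| = |P(\F)|\le n$. Over all iterations this step therefore costs $O(n)$.

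Summing the contributions gives $O(T_{\F}+T_C+|\F|\cdot k+n)$, as claimed. The only point needing care is the assignment step. Bounding it by $|\F|\cdot\max_f|f|$ would be too crude, and disjointness of the fairlets is exactly what reduces it to $O(n)$. A smaller detail is that duplicate anchors may occur, since anchors need not lie in their own fairlet. The deduplication above handles this and keeps the cost of building $\Representatives$ within $O(|\F|+n)$.
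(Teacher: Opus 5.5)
Your proof is correct and uses the same line-by-line accounting as the paper: $T_{\mathcal{F}}+T_C$ for the two subroutine calls, an $O(k)$ scan per fairlet giving $O(|\mathcal{F}|\cdot k)$, and $O(n)$ for the assignments because each point of $P(\mathcal{F})$ is assigned exactly once. Your explicit $O(|\mathcal{F}|+n)$ cost for building the anchor set $A$ is a step the paper leaves implicit, and it is absorbed into $O(|\mathcal{F}|\cdot k+n)$ since $k\ge 1$.
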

\iffullappendix
\begin{proof}
    The algorithms $\textsc{compute-fairlets}$ and $\textsc{compute-centers}$ are both called once in the beginning, incurring a running time of $O(T_{\F}+T_{C})$.
    For a fixed fairlet $f\in \F$, we iterate over $C$ in $O(k)$ to find the center $c_f$ that is closest to the anchor of $f$. This gives a total running time of $O(|\F|k)$ for finding the closest centers.
    Each point $p\in P(\F)$ is assigned once, which leads to an overall running time of $O(n)$ for the point assignments.
\end{proof}
\fi
Note that we will show that the running time of the overall algorithm amounts to $O(n^{2+o(1)}\log n)$ in all one-sided outlier cases.

\section{Analysis of the \texorpdfstring{$1:t$}{1-t}-fairlet decomposition}\label{appendix:analysis-1-t}

\iffullappendix
\thmonetunbalanced*
\begin{proof}
    Let $(C.\alpha)$ be the clustering computed by \Cref{alg:general-case-outlier-algorithm} using the fairlet decomposition $\F$.
    By \Cref{lem:4-approximation}, the cost of $(C,\alpha)$ is upper bounded by $\varphi + \rho$, where $\varphi = \max_{f\in \F}\max_{b\in f\cap B}d(b,\anc(f))$ and $\rho= \max_{r\in R}\min_{c\in C}d(r,c)$, as the set of anchors equals $R$.
    Let $\OPTfair$ denote the cost of an optimal $1:t$-fair clustering with outliers.
    \Cref{lem:properties-computed-fairlets-1:t} implies $\varphi \le 2\OPTfair$
    and \Cref{lem:compute-centers-2OPT-Gonzalez} with $S=\OPTfair$ implies $\rho \le 2\OPTfair$.
\end{proof}
\fi

\begin{restatable}{lemma}{ComputeFairletDecompOnet}
    \label{lem:properties-computed-fairlets-1:t}
    Assume that there exists a $1:t$-fairlet decomposition $\F$ such that $d(r,b) \leq \tau\delta$ for every $f\in \F$, the anchor $r\in f\cap R$ of $f$ and every blue point $b\in f\cap B$ for some $\delta\in D$.
	Then, \Cref{alg:compute-1-t-fairlets} computes a $1:t$-fairlet decomposition with this property.
\end{restatable}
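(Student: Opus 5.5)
The plan is to argue via the standard correspondence between integral flows in $N_\delta$ and $1:t$-fairlet decompositions with red anchors, and to use the monotonicity of the loop over $D$.

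\emph{Fairlets give flows.} Fix $\delta\in D$ as in the assumption and let $\F$ be the hypothesised decomposition. Recall that, in the one-sided setting, a decomposition covers all of $R$, i.e.\ it consists of $|R|$ fairlets. Define a flow on $N_\delta=(V,E(\tau\delta),v,w,c)$ as follows. For every $f\in\F$ with anchor $r$ and every $b\in f\cap B$, send one unit along $v\to b\to r\to w$.
\begin{itemize}
\item The edge $(b,r)$ exists because $d(r,b)\le\tau\delta$.
\item Each $b$ lies in at most one fairlet, since fairlets are disjoint, so the capacity $1$ on $(v,b)$ is respected.
\item Each $r$ receives exactly $t$ units, matching the capacity $t$ on $(r,w)$.
\end{itemize}
The resulting flow has value $t|R|$. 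This is also an upper bound on any flow, because $\{(r,w): r\in R\}$ is a cut of capacity $t|R|$. Hence the maximum flow in $N_\delta$ has value exactly $t|R|$.

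\emph{The loop stops early enough.} The for-loop visits the values of $D$ in increasing order and returns at the first $\delta'$ with $|\mathbf f|=t|R|$. By the previous step this happens at some $\delta'\le\delta$. The algorithm therefore returns in iteration $\delta'$, where $E_B^R(\tau\delta')\subseteq E_B^R(\tau\delta)$.

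\emph{Flows give fairlets.} Take the integral maximum flow $\mathbf f$ of value $t|R|$ computed in iteration $\delta'$.
\begin{itemize}
\item Saturation: since the total capacity into $w$ is $t|R|$, every edge $(r,w)$ carries exactly $t$ units.
\item Each red point gets $t$ blue points: by flow conservation and integrality, every $r$ has exactly $t$ incoming edges $(b,r)$ with $\mathbf f(b,r)=1$.
\item No blue point is shared: each $b$ has inflow at most $1$, so it sends flow to at most one $r$.
\end{itemize}
Consequently, the sets $f_r$ built in Lines~\ref{alg-line:start-constructing-fairlets-1:t}--\ref{alg-line:end-constructing-fairlets-1:t} are pairwise disjoint, and each contains exactly one red point and $t$ blue points. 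They are therefore $1:t$-fairlets with anchor $r$ and with $d(r,b)\le\tau\delta'\le\tau\delta$ for all $b\in f_r\cap B$.

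\emph{Maximality.} The family covers all of $R$. Any further fairlet would need a red point, so no fairlet can be added, and the family is a maximal fairlet decomposition as required.

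\emph{Main obstacle.} No step here is deep. The only subtle point is the extraction of fairlets from the flow: it requires integrality of the maximum flow, which max-flow algorithms guarantee for integer capacities, together with the saturation argument above.
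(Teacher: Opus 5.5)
Your proof is correct and follows the paper's argument. You build a flow of value $t|R|$ from the hypothesised decomposition, bound every flow by $t|R|$ via the edges into $w$, and then read off disjoint $1:t$-fairlets from an integral maximum flow, using saturation of the $(r,w)$ edges and the unit capacities at the blue vertices. Your explicit remark that the loop may already return at some $\delta'\le\delta$, where $E_B^R(\tau\delta')\subseteq E_B^R(\tau\delta)$, handles correctly a point the paper leaves implicit.
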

\iffullappendix
\begin{proof}
	Any flow on $N_{\delta}$ can have value at most $t|R|$, as this value equals the sum of capacities of the edges incident to $w$.
	We construct a flow $\mathbf{f}^*$ with value \(t|R|\) as follows:
    For all $f = \{r,b_1,\ldots,b_t\}\in \F$, and all $i\leq t$, set $\mathbf{f}^*(v,b_i) = 1$, $\mathbf{f}^*(b_i,r) = 1$, and $\mathbf{f}^*(r,w)=t$. Set all remaining flow values to 0.
    By assumption, $d(b_i,r)\leq \tau\delta$ for all $i\leq t$, hence $N_{\delta}$ contains the edge $(b_i,r)$. 
    Consider $b\in B\cap P(\F)$. As $b$ is contained in exactly one fairlet of $\F$, there is only one outgoing edge at $b$ in $N_{\delta}$ with non-zero flow value. 
    Further, by construction, there is only one incoming edge at $b$. 
    Both these edges have flow value 1 in $\mathbf{f}^*$. 
    Similarly, we can argue that there are exactly $t$ non-zero incoming and one outgoing edge for every $r\in R$. 
    The incoming edges all have value 1, and the outgoing edge has value $t$ in the flow $\mathbf{f}^*$. 
    Hence, $\mathbf{f}^*$ is a valid flow of value $t|R|$.  
    As such a flow exists, we can find an integral flow $\mathbf{f}$ of value $t|R|$ in \Cref{alg-line:compute-max-flow-1:t}.
	
	Line~\ref{alg-line:add-r-to-fairlet-1:t} is the only place where we add red points to fairlets. This implies that every $f\in \F$ contains exactly 1 red point. 
    Let $r\in R$. For every edge $(b,r)\in E(\tau\delta)$ with $\mathbf{f}(b,r)=1$, we add $b$ to the fairlet containing $r$. As $|\mathbf{f}|=t|R|$, it must be $\mathbf{f}(r,w)=t$. 
    Hence, there must be exactly $t$ incoming edges with flow value 1. 
    This implies that every fairlet contains exactly $t$ blue points. 
	Every red point is added to a fairlet only once by \Cref{alg-line:add-r-to-fairlet-1:t}. As $\mathbf{f}(v,b)\in \{0,1\}$ for all $b\in B$, there exists at most one $r\in R$ such that $\mathbf{f}(b,r)=1$. Hence, every blue point is added to at most 1 fairlet. This implies that the fairlets are pairwise disjoint.
	$\F$ covers all red and $t|R|$ blue points. Hence, $|B|-t|R|$ blue points are not covered.
	The bound on the distances follows from the network construction.
\end{proof}
\fi

\begin{restatable}{lemma}{FindAgreeingFairletDecompOnet}
    \label{lem:find-agreeing-fairlet-decomposition-1:t}
	Let $\C$ be a $1:t$-fair $k$-center clustering with outliers. Then there exists a $1:t$-fairlet decomposition $\F$ that agrees with $\C$.
\end{restatable}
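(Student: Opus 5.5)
The plan is to build the fairlet decomposition cluster by cluster. Fix a $1:t$-fair clustering $\C=\{C_1,\ldots,C_k\}$ with outlier set $Z=P\setminus\bigcup_j C_j$; here $Z$ may contain red as well as blue points. By $1:t$-fairness, each cluster $C_j$ satisfies $|C_j\cap B| = t\,|C_j\cap R|$. Write $r_j \coloneqq |C_j\cap R|$. Enumerate the red points of $C_j$ as $x_1,\ldots,x_{r_j}$ and split the $t r_j$ blue points of $C_j$ arbitrarily into $r_j$ blocks $B_1,\ldots,B_{r_j}$ of size exactly $t$. Then set $f_{j,\ell}=\{x_\ell\}\cup B_\ell$ with anchor $\anc(f_{j,\ell})=x_\ell$. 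Each $f_{j,\ell}$ has exactly one red and $t$ blue points, so it is a $1:t$-fairlet in the sense of \Cref{def:t1:...:tm-fairlet}, and by construction $f_{j,\ell}\subseteq C_j$. Let $\F_0=\{f_{j,\ell}\}$.

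Next I would verify disjointness and agreement. Fairlets coming from the same cluster are disjoint because the blocks $B_\ell$ and the red points $x_\ell$ are distinct. Fairlets coming from different clusters are disjoint because the clusters are pairwise disjoint. Agreement in the sense of \Cref{def:fairlet-agree-with-clustering} holds trivially, since every fairlet lies inside the cluster it was cut from. Moreover $P(\F_0)=\bigcup_j C_j = P\setminus Z$: every inlier, red or blue, is covered.

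The main obstacle is the word \emph{maximal} in the definition of a fairlet decomposition, and this is exactly where red outliers cause trouble. Suppose some fairlet $g$ could be added to $\F_0$ while keeping the fairlets disjoint. Since $\F_0$ covers all inliers, $g\subseteq Z$, so $g$ consists of one red outlier and $t$ blue outliers. If $Z$ contains no such configuration, $\F_0$ is already maximal and we are done. This happens in particular whenever $Z\cap R=\emptyset$ or $|Z\cap B|<t$.

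Otherwise, any fairlet added to reach maximality lies entirely in $Z$, which no cluster contains. Hence no set of fairlets that properly extends $\F_0$ can agree with $\C$. I therefore do not expect a construction that is maximal over all of $P$ to exist in this case. I would handle it by reading maximality relative to the clustered point set $P\setminus Z$, the ground set on which agreement is meaningful. With respect to this ground set, $\F_0$ is maximal: it covers every point of $P\setminus Z$, so there is no room for an additional disjoint fairlet.

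I would then check that this reading is all that the later use of the lemma requires. That use applies the lemma to an optimal solution and feeds the resulting decomposition into \Cref{lem:properties-computed-fairlets-1:t}. What is needed there is only that each fairlet lies inside a single cluster of $\C$; this gives $d(r,b)\le 2\cdot$cost for the anchor $r$ and every blue point $b$ of the fairlet. Maximality plays no role in that argument, so the weaker reading should suffice.
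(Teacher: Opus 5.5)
Your construction, which splits each cluster $C$ into $|C\cap R|$ sets of one red and $t$ blue points, is exactly the paper's proof. The paper does not discuss maximality at all, so your core argument is correct and matches it.

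Your red-outlier detour is unnecessary, though, and its final claim is wrong. In this section the budget is $|B|-t|R|$ outliers, and fairness gives $|Z| = |B|-t|R|+(t+1)|Z\cap R|$. Hence $Z\subseteq B$, so $\F_0$ covers all of $R$ and is maximal in the strict sense. Maximality is not irrelevant downstream: \Cref{lem:properties-computed-fairlets-1:t} builds a flow of value $t|R|$ and therefore needs $|\F|=|R|$. Your weakened reading would not provide this if red outliers could occur.
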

\iffullappendix
\begin{proof}
	Consider $C\in \C$.	As $\C$ is $1:t$-fair, $|C\cap B|=t|C\cap R|$. Therefore, we can arbitrarily partition $C$ into $|C\cap R|$ sets, each consisting of one red and $t$ blue points. Call the set of such sets $\F_C$. The union $\F\coloneqq\bigcup_{C\in \C}\F_C$ is a $1:t$-fairlet decomposition that agrees with $\C$. 
\end{proof}
\fi

\begin{restatable}{corollary}{CostFairletDecompOnet}
    \label{cor:1-t-fairlet-decomposition-2OPT}
    Let $\OPTfair$ be the radius of an optimal $1:t$-fair $k$-center with outliers solution.
    \Cref{alg:compute-1-t-fairlets} computes a $1:t$-fairlet decomposition with $d(r,b) \leq 2\OPTfair$ for every $f\in \F$, the anchor $r\in f\cap R$ of $f$ and every blue point $b\in f\cap B$.
\end{restatable}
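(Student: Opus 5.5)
The plan is to combine \Cref{lem:find-agreeing-fairlet-decomposition-1:t} with \Cref{lem:properties-computed-fairlets-1:t}, instantiated with $\tau=2$ and $\delta=\OPTfair$. Fix an optimal $1:t$-fair $k$-center clustering with outliers $\C$ of radius $\OPTfair$, with centers $c_C$ for $C\in\C$. By \Cref{lem:find-agreeing-fairlet-decomposition-1:t} there is a $1:t$-fairlet decomposition $\F^*$ that agrees with $\C$. Each $f\in\F^*$ lies inside some $C\in\C$ and contains exactly one red point $r$, which we declare to be its anchor.

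The key step is the distance bound. For any blue point $b\in f$, both $r$ and $b$ lie in the same optimal cluster $C$, so the triangle inequality gives
\[
d(r,b)\le d(r,c_C)+d(c_C,b)\le 2\OPTfair .
\]
Thus $\F^*$ satisfies the hypothesis of \Cref{lem:properties-computed-fairlets-1:t} with threshold $\tau\delta=2\OPTfair$.

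It remains to check that $\OPTfair\in D$, so that the loop of \Cref{alg:compute-1-t-fairlets} actually reaches this guess. The radius of the optimal solution is a distance between a center and an assigned point, and both are points of $P$, so $\OPTfair\in D$. The degenerate case $\OPTfair=0$ is covered because $d(x,x)=0\in D$. Because the loop scans $D$ in increasing order, it stops at the first $\delta$ for which a flow of value $t|R|$ exists. That $\delta$ is at most $\OPTfair$, since at $\delta=\OPTfair$ \Cref{lem:properties-computed-fairlets-1:t} (its flow construction) guarantees a flow of value $t|R|$.

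The returned fairlets therefore satisfy $d(r,b)\le 2\delta\le 2\OPTfair$, by the network construction invoked in \Cref{lem:properties-computed-fairlets-1:t}. The only mild subtlety is this early-termination argument. Monotonicity handles it: a smaller $\delta$ yields a subnetwork, but we only need that the returned $\delta$ satisfies $\delta\le\OPTfair$, and that holds by the order of the scan.
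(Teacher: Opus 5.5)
Your proposal is correct and follows the paper's proof. You obtain an agreeing fairlet decomposition from \Cref{lem:find-agreeing-fairlet-decomposition-1:t}, bound its anchor-to-blue distances by $2\OPTfair$, and apply \Cref{lem:properties-computed-fairlets-1:t} with $\tau=2$ and $\delta=\OPTfair\in D$. Your explicit triangle-inequality step and your remark that the scan may stop at some $\delta\le\OPTfair$ (which only tightens the bound) spell out details the paper leaves implicit.
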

\iffullappendix
\begin{proof}
    By \Cref{lem:find-agreeing-fairlet-decomposition-1:t}, there exists a fairlet decomposition with diameter $2\OPTfair$. As $\OPTfair\in D$, \Cref{lem:properties-computed-fairlets-1:t} with $\delta = \OPTfair$ and $\tau=2$ implies that \Cref{alg:compute-1-t-fairlets} finds such a fairlet decomposition.
\end{proof}
\fi

\begin{restatable}{lemma}{RunningTimeComputeFairletsOnet}
    \label{lem:running-time-compute-1:t-fairlets}
    The overall running time of \Cref{alg:compute-1-t-fairlets} is in $O(n^2\log n + \frac{n^{2+o(1)}}{t}\log n)$ if it uses binary search to find the smallest $\delta\in D$ such that a maximum flow of value $t|R|$ is found in the network $N_{\delta}$. 
\end{restatable}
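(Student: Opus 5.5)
The running time splits into the cost of building and sorting the distance set $D$, and the cost of the binary search over $D$, each step of which builds $N_\delta$ and computes one maximum flow.

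\textbf{Preprocessing.} The set $D$ has at most $\binom{n}{2}+1 = O(n^2)$ values. Computing all pairwise distances takes $O(n^2)$ time, assuming constant-time distance evaluations. Sorting them takes $O(n^2\log n)$, which is the first term of the bound.

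\textbf{Monotonicity.} Binary search is correct because of a monotonicity property. If $\delta\le\delta'$, then $E_B^R(\tau\delta)\subseteq E_B^R(\tau\delta')$, so every flow in $N_\delta$ is also a flow in $N_{\delta'}$. The maximum flow value is therefore nondecreasing in $\delta$. By the same reasoning as in \Cref{lem:properties-computed-fairlets-1:t}, it is capped at $t|R|$. Hence the predicate ``the maximum flow has value $t|R|$'' is monotone along the sorted list $D$, and binary search finds the smallest feasible $\delta$ using $O(\log |D|)=O(\log n)$ flow computations. Since \Cref{cor:1-t-fairlet-decomposition-2OPT} guarantees that $\delta=\OPTfair$ is feasible, the smallest feasible $\delta$ is at most $\OPTfair$. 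The output guarantee of the linear scan is therefore preserved. Extracting the fairlets from the final flow (Lines~\ref{alg-line:start-constructing-fairlets-1:t}--\ref{alg-line:end-constructing-fairlets-1:t}) takes time linear in the network size, which is dominated by the other costs.

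\textbf{Cost of one flow computation.} The network has $|V|=n+2$ vertices. Its edges are $|B|$ source edges, $|R|$ sink edges, and at most $|B|\cdot|R|$ middle edges. Building it takes $O(|B||R|+n)$ time. I would use the almost-linear time maximum flow algorithm (Chen, Kyng, Liu, Peng, Probst Gutenberg, Sachdeva), which runs in $m^{1+o(1)}$ on an integral network with $m$ edges and polynomially bounded capacities; here all capacities are at most $t\le n$. The key counting step is to bound $|B||R|$ in terms of $t$. From the assumption $|B|\ge t|R|$ and $|B|+|R|=n$ we get $|R|\le n/(t+1)$. Hence $m=O(|B||R|+n)=O(n^2/t + n)$, and one flow costs $(n^2/t+n)^{1+o(1)}$. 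Multiplying by the $O(\log n)$ binary-search steps gives the second term $\frac{n^{2+o(1)}}{t}\log n$.

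\textbf{Main obstacle.} The delicate point is the exponent bookkeeping, namely getting the $1/t$ factor outside the $o(1)$. One has to check that $(n^2/t)^{1+o(1)}\le \frac{n^{2}}{t}\cdot n^{o(1)}$. This holds because $t\ge1$, so $(n^2/t)^{o(1)}\le n^{o(1)}$. The additive $n$ inside the edge count also needs care in the regime $t\approx n$. There I would argue that it is absorbed by the $O(n^2\log n)$ sorting term, or simply note that $n^{1+o(1)}\log n$ is subsumed by the first term, so the stated bound holds in all cases.
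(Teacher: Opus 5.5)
Your proof is correct and follows the same route as the paper. You sort the $O(n^2)$ distances, use $|B|\ge t|R|$ to bound $|R|$ (you get $n/(t+1)$, the paper $n/t$) and hence the edge count by $O(n^2/t)$, apply an almost-linear-time max-flow algorithm, and multiply by the $O(\log n)$ binary-search steps. Your monotonicity argument for the binary search, and your handling of the $n^{o(1)}$ factor and of the additive $n$ source/sink edges, make explicit what the paper's proof leaves implicit.
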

\iffullappendix
\begin{proof}
    The set of pairwise distances $D$ fulfills $|D|\in O(n^2)$, hence sorting this list can be done in $O(n^2\log n)$.
    By assumption, $|B|\ge t\cdot|R|$, therefore $|R|\le \frac{n}{t}$.
    For a given $\delta$, the network $N_{\delta}$ can be constructed in $O(|R|\cdot |B|)\subseteq O(\frac{n^2}{t})$. 
    The subsequent maximum flow computation can be done in $O(|E(\tau\delta)|^{1+o(1)}) \subseteq O(\frac{n^{2+o(1)}}{t})$ using the algorithm by~\cite{brand2023maxflowalmostlinear}.
    The construction of the fairlets, performed in Lines~\ref{alg-line:start-constructing-fairlets-1:t} to \ref{alg-line:end-constructing-fairlets-1:t}, is only done once if the condition $|\mathbf{f}|=t|R|$ is fulfilled, and incurs a total running time of $O(|R|\cdot |B|)\subseteq O(\frac{n^2}{t})$, which is dominated by the running time for constructing a network and computing the maximum flow. Finding the smallest $\delta\in D$ such that $N_{\delta}$ carries flow of value $t|R|$ can be done in $O(\frac{n^{2+o(1)}}{t}\log n)$ when performing a binary search over $D$.
\end{proof}
\fi

\section{Analysis of the \texorpdfstring{$1:t_2:\ldots:t_m$}{1:t2:...:tm}-fair decomposition}\label{appendix:analysis-1-many-t}

\begin{lemma}\label{lem:properties-computed-fairlets-1:...:tm}
    Let $D=\{d(p,q) \mid p,q\in P\}$ and let $\delta\in D$ such that for all $2\le i \le m$, there exists a $1:t_i$-fairlet decomposition $\F_i$ of $H_1\cup H_i$ and $d(h_1,h_i) \le \tau\delta$ for $h_1\in f\cap H_1$ and $h_i \in f\cap H_i$ for all $f\in \F$. 
    Then, \Cref{alg:compute-1:t2:...:tm-fairlets} computes a $1:t_2:\ldots:t_m$-fairlet decomposition $\F$ of $H_1\cup\ldots\cup H_m$ excluding $|H_i|-t_i|H_1|$ outliers from $H_i$ for all $2\le i \le m$ such that $\max_{p\in f}d(p,h_1) \le \tau\delta$, where $h_1\in f\cap H_1$. 
\end{lemma}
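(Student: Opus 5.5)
The plan is to reduce everything to \Cref{lem:properties-computed-fairlets-1:t}, applied once per color, and then check that gluing the pairwise fairlets along their common $H_1$-point yields valid $1:t_2:\ldots:t_m$-fairlets.

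\textbf{Step 1: one call per color.} Fix $2\le i\le m$. The hypothesis provides a $1:t_i$-fairlet decomposition $\F_i$ of $H_1\cup H_i$ in which every point of $H_i$ lies within $\tau\delta$ of its fairlet's $H_1$-anchor. This is exactly the premise of \Cref{lem:properties-computed-fairlets-1:t} with $R=H_1$, $B=H_i$ and $t=t_i$. By the lemma (and its proof), the call in \Cref{alg-line:compute1:t1:...:tm-fairlets-call-compute-1:t-fairlets} returns a decomposition $\F_i$ with the following properties:
\begin{itemize}
\item it consists of exactly one fairlet per point $h_1\in H_1$, and that fairlet is anchored at $h_1$;
\item each such fairlet contains exactly $t_i$ points of $H_i$, each within distance $\tau\delta$ of $h_1$;
\item the fairlets are pairwise disjoint and together leave $|H_i|-t_i|H_1|$ points of $H_i$ uncovered.
\end{itemize}
One detail needs care. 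The algorithm iterates over $\delta$ in increasing order, so the $\delta'$ at which it stops may be smaller than the given $\delta$. This only helps: $\tau\delta'\le\tau\delta$, and the lemma still guarantees a successful iteration at the latest when the given $\delta$ is reached.

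\textbf{Step 2: gluing.} For each $h_1\in H_1$, let $f$ be the union of the unique fairlets $f_i\in\F_i$ containing $h_1$, for $i=2,\ldots,m$.
\begin{itemize}
\item Each $f_i$ lies inside $H_1\cup H_i$ and meets $H_1$ only in $h_1$. Hence $f\cap H_1=\{h_1\}$ and $f\cap H_i=f_i\cap H_i$, which has size $t_i$. So $f$ is a $1:t_2:\ldots:t_m$-fairlet.
\item Take distinct $h_1,h_1'\in H_1$ with glued fairlets $f,f'$. The $H_i$-parts of $f$ and $f'$ come from distinct fairlets of $\F_i$, which are disjoint. Since the color classes are also disjoint, $f\cap f'=\emptyset$.
\item Every $p\in f$ is either $h_1$ itself or lies in some $f_i\cap H_i$, so $d(p,h_1)\le\tau\delta$ by Step 1.
\item Points of $H_1$ are all covered, and from each $H_i$ exactly $t_i|H_1|$ points are covered. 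This gives exactly $|H_i|-t_i|H_1|$ outliers in $H_i$, matching the maximum possible number of fairlets, $|H_1|$, so the decomposition is maximal.
\end{itemize}

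\textbf{Main obstacle.} The hard part is bookkeeping, not analysis. The step that needs the most care is the claim that each $\F_i$ contains exactly one fairlet per $h_1\in H_1$, anchored at $h_1$. This is what makes the gluing in \Cref{alg-line:glue-small-fairlet-1:t1:...:tm} well defined: it produces exactly one fairlet per $h_1$, not several and not a partial one. The property follows because the flow returned in \Cref{alg-line:compute-max-flow-1:t} saturates every edge $(h_1,w)$, and I would extract it explicitly from the proof of \Cref{lem:properties-computed-fairlets-1:t}.
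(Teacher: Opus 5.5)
Your proposal is correct and follows essentially the same route as the paper. Both arguments obtain each $\F_i$ (one fairlet per $h_1\in H_1$) from the $1:t_i$ subroutine, glue the fairlets along their common $H_1$-point, and read off the fairlet property, disjointness, the outlier count and the radius bound $\tau\delta$ directly from the $\F_i$. You are slightly more explicit than the paper, which uses these properties of the $\F_i$ without comment, in that you cite \Cref{lem:properties-computed-fairlets-1:t} and note that each call may stop at some $\delta'\le\delta$, which only helps.
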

\iffullappendix
\begin{proof}
    The algorithm computes $\F = \{\bigcup_{2\le i \le m} f_i^{h_1} \mid h_1\in H_1\}$, where $f_i^{h_1}$ is a fairlet in $\F_i$ such that $h_1\in F_i$ for a given $h_1\in H_1$. As $\F_i$ is a valid fairlet decomposition, there is exactly one such $f_i^{h_1}$ per $i$ and $h_1$. Hence, for every $h_1\in H_1$, there exists exactly one $f\in \F$ such that $h_1\in f$, i.\,e., $H_1$ is completely covered by $\F$. 
    By construction, every $f\in \F$ contains exactly one $h_1\in H_1$ and $t_i$ points from $H_i$ for all $2\le i \le m$.

    Let $f\in \F$. For all $2\le i\le m$ and $f_i\in \F_i$ such that $f_i\subseteq f$ it is $f\cap H_i = f_i \cap H_i$, which implies that $\F$ excludes exactly $|H_i|-t_i|H_1|$ outliers from $H_i$ for all $2\le i \le m$.
    The fairlets in $\F$ need to be pairwise disjoint, as otherwise, one of the $\F_i$ would violate the disjointness property.
    For the radius property, consider $f\in \F$ and its unique point $h_1\in f\cap H_1$ from $H_1$. Then, $\max_{p\in f}d(p,h_1) = \max_{2\le i \le m}\max_{p\in f_i^{h_1}}d(p,h_1) \leq \tau\delta$, where $f_i^{h_1}$ is defined as above. 
\end{proof}
\fi
In the final algorithm, we only need to change the way the fairlets are computed; everything else is done as in the $1:t$-case.

\begin{corollary} \label{cor:1-t2-...-tm-fairlet-decomposition-2OPT}
    Let $\OPTfair$ be the radius of an optimal $1:t_2:\ldots:t_m$-fair $k$-center with outliers solution.
    \Cref{alg:compute-1:t2:...:tm-fairlets} computes a $1:t_2:\ldots:t_m$-fairlet decomposition with $d(h_1,p) \leq 2\OPTfair$ for every $f\in \F$, the anchor $h_1\in f\cap H_1$ of $f$ and every blue point $p\in f\cap (H_2\cup\ldots\cup H_m)$.
\end{corollary}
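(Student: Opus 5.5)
The plan is to mirror the proof of \Cref{cor:1-t-fairlet-decomposition-2OPT}, reducing the multicolor statement to the two-color case via \Cref{lem:properties-computed-fairlets-1:...:tm}. Fix an optimal $1:t_2:\ldots:t_m$-fair $k$-center clustering with outliers $\C^*$ of radius $\OPTfair$. For each $2\le i\le m$, I will exhibit a $1:t_i$-fairlet decomposition $\F_i$ of $H_1\cup H_i$ in which every point of $H_i$ lies within distance $2\OPTfair$ of the $H_1$-point of its fairlet. The corollary then follows from \Cref{lem:properties-computed-fairlets-1:...:tm} with $\tau=2$ and $\delta=\OPTfair$.

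\textbf{Constructing $\F_i$.} Restrict each cluster $C\in\C^*$ to the colors $1$ and $i$. By fairness, $|C\cap H_i|=t_i|C\cap H_1|$, so $C\cap(H_1\cup H_i)$ is $1:t_i$-fair. Following \Cref{lem:find-agreeing-fairlet-decomposition-1:t}, I will partition it arbitrarily into sets consisting of one point of $H_1$ and $t_i$ points of $H_i$. Two points $h_1,h_i$ in the same fairlet lie in the same optimal cluster with center $c$, so $d(h_1,h_i)\le d(h_1,c)+d(c,h_i)\le 2\OPTfair$.

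\textbf{Checking the hypotheses of \Cref{lem:properties-computed-fairlets-1:...:tm}.} No optimal outlier lies in $H_1$. Indeed, the inliers of color $i$ number exactly $t_i$ times the inliers of color $1$, and the problem excludes exactly $|H_i|-t_i|H_1|$ points of $H_i$. Hence $\F_i$ covers all of $H_1$, and it is a valid fairlet decomposition of $H_1\cup H_i$ of the required size. Moreover, $\OPTfair\in D$ because the optimal radius is a pairwise distance. Therefore, in the iteration with $\delta=\OPTfair$ (or an earlier one, which gives an even smaller threshold), the call to \textsc{compute-1:$t_i$-fairlets} succeeds with threshold $2\OPTfair$ by \Cref{lem:properties-computed-fairlets-1:t}. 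The glued fairlets then satisfy $d(h_1,p)\le 2\OPTfair$ for every $p$ in the fairlet.

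\textbf{Main obstacle.} The main subtlety is bookkeeping. Each call in \Cref{alg:compute-1:t2:...:tm-fairlets} runs its own search over $\delta$ and returns the first feasible value, so each $\F_i$ comes with its own $\delta_i\le\OPTfair$. I will handle this by applying \Cref{lem:properties-computed-fairlets-1:t} separately for each $i$ to get the bound $2\delta_i\le 2\OPTfair$, and then invoking the gluing argument of \Cref{lem:properties-computed-fairlets-1:...:tm} with the common bound $2\OPTfair$.
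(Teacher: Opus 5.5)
Your proposal is correct and follows the paper's proof. You restrict each optimal cluster to colors $1$ and $i$, decompose it into $1:t_i$-fairlets of diameter at most $2\OPTfair$ as in \Cref{lem:find-agreeing-fairlet-decomposition-1:t}, and apply \Cref{lem:properties-computed-fairlets-1:...:tm} with $\tau=2$ and $\delta=\OPTfair\in D$, which is exactly the paper's argument. Your explicit remarks that no optimal outlier lies in $H_1$ and that each call to \textsc{compute-1:$t_i$-fairlets} may stop at its own $\delta_i\le\OPTfair$ only make precise what the paper leaves implicit.
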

\iffullappendix
\begin{proof}
    Let $2\le i \le m$.
    As $|H_i\cap C| = t_i|H_1\cap C|$ for all $C\in \C^*$, $\C^*_{i} \coloneqq \{C\cap (H_1\cup H_i) \mid C\in \C^*\}$ is a $1:t_i$-fair $k$-center clustering excluding $|H_i|-t_i|H_1|$ outliers from $H_i$. By \Cref{lem:find-agreeing-fairlet-decomposition-1:t}, there exists a $1:t_i$-fairlet decomposition $\F_i$ that agrees with $\C^*$.
    This fairlet decomposition $\F_i$ has diameter at most $2\OPT$.
    As $\OPT\in D$, \Cref{lem:properties-computed-fairlets-1:...:tm} with $\delta = \OPT$ and $\tau = 2$ implies that \Cref{alg:compute-1:t2:...:tm-fairlets} yields a $1:t_2:\ldots:t_m$-fairlet decomposition $\F$ with $\max_{p\in f}d(p,h_1) \le 2\OPT$ for all $f\in \F$ and $h_1\in f\cap H_1$.
\end{proof}
\fi

\begin{lemma} \label{lem:running-time-1:t2:...:tm-fairlets}
    \Cref{alg:compute-1:t2:...:tm-fairlets} (\textsc{compute-$1:t_2:\ldots:t_m$-fairlets}) runs in $O(n^{2+o(1)}\log n)$.
\end{lemma}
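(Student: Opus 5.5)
The plan is to reduce the running time of \Cref{alg:compute-1:t2:...:tm-fairlets} to that of its $m-1$ calls to \Cref{alg:compute-1-t-fairlets}, using \Cref{lem:running-time-compute-1:t-fairlets}, plus the cost of the gluing loop. The delicate point is that summing the $m-1$ bounds naively gives $O(m\, n^{2+o(1)}\log n)$, and $m$ may be as large as $\Theta(n)$. So I will not bound each call in terms of $n$; I will bound the $i$-th call in terms of the size of the subinstance $H_1\cup H_i$ it actually operates on.

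\textbf{Step 1 (one call).} Write $n_i \coloneqq |H_1|+|H_i|$. The call \textsc{compute-1:$t_i$-fairlets}$(P,H_1,H_i,d,k,\tau)$ builds flow networks on $H_1\cup H_i\cup\{v,w\}$ with at most $|H_1|\cdot|H_i|$ middle edges. I restrict the candidate radii to $D_i=\{d(x,y)\mid x\in H_1, y\in H_i\}$; this is harmless, since feasibility only changes at such distances. Then $|D_i|\le n_i^2$, sorting costs $O(n_i^2\log n_i)$, and binary search with the almost-linear max-flow algorithm of~\cite{brand2023maxflowalmostlinear} costs $O(|H_1||H_i|^{1+o(1)}\log n)$. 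This is exactly the argument of \Cref{lem:running-time-compute-1:t-fairlets}, applied to the subinstance.

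\textbf{Step 2 (summing over $i$).} Since $|H_1|\le |H_i|$, each term is at most $O(|H_1|\,|H_i|\,n^{o(1)}\log n)$. Summing over $i=2,\dots,m$ gives
\[
O\Bigl(|H_1|\, n^{o(1)}\log n\sum_{i\ge 2}|H_i|\Bigr)\subseteq O(n^{2+o(1)}\log n).
\]
The sorting terms are handled the same way: $\sum_i n_i^2\log n \le \sum_i 4|H_i|^2\log n\le 4n^2\log n$. If a tighter sort were needed, one could instead sort the global $D$ once in $O(n^2\log n)$ and filter it per color.

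\textbf{Step 3 (gluing).} Store each $\F_i$ as an array indexed by $h_1\in H_1$. Assembling the glued fairlet for $h_1$ then costs $O(\sum_i(t_i+1))$, so the whole gluing loop costs $O(n+m|H_1|)\subseteq O(n^2)$.

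The main obstacle is Step 2, namely avoiding the factor $m$ from naively applying \Cref{lem:running-time-compute-1:t-fairlets} $m-1$ times with $n$ as the size parameter. The fix is to charge each call only its own edge count $|H_1||H_i|$, which telescopes to $|H_1|\cdot n$. The same charging must also be applied to the per-call sorting and network-construction costs.
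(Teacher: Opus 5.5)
Your proof is correct and has the same skeleton as the paper's: bound each call of \Cref{alg:compute-1-t-fairlets} by the size of its own subinstance $H_1\cup H_i$, sum over $i$, and add the gluing cost. The difference is in how the sum is controlled.

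\textbf{The paper's summation.} It applies \Cref{lem:running-time-compute-1:t-fairlets} as a black box with $n_i=|H_1|+|H_i|$. It then uses $\sum_{i\ge 2} n_i=(m-2)|H_1|+n\le 2n$, which needs $|H_1|\le n/m$. Finally it implicitly uses superadditivity of $x\mapsto x^{2+o(1)}$, i.e.\ $\sum_i n_i^{2+o(1)}\le(\sum_i n_i)^{2+o(1)}$.

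\textbf{Your summation.} You open up each call and charge it its edge count $|H_1||H_i|$. This sums linearly to at most $|H_1|\,n$, and you bound the sorting separately via $\sum_i |H_i|^2\le n^2$. This avoids the convexity step and gives the slightly sharper flow bound $O(|H_1|\,n^{1+o(1)}\log n)$.

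\textbf{What your version handles that the paper's glosses over.} As written, \Cref{alg:compute-1-t-fairlets} receives all of $P$ and sorts $D$ over all pairs of $P$ in every call. So substituting $n_i$ for $n$ is only justified after a restriction like your $D_i$. Also, the paper's displayed chain passes through $\sum_{i=2}^m n^{2+o(1)}$, which, read literally, carries exactly the factor $m$ you warn about. The gluing bounds differ only cosmetically: the paper gets $O(n^2/m)$, you get $O(n+m|H_1|)$, and both are dominated.

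\textbf{Two precision remarks.}

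First, restrict the \emph{threshold} $T=\tau\delta$ to $D_i$ (equivalently $\delta\in D_i/\tau$), not $\delta$ itself. Feasibility can only change when $\tau\delta$ crosses an element of $D_i$. Since $\OPT$ need not lie in $D_i$, searching over $\delta\in D_i$ with threshold $\tau\delta$ could overshoot $\OPT$ and weaken the $2\OPT$ guarantee of \Cref{cor:1-t2-...-tm-fairlet-decomposition-2OPT}. The running time is unaffected either way, so this does not hurt the lemma itself.

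Second, your fallback of sorting the global $D$ once and then filtering it per color costs $\Theta(n^2)$ per color, which reintroduces the factor $m$. If you want that variant, binary-search directly over the globally sorted $D$. This needs only $O(\log n)$ network constructions and max-flow calls, each of size $O(|H_1||H_i|)$, per color.
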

\iffullappendix
\begin{proof}
    For every $i=2,\ldots,m$, we compute a $1:t_i$-fairlet decomposition $\F_i$ of $H_1\cup H_i$ using the $\textsc{compute-$1:t_i$-fairlets}$ algorithm. By \Cref{lem:running-time-compute-1:t-fairlets}, one fairlet decomposition can be computed in $O(\frac{n_i^{2+o(1)}}{t_i}\log n_i)$, where $n_i = |H_1|+|H_i|$. It is $\sum_{i=2}^m n_i = (m-1)|H_1|+n-|H_1| = (m-2)|H_1| + n \le 2n$ because of $|H_1|\le \frac{n}{m}$. For all fairlet decompositions, it follows a running time of $O(\sum_{i=2}^m \frac{n_i^{2+o(1)}}{t_i}\log(n_i)) \subseteq O(\log(n) \sum_{i=2}^m n^{2+o(1)}) \subseteq O(\log n (\sum_{i=2}^m n_i)^{2+o(1)}) \subseteq O(n^{2+o(1)}\log n)$.
    Then, we construct the $1:t_2:\ldots:t_m$-fairlet decomposition by iterating over $H_1$ and combining all $1:t_i$-fairlets that share a point in $H_1$. If, for every fairlet $f$, we store a pointer to its anchor, which corresponds to the point in $f\cap H_1$, then we get a running time of
    $O(|H_1|\sum_{i=2}^m |F_i|) = O(|H_1|\sum_{i=2}^m |H_1|) \subseteq O(\frac{n^2}{m})$ for the construction of the fairlets, which is dominated by the running time of the first loop.
\end{proof}
\fi

\begin{theorem} \label{thm:1-t1-...-tm-unbalanced} 
	\Cref{alg:general-case-outlier-algorithm} with \textsc{compute-fairlets} = \textsc{compute-$1:t_2:\ldots:t_m$-fairlets} (\Cref{alg:compute-1:t2:...:tm-fairlets}) computes a 4-approximation for $1:t_2:\ldots:t_m$-fair $k$-center with $|H_i| - t_i|H_1|$ outliers in $H_i$ in time $O(n^{2+o(1)}\log n)$. 
\end{theorem}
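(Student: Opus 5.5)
The plan mirrors the proof of \Cref{thm:1-t-unbalanced}: combine the framework bound of \Cref{lem:4-approximation} with the $2\OPTfair$ bound on the fairlets and the $2\OPTfair$ bound on the centers, and add up the running times.

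\textbf{Cost.} Let $(C,\alpha)$ be the output of \Cref{alg:general-case-outlier-algorithm} with \textsc{compute-fairlets} $=$ \Cref{alg:compute-1:t2:...:tm-fairlets}, and let $\F$ be the computed fairlets. Each fairlet has its anchor in $H_1$, and by \Cref{lem:properties-computed-fairlets-1:...:tm} every point of $H_1$ lies in exactly one fairlet, so the anchor set is $A=H_1$. By \Cref{lem:4-approximation}, the output is a fair clustering with outliers $P\setminus P(\F)$ and cost at most $\varphi+\rho$.

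\Cref{lem:properties-computed-fairlets-1:...:tm} also shows that $P\setminus P(\F)$ contains exactly $|H_i|-t_i|H_1|$ points of $H_i$ for each $i\ge 2$ and no point of $H_1$. So the output is a feasible solution of the required type.

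For the fairlet radius, \Cref{cor:1-t2-...-tm-fairlet-decomposition-2OPT} gives $\varphi\le 2\OPTfair$. That corollary relies on \Cref{lem:properties-computed-fairlets-1:...:tm} applied with $\delta=\OPTfair\in D$ and $\tau=2$, which needs one $1:t_i$-fairlet decomposition per color $i$. These come from restricting an optimal clustering $\C^*$ to $H_1\cup H_i$ and invoking \Cref{lem:find-agreeing-fairlet-decomposition-1:t}. Any two points in a common cluster of $\C^*$ are at distance at most $2\OPTfair$, so the edges needed by the flow exist.

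For the center radius, $\C^*$ is a valid $k$-center solution on its inliers, and these contain $A=H_1$. The proof of \Cref{lem:compute-centers-2OPT-Gonzalez} only needs every point of $A$ to lie in some cluster of radius $S$, since the pigeonhole step is applied to the $k+1$ traversal points, which all lie in $A$. Taking $S=\OPTfair$ therefore gives $\rho\le 2\OPTfair$. Together, the cost is at most $4\OPTfair$.

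\textbf{Running time.} Apply \Cref{lem:running-time-framework-outlier}. Here $T_\F\in O(n^{2+o(1)}\log n)$ by \Cref{lem:running-time-1:t2:...:tm-fairlets}. Farthest-first traversal on $A$ gives $T_C\in O(|H_1|k)\subseteq O(n^2)$, and the assignment step costs $O(|\F|k+n)\subseteq O(n^2)$. The total is $O(n^{2+o(1)}\log n)$.

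\textbf{Main obstacle.} The delicate point is the interface between the per-color searches and the joint guarantee. \Cref{alg:compute-1:t2:...:tm-fairlets} calls \textsc{compute-$1:t_i$-fairlets} separately for each $i$, and each call may stop at its own smallest feasible $\delta_i$. I must check that $\delta_i\le\OPTfair$ for every $i$; this holds because $\OPTfair$ is feasible for every color at once. With that, every glued fairlet has radius at most $2\OPTfair$ around its $H_1$-anchor, and the rest is assembling the cited lemmas.
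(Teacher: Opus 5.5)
Your proposal is correct and follows the paper's proof essentially step for step. You bound the cost by $\varphi+\rho$ via \Cref{lem:4-approximation}, take $\varphi\le 2\OPTfair$ from the $1:t_2:\ldots:t_m$ fairlet lemma and its corollary, take $\rho\le 2\OPTfair$ from \Cref{lem:compute-centers-2OPT-Gonzalez} with anchor set $H_1$, and add up the running times via \Cref{lem:running-time-framework-outlier}. Your two extra remarks are sound and make explicit what the paper leaves implicit: the farthest-first bound only needs $H_1$ to be covered by the inliers of the optimal solution, and each per-color search stops at some $\delta_i\le\OPTfair$.
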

\iffullappendix
\begin{proof}
    Let $(C,\alpha)$ be the clustering computed by \Cref{alg:general-case-outlier-algorithm} using the fairlet decomposition $\F$.
    By \Cref{lem:4-approximation}, the cost of $(C,\alpha)$ is upper bounded by $\varphi + \rho$, where $\varphi = \max_{f\in \F}\max_{p\in f}d(p,\anc(f))$ and $\rho= \max_{a\in H_1}\min_{c\in C}d(a,c)$, as the set of anchors equals $H_1$.
    Let $\OPTfair$ denote the cost of an optimal $1:t_2:\ldots:t_m$-fair clustering with outliers.
    \Cref{lem:properties-computed-fairlets-1:...:tm} implies $\varphi \le 2\OPTfair$
    and \Cref{lem:compute-centers-2OPT-Gonzalez} with $S=\OPTfair$ implies $\rho \le 2\OPTfair$.
    By \Cref{lem:running-time-framework-outlier}, \Cref{alg:general-case-outlier-algorithm} has a running time of $O(T_{\F}+T_C + |\F|\cdot k + n)$, where $T_{\F}$ is the running time of \textsc{compute-fairlets} and $T_C$ is the running time of \textsc{compute-centers}. Here, $T_{\F} \in O(n^{2+o(1)}\log n)$ by \Cref{lem:running-time-1:t2:...:tm-fairlets} and $T_C\in O(|\Representatives|\cdot k) = O(|H_1|\cdot k) \subseteq O(n^2)$ \cite{Gonz85}. The size of the fairlet decomposition $\F$ can be bounded by $|H_1|\leq n$.
\end{proof}
\fi

\section{Algorithm and analysis for \texorpdfstring{$t_1:t_2:\ldots:t_m$}{t1:t2:...:tm}-fair decompositions}\label{appendix:generalcase}

Consider the most general case that there are $m$ colors $H_1,H_2,\ldots H_m$ in the ratio $t_1:t_2:\ldots:t_m$ for positive integers $t_1,t_2,\ldots,t_m$. Assume that $|H_1|>0$ is a multiple of $t_1$. We want to find $C_1,...,C_k$ such that $t_i\cdot|H_j\cap C_{\ell}| = t_j\cdot |H_i\cap C_{\ell}|$ for all $\ell\le k$ and all $1\le i,j\le m$. Assume that $|H_i| \geq \frac{t_i}{t_1}|H_1|$ for all $2\le i \le m$. A fair solution excludes $|H_i|-\frac{t_i}{t_1}|H_1|$ outliers from $H_i$. 

\begin{lemma}\label{lem:find-agreeing-fairlet-decomposition-s1:s2:...:sm}
	Let $\C$ be a $t_1:t_2:\ldots:t_m$-fair $k$-center clustering with outliers. Then there exists a $t_1:t_2:\ldots:t_m$-fairlet decomposition $\F$ that agrees with $\C$.
\end{lemma}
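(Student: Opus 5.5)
The plan mirrors the proof of \Cref{lem:find-agreeing-fairlet-decomposition-1:t}: decompose each cluster of $\C$ separately into fairlets and take the union of these local decompositions.

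Fix a cluster $C\in\C$ and write $n_i \coloneqq |C\cap H_i|$. Since $C$ is $t_1:t_2:\ldots:t_m$-fair, \eqref{eq:fairness:new} gives $n_i/|C| = t_i/T$ with $T=\sum_\ell t_\ell$. Hence $t_j n_i = t_i n_j$ for all $i,j$, so the ratio $q \coloneqq n_i/t_i$ is the same for every $i\le m$. The first key step is to show that $q$ is a nonnegative integer. I would argue this via coprimality: from $\gcd(t_1,\ldots,t_m)=1$ there are integers $\lambda_1,\ldots,\lambda_m$ with $\sum_i\lambda_i t_i=1$. Then
\[
q=q\sum_i\lambda_i t_i=\sum_i \lambda_i n_i\in\mathbb{Z}.
\]
This is the only nontrivial step, and it is exactly where the standing assumption $\gcd(t_1,\ldots,t_m)=1$ is needed.

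With $q\in\mathbb{Z}_{\ge 0}$ in hand, we have $|C\cap H_i| = q\,t_i$ for each color $i$. Partition each $C\cap H_i$ arbitrarily into $q$ blocks of size $t_i$. Then, for each $j\le q$, take the union over $i$ of the $j$-th blocks. This yields $q$ pairwise disjoint sets $f\subseteq C$, each with $|f\cap H_i|=t_i$, so each is a $t_1:\ldots:t_m$-fairlet in the sense of \Cref{def:t1:...:tm-fairlet}. Call this collection $\F_C$. Note that $\F_C$ covers $C$ entirely.

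Finally, set $\F \coloneqq \bigcup_{C\in\C}\F_C$. The fairlets are pairwise disjoint because the clusters of $\C$ are disjoint, and every $f\in\F$ lies inside a single cluster, so $\F$ agrees with $\C$ by \Cref{def:fairlet-agree-with-clustering}. It remains to check that $\F$ is a fairlet decomposition, which the paper defines as a \emph{maximal} set of disjoint fairlets. $\F$ covers all of $H_1$, since the outliers of the clustering come only from colors $i\ge 2$. Any further fairlet would require $t_1\ge 1$ points of $H_1$, none of which remain, so no fairlet can be added and $\F$ is maximal. Together with the count $|\F|=|H_1|/t_1$, this matches the intended notion of a decomposition covering $H_1$.
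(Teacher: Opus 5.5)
Your proof is correct and takes the same route as the paper: split each cluster $C$ into $|C\cap H_1|/t_1$ fairlets with $t_i$ points from each $H_i$, then take the union over all clusters. You also make explicit two points the paper's proof leaves implicit: that $|C\cap H_1|/t_1$ is an integer (your B\'ezout argument using $\gcd(t_1,\ldots,t_m)=1$), and that the union is maximal because, in the one-sided setting of this section, no point of $H_1$ is an outlier.
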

\iffullappendix
\begin{proof}
	Consider $C\in \C$.	As $\C$ is $t_1:t_2:\ldots:t_m$-fair, $t_1|C\cap H_i|=t_i|C\cap H_1|$. Therefore, we can arbitrarily partition $C$ into $\frac{|C\cap H_1|}{t_1}$ sets, each consisting of $t_i$ points from $H_i$ for all $1\le i \le m$. Call the set of such sets $\F_C$. The union $\F\coloneqq\bigcup_{C\in \C}\F_C$ is a $t_1:t_2:\ldots:t_m$-fairlet decomposition that agrees with $\C$. 
\end{proof}
\fi
To compute $t_1:t_2:\ldots:t_m$-fair fairlets, we first find the anchors of the fairlets by solving an instance of the \emph{capacitated multi-$\ell$-center problem} (also called capacitated $k$-center problem with soft capacities). 

\begin{definition}[The capacitated multi-$\ell$-center problem]
     Given a set of points $P$ and numbers $\ell\in \mathbb{N}$ and $u\in \mathbb{N}$. The goal is to find a multiset of at most $\ell$ centers $C^{\ell}$ and an assignment $\phi\colon P\to C^{\ell}$ from points to centers such that the $\ell$-center objective is minimized, while $|\phi^{-1}(c_i)|\leq u$ for all $i\leq \ell$.
\end{definition}

\begin{lemma}[\cite{khuller2000capacitated}]\label{lem:cap-k-center-5-approx-khuller}
    Let $\mathcal{I} = (P,\ell,u)$ be an instance of the capacitated multi-$\ell$-center problem.
    There is an algorithm that computes a 5-approximation for $\mathcal{I}$ in time $O(n^{2+o(1)}\log n)$. 
\end{lemma}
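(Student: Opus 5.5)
The approximation guarantee is the classical result of Khuller and Sussmann~\cite{khuller2000capacitated} for capacitated $k$-center with soft capacities. I would therefore not re-derive the factor $5$, but recall the structure of their algorithm and check that each step runs in the claimed time when implemented with modern maximum flow routines. Concretely, I would sort $D=\{d(p,q)\mid p,q\in P\}$ in $O(n^2\log n)$ and binary search over $D$ for the smallest guess $\delta$ on which the Khuller--Sussmann feasibility test succeeds. Since $|D|\in O(n^2)$, this needs $O(\log n)$ tests, and the optimum radius lies in $D$.

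For a fixed $\delta$, the test builds the threshold graph $G_\delta$ (edge $pq$ iff $d(p,q)\le\delta$), which takes $O(n^2)$. It then greedily picks a maximal set of \emph{monarchs} $M$ with pairwise distance $>2\delta$, also in $O(n^2)$. Points are grouped into empires around the monarchs, and the empires are organised in the rooted tree used by Khuller and Sussmann. Demand is routed to open copies of centers near the monarchs, and the number of copies is rounded up per monarch.

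The counting argument behind feasibility is the following. If a solution of radius $\delta$ with at most $\ell$ copies exists, then every optimal center lies within $3\delta$ of some monarch. Grouping the optimal copies by monarch shows that the grouped loads $L_m$ satisfy $\sum_m\lceil L_m/u\rceil\le\ell$, since the copies mapped to $m$ have total capacity at least $L_m$. Khuller and Sussmann's redistribution along the empire tree realises such loads while moving each point by at most $5\delta$. Whenever the test fails, we may conclude $\delta<\OPT$, so the binary search returns a value at most $\OPT$, and the resulting solution has radius at most $5\OPT$.

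The main obstacle is the running time rather than the ratio. The assignment phase is a bipartite $b$-matching, i.e.\ a flow problem on a network with $O(n)$ vertices and $O(n^2)$ edges. I would replace the original flow subroutine by the almost-linear time maximum flow algorithm of~\cite{brand2023maxflowalmostlinear}, exactly as in the proof of \Cref{lem:running-time-compute-1:t-fairlets}. This costs $O(n^{2+o(1)})$ per guess of $\delta$. The remaining combinatorial steps (monarch selection, empire tree, rounding of copies) are $O(n^2)$ per guess. Multiplying by the $O(\log n)$ binary search steps gives $O(n^{2+o(1)}\log n)$ in total.
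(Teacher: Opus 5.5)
This is correct and essentially the paper's route. The paper gives no argument beyond citing Khuller and Sussmann, and its $O(n^{2+o(1)}\log n)$ bound is presumably obtained as you do, in the same way as \Cref{lem:running-time-compute-1:t-fairlets}: binary search over the sorted $O(n^2)$ pairwise distances, with one almost-linear max-flow per guess. The binary search is valid even for a non-monotone test, since the test succeeds for every $\delta\ge\OPT$.

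One caveat: your ``counting argument'' paragraph does not itself prove that success property. The inequality $\sum_m\lceil L_m/u\rceil\le\ell$ holds for the grouping induced by an optimal solution, but it says nothing about how many copies the algorithm opens along the monarch tree, since the algorithm does not know the $L_m$. That property has to come from the cited theorem, which is where you ultimately place it.
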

\Cref{lem:cap-k-center-5-approx-khuller} gives an upper bound on the radii of the capacitated multi-$\ell$-center clustering in terms of the cost of an optimal capacitated multi-$\ell$-center solution. We need the following result to get a bound in terms of $\OPTfair$.

\begin{corollary}\label{cor:cost-of-opt-cap-sol-optfair}
    Let $(A^*,\alpha^*)$ be an optimal solution to the capacitated multi-$\ell$-center problem on point set $H_1$ with capacity $t_1$. Then, $\max_{p\in H_1}d(p,\alpha^*(p)) \leq 2\OPTfair$. 
\end{corollary}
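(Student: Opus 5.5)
We prove the statement by exhibiting one feasible solution for the capacitated multi-$\ell$-center instance on $H_1$ with capacity $t_1$ whose radius is at most $2\OPTfair$. The optimal solution $(A^*,\alpha^*)$ can only be cheaper. Here $\ell$ is understood as the number of fairlets, i.\,e., $\ell = |H_1|/t_1$. Because centers form a multiset, at most $\ell$ centers each serving at most $t_1$ points are available.

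Let $\C^*$ be an optimal $t_1:t_2:\ldots:t_m$-fair $k$-center clustering with outliers, with radius $\OPTfair$. By \Cref{lem:find-agreeing-fairlet-decomposition-s1:s2:...:sm} there exists a $t_1:\ldots:t_m$-fairlet decomposition $\F$ that agrees with $\C^*$. Since the minority color has no outliers, $\F$ covers all of $H_1$ and consists of exactly $|H_1|/t_1 = \ell$ fairlets, each containing exactly $t_1$ points of $H_1$.

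For every fairlet $f\in\F$, pick an arbitrary point $a_f\in f\cap H_1$ as its center. Define $\phi(p)=a_f$ for each $p\in f\cap H_1$. This uses $\ell$ centers, and each center receives exactly $t_1$ points, so the capacity constraint holds.

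Now bound the radius. Since $\F$ agrees with $\C^*$, both $p$ and $a_f$ lie in a common optimal cluster with center $c$. The triangle inequality gives $d(p,a_f)\le d(p,c)+d(c,a_f)\le 2\OPTfair$. Hence this feasible solution has radius at most $2\OPTfair$, and by optimality the same bound holds for $\max_{p\in H_1} d(p,\alpha^*(p))$.

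The only delicate point is making sure the parameters $\ell$ and $u=t_1$ match the intended instance. Here we rely on the one-sided outlier assumptions, namely that $|H_1|/t_1\in\mathbb{Z}$ and that $H_1$ is fully covered. The centers must also be allowed to come from $H_1$, which our choice satisfies.
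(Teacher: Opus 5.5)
Your proof is correct and follows the paper's argument. The paper also takes a fairlet decomposition agreeing with an optimal fair clustering (\Cref{lem:find-agreeing-fairlet-decomposition-s1:s2:...:sm}), uses one $H_1$-point per fairlet as a center serving that fairlet's $t_1$ points of $H_1$, and bounds the resulting radius by $2\OPTfair$ via the triangle inequality through the optimal center; you additionally spell out the implicit details that $\ell = |H_1|/t_1$ and that $H_1$ is fully covered in the one-sided setting.
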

\iffullappendix
\begin{proof}
    We call $\max_{p\in H_1}d(p,\alpha(p))$ the cost of solution $(A,\alpha)$. 
    The capacitated multi-$\ell$-center problem is a relaxation of the version in which each point can serve as a center at most once. 
    The cost of an optimal solution to this stricter version is upper bounded by the cost of a fairlet decomposition $\max_{p\in f}d(p,\anc(f))$ that agrees with $\C^*$ as every cluster in $\C^*$ contains a multiple of $t_1$ many points from $H_1$. Such a fairlet decomposition exists by \Cref{lem:find-agreeing-fairlet-decomposition-s1:s2:...:sm}.
    The cost of a fairlet decomposition that agrees with $\C^*$ is upper bounded by $2\OPTfair$.
\end{proof}
\fi
By assumption on $|H_1|$, a $t_1:t_2:\ldots:t_m$-fairlet decomposition covers $H_1$ completely. The idea is to compute a capacitated $\ell$-center clustering with $\ell = \frac{|H_1|}{t_1}$ and capacity $t_1$. 
Let $(A,\ell)$ be such a capacitated $\ell$-center solution. The set $A$ might contain multiple copies of points from $H_1$, i.\,e., $A$ is a multiset of points from $H_1$. 
The points in $A$ will serve as the anchors for the fairlets that we will construct as follows.
We can reduce the construction of a $t_1:\ldots:t_m$-fairlet decomposition of $P$ to the construction of a $1:t_1:t_2:\ldots:t_m$-fairlet decomposition on $P'= A \cupdot H_1 \cupdot H_2 \cupdot \ldots \cupdot H_m$ where the anchors $A$ are added to form a new color set. Notice that every $p'\in P'$ is (the copy of) a point $p\in P$.
Let $d'\colon P'\times P'\to \mathbb{R} $ be the distance function induced by $d\colon P\times P\to \mathbb{R}$, where the distance between two points equals the distance between their original points.
Note that $d'$ is a pseudometric. 
However, this does not cause any problems, as all results so far hold in this setting as well. 
\Cref{alg:compute-s1-s2-...-sm-fairlets} formalizes the fairlet computation.

\begin{algorithm}
    \LinesNumbered
    \caption{\textsc{compute-$t_1:t_2:\ldots:t_m$-fairlets}}
    \label{alg:compute-s1-s2-...-sm-fairlets}
    \SetKwInOut{Input}{Input}
    \SetKwInOut{Output}{Output}
    \BlankLine
    \Input{Point set $P=H_1\cupdot H_2 \cupdot \ldots \cupdot H_m$, distance metric $d$, number of clusters $k$}
    \Output{Set of fairlets $\F$, anchor assignment $\anc\colon \F\to A$}
    \BlankLine
    $(A,\kappa) \gets$ 5-approx. sol. for capacitated multi-$\frac{|H_1|}{t_1}$-center on $H_1$ with capacity $t_1$ \label{alg-line:5-approx-cap}\\
$\F',\anc' \gets \hyperlink{alg:compute-1:t2:...:tm-fairlets}{\textsc{compute-$1:t_1:t_2:\ldots:t_m$-fairlets}}(P',A,H_1,H_2,\ldots,H_m,d',k,12)$\\
    $\F \gets \emptyset$\\
    $\forall f'\in \F'\colon f\gets f'\setminus A,\ \anc(f) \gets \anc'(f'),\ \F\gets \F\cup \{f\}$\\
    \Return{$\F,\anc$}
\end{algorithm}

\begin{lemma} \label{lem:properties-computed-fairlets-s1:s2:...:sm} 
    \Cref{alg:compute-s1-s2-...-sm-fairlets} computes a fairlet decomposition $\F$ with $|H_i|-\frac{t_i}{t_1}|H_1|$ outliers from $H_i$ for all $2\le i \le m$ such that for all $f \in \F$, $\max_{p\in f}d(\anc(f),p) \leq 12\OPT_{\text{fair}}$, where $\anc(f)$ is the anchor of fairlet $f$.
\end{lemma}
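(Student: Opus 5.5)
Our approach is to reduce the statement to \Cref{lem:properties-computed-fairlets-1:...:tm}, applied to the instance $P'=A\cupdot H_1\cupdot\ldots\cupdot H_m$ with anchor color $A$ and target ratio $1:t_1:t_2:\ldots:t_m$, threshold factor $\tau=12$ and $\delta=\OPTfair$. The output fairlets on $P'$ then have $A$ removed, keeping the anchor. So it suffices to show that for every color $i\in\{1,\ldots,m\}$ there exists a $1:t_i$-fairlet decomposition of $A\cup H_i$ in which every point is within distance $12\OPTfair$ of the anchor of its fairlet. Once this is shown, the counting claims follow directly:
\begin{itemize}
\item all of $H_1$ is covered, since $|A|t_1=|H_1|$;
\item exactly $t_i|A|=\frac{t_i}{t_1}|H_1|$ points of $H_i$ are covered;
\item after removing $A$, each fairlet is a $t_1:\ldots:t_m$-fairlet.
\end{itemize}
The distance bound transfers because $d'$ is induced by $d$ and $\anc(f)=\anc'(f')$ is a copy of a point of $H_1$.

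\emph{Color $H_1$.} The capacitated solution $(A,\kappa)$ assigns each point of $H_1$ to a copy in $A$, with at most $t_1$ points per copy. Since $|A|=|H_1|/t_1$, every copy receives exactly $t_1$ points. This is directly a $1:t_1$-fairlet decomposition of $A\cup H_1$. By \Cref{lem:cap-k-center-5-approx-khuller} and \Cref{cor:cost-of-opt-cap-sol-optfair}, its cost is at most $5\cdot 2\OPTfair=10\OPTfair\le 12\OPTfair$.

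\emph{Colors $H_i$ with $i\ge 2$ (the main step).} Take an optimal fair clustering $\C^*$ and, by \Cref{lem:find-agreeing-fairlet-decomposition-s1:s2:...:sm}, a $t_1:\ldots:t_m$-fairlet decomposition $\F^*$ agreeing with it. Every fairlet $g\in\F^*$ has diameter at most $2\OPTfair$ and contains $t_1$ points of $H_1$ and $t_i$ points of $H_i$. We build a bipartite assignment in two stages:
\begin{enumerate}
\item Each point $h\in H_1$ sends "one unit of $H_i$-demand per $t_1$ points". Concretely, split the $t_i$ points of $H_i$ in $g$ evenly among the $t_1$ points of $H_1\cap g$ fractionally, so each $h\in H_1\cap g$ holds $t_i/t_1$ units of $H_i$-mass, each at distance at most $2\OPTfair$ from $h$.
\item Route this mass through $\kappa$: each anchor copy $a\in A$ collects the mass of its $t_1$ assigned $H_1$-points. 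That is exactly $t_i$ units, each coming from points at distance at most $d(a,h)+2\OPTfair\le 12\OPTfair$.
\end{enumerate}
This gives a fractional flow of value $t_i|A|$ in the network $N$ for $A\cup H_i$ with threshold $12\OPTfair$, where every blue vertex has capacity $1$: each $H_i$-point in a fairlet distributes total mass $1$. By integrality of max flow, an integral flow of the same value exists, and this is the required $1:t_i$-fairlet decomposition. Some slack in the constant $12$ remains (we obtain $10+2$), which is consistent with the stated bound.

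The obstacle I expect is making this fractional argument rigorous. The mass split must respect the unit capacities on $H_i$-vertices and capacity $t_i$ at each anchor, and must do so when the same point of $H_1$ appears as the location of several copies in the multiset $A$. I would handle this by writing the flow explicitly: for $b\in H_i\cap g$, $h\in H_1\cap g$ and $a=\kappa(h)$, set the flow on edge $(b,a)$ to $\sum 1/t_1$ over the relevant pairs. Then verify the conservation constraints and invoke integrality of flows. Finally, \Cref{lem:properties-computed-fairlets-1:...:tm} finishes the proof.
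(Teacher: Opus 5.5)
Your proof is correct. It departs from the paper's only in how you show that a cheap $1:t_i$-fairlet decomposition of $A\cup H_i$ exists for $i\ge 2$.

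\textbf{The paper's route.} It builds a bipartite graph between the optimal fairlets $\F^*$ and the capacitated clusters $\C^{\ell}=\{\kappa^{-1}(a)\mid a\in A\}$, with an edge whenever the two intersect. It checks Hall's condition: the $t_1|W|$ points of $H_1$ in $W\subseteq\F^*$ cannot fit into fewer than $|W|$ clusters of size $t_1$. From the resulting perfect matching it picks one point $p\in f^*\cap C$ per matched pair and sends all $H_i$-points of $f^*$ to $\kappa(p)$. This gives the same bound $2\OPTfair+10\OPTfair$.

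\textbf{How your route relates.} Your fractional flow is exactly a fractional perfect matching between $\F^*$ and $\C^{\ell}$: fairlet $g$ gives weight $|g\cap\kappa^{-1}(a)|/t_1$ to copy $a$, and the flow on each edge $(b,a)$ with $b\in g\cap H_i$ is that weight. Flow integrality then plays the role of Hall's theorem. Your checks are all correct:
\begin{itemize}
\item the flow on $(b,a)$ is at most $1$;
\item each copy $a$ receives exactly $t_i$, because it has exactly $t_1$ assigned points;
\item positive flow occurs only on edges of length at most $12\OPTfair$.
\end{itemize}
The second check also silently uses that every point of $H_1$ lies in some fairlet of $\F^*$. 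This holds because the optimal solution has no outliers in $H_1$; the paper uses the same fact.

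\textbf{What each buys.} The paper's matching is color-independent. One integral matching serves all colors at once and keeps each optimal fairlet's $H_i$-points together at a single anchor. Your argument works directly in the network the algorithm actually builds. It certifies that the max-flow computation for $A\cup H_i$ with $\tau=12$ and $\delta=\OPTfair$ reaches value $t_i|A|$, and it treats repeated copies in the multiset $A$ as distinct vertices with no extra care. The cost is some fractional bookkeeping.

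One small correction: for $i\ge 2$ there is no slack, since $10+2=12$ meets the threshold exactly. Only the $H_1$ part has slack ($10\le 12$).
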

\iffullappendix
\begin{proof}
    By \Cref{lem:find-agreeing-fairlet-decomposition-s1:s2:...:sm}, there exists a fairlet decomposition $\F^*$ that agrees with an optimal $t_1:t_2:\ldots:t_m$-fair $k$-center clustering. Fix such an $\F^*$. Let $\C^{\ell} = \{\kappa^{-1}(a)\mid a\in A\}$ be the clustering induced by $(A,\kappa)$ computed in \Cref{alg-line:5-approx-cap} of \Cref{alg:compute-s1-s2-...-sm-fairlets}.
    By \Cref{lem:cap-k-center-5-approx-khuller}, $\C^{\ell}$ is a 5-approximation to an optimal capacitated multi-$\ell$-center clustering on $H_1$, i.\,e., $\max_{p\in H_1}d(p,\kappa(p)) \leq 5\OPT_{\ell}$, where $\OPT_{\ell}$ is the radius of an optimal multi-$\ell$-center clustering with capacity $t_1$.
    By \Cref{cor:cost-of-opt-cap-sol-optfair}, $\OPT_{\ell} \le 2\OPTfair$.
    The clustering $ \{\{a\} \cup \kappa^{-1}(a) \mid a\in A\}$  is a $1:t_1$-fairlet decomposition of $A \cup H_1$ with $\max_{p\in H_1}d(p,\kappa(p)) \leq 10\OPTfair$.

    It remains to construct $1:t_i$-fairlets of $A\cup H_i$ with bounded cost for all $2\le i \le m$.
    To do this, we show that we can find a perfect matching between the sets in $\F^*$ and $\C^{\ell}$ such that the matched pairs of sets share a point in their intersection. From this, we can construct a $1:t_i$-fairlet decomposition consisting of the anchor of the capacitated cluster and the points of color $H_i$ inside the fairlet.
    Let $2\le i \le m$. To find a $1:t_i$-fairlet decomposition of $A \cup H_i$, construct a bipartite graph $G=(\F^*\cup \C^{\ell},E)$, where $E=\{(f^*,C)| f^*\in \F^*, C\in \C^{\ell} \colon f^* \cap C \neq \emptyset\}$.
    To show that there exists a perfect matching, we use Hall's Marriage Theorem \cite{hall1935marriage}.
    Let $W\subseteq \F^*$. The total number of points from $H_1$ contained in all fairlets in $W$ is $t_1|W|$. 
    By the choice $\ell = \frac{|H_1|}{t_1}$ and the capacity $t_1$, each cluster $C\in N(W) \coloneqq \{C'\in \C^{\ell} \mid \exists f\in W\colon \{f,C'\}\in E\}$ contains exactly $t_1$ points from $H_1$. 
    To cover all points of $H_1 \cap W$, there must be at least $\frac{|W|t_1}{t_1}=|W|$ clusters in $N(W)$. 
    That is, $|N(W)|\geq |W|$. 
    Applying Hall's Marriage Theorem \cite{hall1935marriage} implies that there exists a perfect matching $M$ in $G$. 

    For each edge $(f^*,C)\in M$, pick a $p\in f^*\cap C$ and set it as the anchor $\anc(f^*)$ of the fairlet $f^*$. 
    Then, each cluster $C\in \C^{\ell}$ contains exactly one unique anchor. Consider the set $\F_{H_i} \coloneqq \{\{a\} \cup \{h^i\in H_i\mid h^i\in f^*\in \F^*, \kappa(\anc(f^*)) = a\} \mid a \in A\}$. 
    The matching guarantees that every point in $A$ is used exactly once.
    Hence, $\F_{H_i}$ is a $1:t_i$-fairlet decomposition of $A\cup H_i$.
    Let $f^*\in \F^*$ and $h^i\in f^*\cap H_i$. It is $d(h^i,\anc(f^*)) \leq 2\OPTfair$, and $d(\anc(f^*),\kappa(\anc(f^*)))\leq 10\OPTfair$. Hence, $d(h^i, \kappa(\anc(f^*))) \leq 12\OPTfair$ for all $h^i\in f\in \F_{H_i}$.
    Applying \Cref{lem:properties-computed-fairlets-1:...:tm} with fairlet decompositions $\C^{\ell}$ and $\F_{H_i}$, $2\le i \le m$, $\tau=12$ and $\delta = \OPTfair$ yields the claim.
\end{proof}
\fi

\begin{lemma} \label{lem:running-time-t1:...:tm-fairlets}
    \Cref{alg:compute-s1-s2-...-sm-fairlets} (\textsc{compute-$t_1:\ldots:t_m$-fairlets}) runs in time $O(n^{2+o(1)}\log n)$.
\end{lemma}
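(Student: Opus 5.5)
The plan is to bound the three phases of \Cref{alg:compute-s1-s2-...-sm-fairlets} separately and sum the bounds.

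\textbf{Phase 1: anchors.} Line~\ref{alg-line:5-approx-cap} computes the capacitated multi-$\ell$-center solution on $H_1$ with $\ell=|H_1|/t_1$ and capacity $t_1$. By \Cref{lem:cap-k-center-5-approx-khuller} this takes $O(|H_1|^{2+o(1)}\log |H_1|)\subseteq O(n^{2+o(1)}\log n)$. The output multiset $A$ has size $|A|=|H_1|/t_1\le n$.

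\textbf{Phase 2: the new instance and the fairlet call.} Build $P'=A\cupdot H_1\cupdot\ldots\cupdot H_m$ with the induced pseudometric $d'$. Since $|A|\le|H_1|$, we have $|P'|\le 2n$. Writing down $P'$ is linear, and $d'$ can be evaluated in $O(1)$ by looking up the original points, so it need not be materialized. The call to \textsc{compute-$1:t_1:t_2:\ldots:t_m$-fairlets} on $P'$, with anchor colour class $A$, is handled by \Cref{lem:running-time-1:t2:...:tm-fairlets}. That lemma's proof only uses the size of the instance and the bound $\sum_i n_i\le 2|P'|$, and it remains valid for pseudometrics. So this phase costs $O(|P'|^{2+o(1)}\log|P'|)=O(n^{2+o(1)}\log n)$.

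\textbf{Phase 3: post-processing.} Removing the point of $A$ from each fairlet and copying anchors takes $O(|P'|)=O(n)$ time overall. Summing the three phases gives the claimed $O(n^{2+o(1)}\log n)$ bound.

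\textbf{Main obstacle.} The only subtle point is making sure \Cref{lem:running-time-1:t2:...:tm-fairlets} really applies to the new instance. In that lemma's proof, the anchor class was the smallest colour class, and the proof used $|H_1|\le n/m$. Here $A$ is the smallest class, because $|A|=|H_1|/t_1\le|H_i|$ for all $i$. Even without that, the crude bound suffices: $\sum_{i}(|A|+|H_i|)\le (m+1)|A|+|P'|$ is at most $O(|P'|)$ because $|A|\le|P'|/(m+1)$. The sum of squares $\sum_i n_i^{2+o(1)}\le(\sum_i n_i)^{2+o(1)}$ then stays in $O(n^{2+o(1)})$.

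Each $1:t_i$ subcall performs a binary search over the $O(|P'|^2)$ pairwise distances, which gives the $\log n$ factor. Each step of that search is an almost-linear max-flow computation on $O(n^2)$ edges, as in \Cref{lem:running-time-compute-1:t-fairlets}.
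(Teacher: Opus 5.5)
Your proposal is correct and follows the paper's proof. It bounds the capacitated multi-$\ell$-center step by \Cref{lem:cap-k-center-5-approx-khuller}, bounds the fairlet call on $P'$ (with $|P'|\le |H_1|/t_1+n\in O(n)$) by \Cref{lem:running-time-1:t2:...:tm-fairlets}, and notes that the remaining steps are dominated. Your extra check that $A$ is the smallest colour class of $P'$, so that the $\sum_i n_i\le 2|P'|$ step of that lemma carries over, is a detail the paper leaves implicit; the phrase ``even without that'' is misleading, though, because the bound $|A|\le |P'|/(m+1)$ itself relies on $A$ being smallest.
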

\iffullappendix
\begin{proof}
    By \Cref{lem:cap-k-center-5-approx-khuller}, the 5-approximate solution for capacitated multi-$\ell$-center can be computed in $O(|H_1|^{2+o(1)}\log |H_1|) \subseteq O(n^{2+o(1)}\log n)$.
    By \Cref{lem:running-time-1:t2:...:tm-fairlets}, the set $\F'$ can be computed in $O(|P'|^{2+o(1)}\log |P'|)$, where $|P'|=|A|\cupdot|H_1|\cupdot\ldots\cupdot|H_m| \leq \frac{|H_1|}{t_1}+n \in O(n) $. The remaining steps are dominated by this running time.
\end{proof}
\fi

In the complete approach, \Cref{alg:general-case-outlier-algorithm} uses \Cref{alg:compute-centers-Gonzalez} on the multiset $A$ and sets the final set of centers to be the underlying points from which the copies in $A$ originated. As we only identify centers that have zero distance, the bound on the cost does not change.

\begin{theorem}[Reformulation of \Cref{thm:t1:...:tm-beginning}] \label{thm:s1:...:sm-14-approx}
    For $t_1,\ldots,t_m\in \mathbb{N}$ and a fair clustering instance satisfying $t_i|H_1|\le t_1|H_i|$ for all $2\le i \le m$ and $\frac{|H_1|}{t_1}\in\mathbb{Z}$, 
    \Cref{alg:general-case-outlier-algorithm} with \textsc{compute-fairlets} = \textsc{compute-$t_1:\ldots:t_m$-fairlets} (\Cref{alg:compute-s1-s2-...-sm-fairlets}) computes a center-aware 14-approximation for $t_1:t_2:\ldots:t_m$-fair $k$-center with $|H_i|-\frac{t_i}{t_1}|H_1|$ outliers from $H_i$ for all $2\le i \le m$ in time $O(n^{2+o(1)}\log n)$.
\end{theorem}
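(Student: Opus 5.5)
The plan mirrors the proof of \Cref{thm:1-t1-...-tm-unbalanced}. Let $(C,\alpha)$ be the output of \Cref{alg:general-case-outlier-algorithm} when \textsc{compute-fairlets} is \Cref{alg:compute-s1-s2-...-sm-fairlets}. By \Cref{lem:4-approximation}, its cost is at most $\varphi+\rho$, and its outliers are exactly $P\setminus P(\F)$. Here $\varphi=\max_{f\in\F}\max_{p\in f}d(p,\anc(f))$, and $\rho$ is the covering radius of the Gonzalez centers on the multiset $A$ of anchors. Centers in $A$ are replaced by their original points in $H_1$. This identification only merges points at distance $0$, so by the remark preceding the theorem it changes neither distances nor $\rho$.

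\textbf{Fairlet cost and outliers.} By \Cref{lem:properties-computed-fairlets-s1:s2:...:sm}, $\varphi\le 12\OPTfair$. The same lemma gives exactly $|H_i|-\frac{t_i}{t_1}|H_1|$ outliers from $H_i$ for $i\ge 2$ and none from $H_1$. The removed copies in $A$ are not points of $P$, so after deleting them each $f\in\F$ is a genuine $t_1:\ldots:t_m$-fairlet of $P$.

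\textbf{Center cost.} I want $\rho\le 2\OPTfair$ via \Cref{lem:compute-centers-2OPT-Gonzalez}. That lemma needs a valid $k$-center solution of radius $\OPTfair$ covering the anchor set. Every anchor is a copy of a point in $H_1$, and $H_1$ contains no outliers, so every anchor lies in some cluster of an optimal fair solution $\C^*$. Hence $\C^*$ restricted to the underlying points covers $A$ with $k$ centers at radius $\OPTfair$. The pigeonhole argument in the proof of \Cref{lem:compute-centers-2OPT-Gonzalez} goes through unchanged for the multiset $A$ under the pseudometric $d'$. Duplicates are harmless: if two chosen centers coincide, the covering radius is already $0$. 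This gives $\rho\le 2\OPTfair$, so the total cost is at most $14\OPTfair$.

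\textbf{Center-awareness and running time.} I expect this step to be the main subtlety. A center $c$ is the underlying point of some anchor $a\in A$, namely the fairlet $f$ with $\anc(f)=a$. But $f=f'\setminus A$, and $f$ contains $t_1$ points of $H_1$ that need not include $c$ itself. The point $c\in H_1$ does lie in some fairlet $g$, because $H_1$ is fully covered. To argue center-awareness I would use the construction of \Cref{alg:compute-s1-s2-...-sm-fairlets}: the $1:t_1$ part is $\{a\}\cup\kappa^{-1}(a)$, and I would first try to choose the copy or the capacitated assignment so that $c\in\kappa^{-1}(a)$. Failing that, I would reassign $g$ to $c$. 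This is valid because $d(\anc(g),c)\le\varphi$, so every point of $g$ is within $2\varphi\le 24\OPTfair$ of $c$. That bound is too weak, so the first approach is needed. The fallback is to note that Gonzalez on $A$ may instead pick the anchor of $g$ at equal cost.

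The running time follows from \Cref{lem:running-time-framework-outlier}, \Cref{lem:running-time-t1:...:tm-fairlets}, and $T_C=O(|A|k)=O(n^2)$.
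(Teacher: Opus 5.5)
\paragraph{Cost and running time.} This part of your proposal is exactly the paper's proof: $\varphi\le 12\OPTfair$ by \Cref{lem:properties-computed-fairlets-s1:s2:...:sm}, $\rho\le 2\OPTfair$ by \Cref{lem:compute-centers-2OPT-Gonzalez} on the anchor multiset $A$, cost at most $14\OPTfair$ by \Cref{lem:4-approximation}, and the time bound from \Cref{lem:running-time-framework-outlier} and \Cref{lem:running-time-t1:...:tm-fairlets}. Your remark that the Gonzalez lemma only needs an optimal solution to cover $A$ (true since $H_1$ has no outliers) is more careful than the paper's direct invocation with $S=\OPTfair$.

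\paragraph{Center-awareness.} Your center-awareness paragraph establishes nothing, and its main idea rests on a false premise. In \Cref{alg:compute-s1-s2-...-sm-fairlets}, $\kappa$ is used only to produce $A$. The $H_1$-part of the fairlet anchored at $a$ is recomputed by the max flow inside \textsc{compute-$1:t_1:t_2:\ldots:t_m$-fairlets} with threshold $12\delta$. The set $\{a\}\cup\kappa^{-1}(a)$ appears only in the proof of \Cref{lem:properties-computed-fairlets-s1:s2:...:sm}, as a witness that a cheap decomposition exists. So arranging $c\in\kappa^{-1}(a)$, which \Cref{lem:cap-k-center-5-approx-khuller} does not offer anyway, would not place $c$ in the fairlet anchored at $a$. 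Forcing it afterwards by exchanging $c$ with some $p\in f_a\cap H_1$ gives only $d(p,\anc(g))\le d(p,a)+d(a,c)+d(c,\anc(g))\le 2\varphi$. That is the same loss you already rejected. The ``fallback'' fails as well. First, $\anc(g)$ is a copy of some point of $H_1$ that need not lie in $g$, so the problem recurs. Second, swapping one farthest-first center for another destroys the property on which \Cref{lem:compute-centers-2OPT-Gonzalez} relies, so ``equal cost'' is not justified.

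\paragraph{The paper does not cover this either.} The paper's proof is silent on center-awareness; it proves only $\varphi+\rho\le 14\OPTfair$ and the running time. In the $1:t$ and $1:t_2:\ldots:t_m$ cases, each anchor lies in its own fairlet and is therefore assigned to itself. Here, center-awareness does not follow from the construction as stated. For example, take $t_1=2$ and $k=2$, with chosen copies $\tilde x,\tilde u\in A$ of points $x,u\in H_1$. Once $12\delta$ exceeds the relevant distances, the max flow may put $u$ into the fairlet anchored at $\tilde x$ and $x$ into the fairlet anchored at $\tilde u$. \Cref{alg:general-case-outlier-algorithm} then assigns $x$ to $u$ and $u$ to $x$. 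So the center-aware claim is a genuine gap. Closing it would need an actual modification of the algorithm plus a new cost analysis, and neither your proposal nor the paper supplies one.
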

\iffullappendix
\begin{proof}
    Let $\varphi \coloneqq \max_{f\in \F}\max_{p\in f}d(p,\anc(f))$ and $\rho \coloneqq \max_{a\in A}\min_{c\in C} d(a,c) $.
    By \Cref{lem:4-approximation}, the cost of the final clustering is bounded by $\varphi + \rho$. 
    Let $\OPTfair$ denote the cost of an optimal $t_1:\ldots:t_m$-fair $k$-center clustering with outliers.
    The set of centers $C$ is computed via farthest-first traversal on $A$. By \Cref{lem:compute-centers-2OPT-Gonzalez}, $\rho \le 2\OPTfair$.
    By \Cref{lem:properties-computed-fairlets-s1:s2:...:sm}, $\varphi \le 12\OPTfair.$

    For the running time, \Cref{lem:running-time-framework-outlier} implies that \Cref{alg:general-case-outlier-algorithm} runs in time $O(T_{\F}+T_{C}+|\F|\cdot k + n)$, where $T_{\F}$ is the running time of \textsc{compute-fairlets} and $T_C$ is the running time of \textsc{compute-centers}. Here, $T_{\F} \in O(n^{2+o(1)}\log n)$ by \Cref{lem:running-time-t1:...:tm-fairlets} and $T_C\in O(|\Representatives|\cdot k)\subseteq O(nk)$ \cite{Gonz85}.
    The fairlet decomposition $\F$ consists of $\frac{|H_1|}{t_1} \in O(n)$ fairlets.
\end{proof}
\fi

\subsection{\texorpdfstring{$1:1$}{1:1} fairness with outliers on both sides}\label{sec:bothsides}
So far, we used outliers to establish the desired fairness ratio in clusters. 
However, since $k$-center is prone to outliers, we may also want to combine fairness and outliers to satisfy fairness and simultaneously exclude points that are far away. In previous sections, the number of outliers from group $H_i$ is always a fixed number computed from the desired color ratio and the numbers $|H_1|$ and $|H_i|$. We now want to allow more outliers so that it is possible to discard points that are far away. We consider the case of two colors, i.\,e., $P=R\cupdot B$ with $|R|\leq |B|$ and a given number $z\geq |B|-|R|$ of outliers. 
If a solution has $z_r$ outliers in $R$, it has exactly $z_b=z_r+|B|-|R|$ outliers in $B$ since otherwise the above fairness condition is violated.

We now develop an algorithm for $1:1$-fair $k$-center with outliers. Similarly to the previous algorithms, we start by constructing a bipartite graph $G=(V, E)$ with $V=R\cup B$ and $E=\{\{r,b\}\mid r\in R, b\in B \text{ and }d(r,b) \leq \delta\}$, i.\,e., edges between any pair of a red and a blue point are added if their distance is upper bounded by $\delta$. Let $\F_{\delta}$ denote a maximum matching in this graph. 
If $\F_\delta$ has cardinality at least $\frac{|P|-z}{2}$, we store $F_\delta$, and otherwise, we discard it. Let $\delta_{\min}$ denote the smallest threshold such that $\F_{\delta_{\min}}$ has cardinality at least $\frac{|P|-z}{2}$.
We can determine $\delta_{\min}$ via a binary search and then compute a maximum matching for all $\delta\geq \delta_{\min}$. 
Let $\mathscr F=\{\F_{\delta}\mid \delta\geq \delta_{\min}\}$ be the set of all such matchings. 

\begin{algorithm}
    \LinesNumbered
    \caption{\textsc{set-of-possible-\(1:1\)-fairlets}}
    \label{alg:outliers_both_sides_fairlets}
    \SetKwInOut{Input}{Input}
    \SetKwInOut{Output}{Output}
    \BlankLine
    \Input{Point set $P=R\cupdot B$, distance metric $d$, integer $k$, number of outliers $z$}
    \Output{Set of possible fairlet decompositions $\mathscr F$}
    \BlankLine
    $D\gets \{d(x,y)\mid x,y\in P\}$\\
    $\mathscr F\gets \emptyset$\\
    \For{all $\delta \in D$}{
        construct bipartite graph $G=(R\cup B,\ \{\{r,b\} \mid r\in R, b\in B, d(r,b) \leq \delta\})$ \label{alg-line:graph-construction}\\
        $\F_{\delta}\gets$ maximum matching in $G$\\
        \If{$|\F_{\delta}|\geq \frac{|P|-z}{2}$}{
            $\mathscr F\gets\mathscr F\cup \{\mathcal F_{\delta}\}$
        }
    }
    \Return{$\mathscr F$}
\end{algorithm}

\begin{lemma}
\label{lem:delta_min_bound}
     Let $\OPT$ denote the optimal cost of a $1:1$-fair $k$-center solution with outliers. Then
     we have $2\OPT\geq \delta_{\min}$.
\end{lemma}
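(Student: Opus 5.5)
The plan is to exhibit a matching in the graph $G$ for threshold $2\OPT$ of size at least $\frac{|P|-z}{2}$. Monotonicity of maximum matching size in $\delta$ then gives $\delta_{\min}\le 2\OPT$. One technical point is that the algorithm only considers thresholds $\delta\in D$, so $2\OPT$ need not itself be a candidate. To handle this, let $\delta'$ be the largest distance in $D$ that is at most $2\OPT$. Every pair $\{r,b\}$ with $d(r,b)\le 2\OPT$ also has $d(r,b)\le\delta'$, because $d(r,b)\in D$. Hence the graph for $\delta'$ coincides with the graph for $2\OPT$, and it suffices to produce the matching there.

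First fix an optimal $1:1$-fair $k$-center solution with outliers, with clusters $C_1,\ldots,C_k$, centers $c_1,\ldots,c_k$, outlier set $Z$ with $|Z|\le z$, and radius $\OPT$.

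Next apply the argument of \Cref{lem:find-agreeing-fairlet-decomposition-1:t} with $t=1$. Fairness of each cluster gives $|C_j\cap R|=|C_j\cap B|$, so within every cluster we can pair its red and blue points arbitrarily. By the triangle inequality through $c_j$, every pair $\{r,b\}\subseteq C_j$ satisfies $d(r,b)\le d(r,c_j)+d(c_j,b)\le 2\OPT$. These pairs therefore form a matching $M$ in the graph for threshold $2\OPT$.

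The matching $M$ covers every inlier, so $|M|=\frac{|P|-|Z|}{2}\ge\frac{|P|-z}{2}$. A maximum matching $\F_{\delta'}$ is at least as large as $M$ and hence meets the acceptance criterion. Therefore $\delta_{\min}\le\delta'\le 2\OPT$.

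No step here is really hard; the only places needing care are the discretization to $D$ handled above and the fact that $|Z|$ may be strictly smaller than $z$, which only helps the inequality.
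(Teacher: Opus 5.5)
Your proof is correct and follows essentially the same route as the paper. Both pair the red and blue inliers inside each optimal cluster to get a matching of size $\frac{|P|-|Z|}{2}\ge\frac{|P|-z}{2}$ in the threshold-$2\OPT$ graph, then pass to the largest $\delta\in D$ not exceeding $2\OPT$ (which exists since $0\in D$), whose graph coincides with the threshold-$2\OPT$ graph, so $\delta_{\min}\le\delta\le 2\OPT$.
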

\iffullappendix
\begin{proof}
    Let $(C^*,\alpha^*)$ denote an optimal solution and let $C^*_1,\ldots, C^*_k$ be the resulting clusters and $Z^*$ the outliers. Since all clusters are $1:1$-fair, we can find a bijection between red and blue points belonging to the same cluster. Let $g$ be such a bijection and consider the matching  $M=\{\{r,g(r)\}\mid r\in R\backslash Z^*\}$ in the graph $G_{2\OPT}=(R\cup B,\ \{\{r,b\} \mid r\in R, b\in B, d(r,b) \leq 2\OPT\})$. This matching has cardinality $\frac{|P|-|Z^*|}{2}$. Furthermore, notice that for $\delta=\max\{d(x,y)\mid x,y\in P, d(x,y)\leq 2\OPT\}$ the graph with respect to $\delta$ in \Cref{alg-line:graph-construction} of \Cref{alg:outliers_both_sides_fairlets} equals $G_{2\OPT}$, so we have that $\mathcal F_{\delta}\in \mathscr F$ and thus $\delta_{\min}\leq\delta\leq2\OPT$.
\end{proof}
\fi
The proof also shows that there exists a fairlet decomposition $\mathcal F_{\delta}\in\mathscr F$ which is a maximum matching in the graph $G_{2\OPT}=(R\cup B,\ \{\{r,b\} \mid r\in R, b\in B, d(r,b) \leq 2\OPT\})$. We denote this matching by $\mathcal F_{2\OPT}$ for convenience.
For every $\delta$ we denote by $Z_{\delta}=P\backslash P(\mathcal F_{\delta})$ the points which are not part of a fairlet in $\F_{\delta}$. Notice that if $|\F_{\delta}|\geq \frac{|P|-z}{2}$ we have $|Z_{\delta}|\leq z$. Since points from both colors can be outliers, both points of a fairlet may be outliers in the optimal solution. Since the cost of an optimal solution is not known, it is not sufficient to compute the fairlet decomposition for $\delta_{\min}$ as we see in \Cref{fig:example-why-we-need-multiple-fairlet-decompositions}.

\begin{figure}[htbp]
    \centering
    \scalebox{0.5}{
  \begin{tikzpicture}  
    \def \pointwidth  {1.2pt}
    \def \linewidth {1.5}

    \node[point=red] (r1) at (0,0) {};
    \node[point=blue] (b1) at ($ (r1) + (3,0) $) {};
    \node[point=red] (r2) at ($ (b1) + (1.5,0) $) {};
    \node[point=blue] (b2) at ($ (r2) + (1.5,0) $) {};
    \node[point=red] (r3) at ($ (b2) + (15,0) $) {};
    \node[point=blue] (b3) at ($ (r3) + (0.6,0) $) {};

    \draw[draw=green] (b1) circle (3.15) {};
    \fairletarea{b1,r2};
    \fairletarea{r3,b3};
  \end{tikzpicture}
}
    \caption{Optimal clustering for $k=1$, $z=2$ shown in green. However, if we set $\delta$ too small, then we might only find the fairlets shown in gray. A clustering based on these fairlets would have to place both fairlets in the same cluster. As we could place the two points to the right arbitrarily far away, the cost of such a clustering could be arbitrarily high.}
    \label{fig:example-why-we-need-multiple-fairlet-decompositions}
\end{figure}

\begin{lemma} \label{lem:running-time-set-of-possibe-1:1-fairlets}
    \textsc{set-of-possible-$1:1$-fairlets} (\Cref{alg:outliers_both_sides_fairlets}) runs in time $O(n^{4+o(1)})$.
\end{lemma}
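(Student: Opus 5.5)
The plan is a direct cost accounting of \Cref{alg:outliers_both_sides_fairlets}, split into three parts: the size of the threshold set, the cost of building one graph, and the cost of one maximum matching.

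First, the set of pairwise distances $D=\{d(x,y)\mid x,y\in P\}$ has $|D|\in O(n^2)$. Computing it takes $O(n^2)$ time. Unlike in \Cref{lem:running-time-compute-1:t-fairlets}, no binary search is used here, since the algorithm stores a matching for every $\delta\in D$ that passes the cardinality test. Hence the for-loop runs $O(n^2)$ times.

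Second, within one iteration I will bound the per-$\delta$ work.
\begin{itemize}
\item Constructing the bipartite graph $G$ in \Cref{alg-line:graph-construction} means checking all pairs $(r,b)\in R\times B$. This takes $O(|R|\cdot|B|)\subseteq O(n^2)$ time, and $G$ has $O(n^2)$ edges.
\item The maximum matching in a bipartite graph is computed via a maximum flow. Add a source $v$ with unit-capacity edges to all $r\in R$, orient the edges of $G$ from $R$ to $B$ with unit capacity, and add unit-capacity edges from all $b\in B$ to a sink $w$. An integral maximum flow then corresponds to a maximum matching.
\item Using the almost-linear-time max-flow algorithm of \cite{brand2023maxflowalmostlinear}, exactly as in \Cref{lem:running-time-compute-1:t-fairlets}, this takes $O(|E|^{1+o(1)})\subseteq O(n^{2+o(1)})$ time.
\item Extracting the matching from the flow, checking $|\F_\delta|\ge\frac{|P|-z}{2}$, and appending $\F_\delta$ to $\mathscr F$ costs $O(n^2)$ in total. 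This includes copying the matching, which has at most $n/2$ edges.
\end{itemize}
So one iteration costs $O(n^{2+o(1)})$.

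Finally, multiplying the $O(n^2)$ iterations by the $O(n^{2+o(1)})$ per-iteration cost gives $O(n^{4+o(1)})$ overall. The initial computation of $D$ is dominated by this.

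The only mildly delicate step is the matching subroutine. One must be sure to invoke an almost-linear bipartite matching/max-flow bound rather than, say, Hopcroft--Karp. Hopcroft--Karp would give $O(n^{2.5})$ per iteration and hence $O(n^{4.5})$ overall, which does not meet the claimed bound. Since the paper already uses \cite{brand2023maxflowalmostlinear}, I would reduce matching to unit-capacity flow as above and cite that result.
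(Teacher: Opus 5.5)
Your proposal is correct and follows the paper's proof: $O(n^2)$ thresholds in $D$, $O(n^2)$ time to build each bipartite graph, and $O(n^{2+o(1)})$ for each maximum matching via the almost-linear max-flow algorithm of \cite{brand2023maxflowalmostlinear}, which gives $O(n^{4+o(1)})$ in total. Your extra details, namely the explicit unit-capacity flow reduction and the observation that Hopcroft--Karp would only give $O(n^{4.5})$, are correct additions to the same argument.
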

\begin{proof}
    The set of pairwise distances $D$ has cardinality $O(n^2)$. 
    For every $\delta\in D$, we construct a bipartite graph in $O(n^2)$ and compute a maximum matching in this graph in $O(n^{2+o(1)})$ by computing a maximum flow in the corresponding network using the algorithm by~\cite{brand2023maxflowalmostlinear}.
    In total, this yields a running time of $O(n^{4+o(1)})$ and dominates the overall running time.
\end{proof}
\begin{algorithm}
    \LinesNumbered
    \caption{\textsc{$1:1$-fair-$k$-center-with-outliers-on-both-sides}}
    \label{alg:outliers_both_sides_supplier}
    \SetKwInOut{Input}{Input}
    \SetKwInOut{Output}{Output}
    \BlankLine
    \Input{Set of fairlet decompositions $\mathscr F$}
    \Output{$1:1$-fair $k$-center solution with $\leq z$ outliers}
    \BlankLine
    \For{all $\mathcal F_{\delta}\in \mathscr F$}{
        $z_\delta\gets\frac{z-|Z_\delta|}{2}$, $R'\gets R\backslash Z_\delta$\\
        let $(C_R,\alpha_R)$ be a $k$-supplier with outliers solution on $(R', P, d, z_{\delta},k)$ \label{alg-line:k-sup-sol}\\
        $C_\delta\gets C_R$\\
        \For{all $\{r,b\}\in \mathcal F_{\delta}$}{
            $\alpha_\delta(r)=\alpha_R(r)$\\
            $\alpha_\delta(b)=\alpha_R(r)$\\
        }
        \For{all $p\in Z_\delta$}{
            $\alpha_\delta(p)=\perp$\\
        }
    }
    \Return{$\textup{argmin}_{\delta\geq\delta_{\min}}\cost(C_\delta,\alpha_\delta)$}
\end{algorithm}
Similarly to \Cref{def:fairlet-agree-with-clustering}, we define the notion of a fairlet decomposition and a clustering that agree with each other. Notice that this definition is slightly different, as fairlets can now also be outliers. 
\begin{definition}
	Let $\C$ be a clustering with outliers $Z$ and $\F_{\delta}\in \mathscr F$ a fairlet decomposition with $Z_{\delta}=P\backslash P(\mathcal F_{\delta})$. We say that $\F_{\delta}$ and $\C$ \emph{agree with each other} if and only if $Z_{\delta}\subset Z$ and for all $f\in\F_{\delta}$ either $f\subset Z$ or there exists a cluster $C\in\C$ such that $f\subseteq C$.
\end{definition}
Suppose that a clustering $\C$ with outliers $Z$ agrees with $\F_{\delta}\in\mathscr F$, then the number of fairlets from $\F_{\delta}$ which are contained in $Z$ is upper bounded by $z_{\delta}=\frac{z-|Z_{\delta}|}{2}$. Furthermore, $\C$ satisfies $1:1$ fairness since every cluster in $\C$ is a union of fairlets from $\mathcal F_{\delta}$.
We can find a clustering $\mathcal{C}$ that agrees with $\mathcal{F}_{\delta}$ by reduction to $k$-supplier with outliers. 

\begin{definition}
    An instance of $k$-supplier with outliers is given by $(X, F, d, z, k)$, where $(X\cup F,d)$ is a metric space, and $z,k\in \mathbb N$. The goal is to choose a set $C\subset F$ of cardinality at most $k$ and an assignment $\alpha\colon X\rightarrow C\cup\{\perp\}$ such that the set of outliers $Z=\alpha^{-1}(\perp)$ has cardinality at most $z$ and the cost $\max_{p\in X\backslash Z}d(p,\alpha(p))$ is minimized. 
\end{definition}
We compute a $k$-supplier with outliers solution on points $R\backslash Z_{\delta}$ and centers from $P$ with $z_{\delta}$ outliers. Given such a solution $(C_R,\alpha_R)$ we set $C_{\delta}=C_R$ and for every fairlet $\{r,b\}\in \mathcal F_{\delta}$ we set $\alpha_\delta(b)=\alpha_\delta(r)=\alpha_R(r)$ and $\alpha_\delta(p)=\perp$ for all $p\in Z_{\delta}$. Then at most $2z_{\delta}$ points from $P(\mathcal F_{\delta})$ are marked as outliers and additionally $|Z_{\delta}|$ points. In conclusion, the number of outliers does not exceed $z$, and the clustering agrees with $\mathcal F_\delta$. By definition of $\alpha_\delta$, the resulting clustering $(C_\delta,\alpha_\delta)$ is also $1:1$-fair.
To bound the cost of the final clustering, we bound the cost of a $k$-supplier solution.
\begin{lemma}
\label{lem:cost_bound_supplier}
    Let $\OPT$ denote the optimal cost of a $1:1$-fair $k$-center solution with outliers. There exists a $k$-supplier with outliers solution $(C,\alpha)$ on $(R\backslash Z_{2\OPT}, P, d, z_{2\OPT}, k)$ with cost at most $3\OPT$. 
\end{lemma}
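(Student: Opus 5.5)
The plan is to build the $k$-supplier solution directly from an optimal $1:1$-fair solution $(C^*,\alpha^*)$ with clusters $C^*_1,\ldots,C^*_k$ and outliers $Z^*$, $|Z^*|\le z$. I use $C^*$ as the set of suppliers; this is allowed since $C^*\subseteq P$. I then assign every $r\in R\setminus Z_{2\OPT}$ according to its fairlet $\{r,b\}\in\F_{2\OPT}$, splitting into three cases.

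\emph{Case 1: $r\notin Z^*$.} Set $\alpha(r)=\alpha^*(r)$, which costs at most $\OPT$.

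\emph{Case 2: $r\in Z^*$ but $b\notin Z^*$.} Set $\alpha(r)=\alpha^*(b)$. Since $\{r,b\}$ is an edge of $G_{2\OPT}$, the triangle inequality gives $d(r,\alpha^*(b))\le d(r,b)+d(b,\alpha^*(b))\le 2\OPT+\OPT=3\OPT$.

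\emph{Case 3: $r,b\in Z^*$.} Declare $r$ an outlier.

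In every non-outlier case the cost is at most $3\OPT$, so it only remains to show that the number $q$ of fairlets of $\F_{2\OPT}$ with both endpoints in $Z^*$ satisfies $q\le z_{2\OPT}=\frac{z-|Z_{2\OPT}|}{2}$. Since $|Z_{2\OPT}|=|P|-2|\F_{2\OPT}|$, this inequality is equivalent to
\[|\F_{2\OPT}|-q\ \ge\ \tfrac{|P|-z}{2}.\]

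This counting step is the main obstacle. The reason is that $\F_{2\OPT}$ is an \emph{arbitrary} maximum matching, and $Z_{2\OPT}$ need not be contained in $Z^*$, so a naive count of outliers does not work. Instead, I compare $\F_{2\OPT}$ with the matching $M=\{\{r,g(r)\}\mid r\in R\setminus Z^*\}$ from the proof of \Cref{lem:delta_min_bound}. The matching $M$ lies in $G_{2\OPT}$, covers exactly the points of $P\setminus Z^*$, and has size $\frac{|P|-|Z^*|}{2}\ge\frac{|P|-z}{2}$.

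I then look at the components of $\F_{2\OPT}\cup M$ (the symmetric-difference structure), which are alternating paths and cycles.
\begin{itemize}
\item Every $\F_{2\OPT}$-edge counted in $q$ has both endpoints in $Z^*$. These endpoints are therefore unmatched by $M$, so such an edge forms a component on its own, containing no $M$-edge.
\item In every other component, the number of $\F_{2\OPT}$-edges is at least the number of $M$-edges. Otherwise the component would be an alternating path that starts and ends with $M$-edges, which is an augmenting path for $\F_{2\OPT}$. This would contradict the maximality of $\F_{2\OPT}$ in $G_{2\OPT}$.
\end{itemize}
Summing over all components, $|\F_{2\OPT}|-q\ge|M|\ge\frac{|P|-z}{2}$, which gives $q\le z_{2\OPT}$.

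Finally, I note that $\alpha$ assigns points of $R\setminus Z_{2\OPT}$ to at most $k$ suppliers from $P$, marks at most $z_{2\OPT}$ of them as outliers, and has cost at most $3\OPT$. This proves the lemma.
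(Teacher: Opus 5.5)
Your proof is correct and takes essentially the same route as the paper: the same partner-based reassignment ($\alpha^*(r)$, else $\alpha^*(b)$, else outlier), and the same alternating path/cycle decomposition of the optimal bijection matching together with the maximum matching $\F_{2\OPT}$, using maximality to rule out components with surplus matching edges. The only difference is bookkeeping. You bound the number $q$ of doubly-outlier fairlets directly via $|\F_{2\OPT}|-|M|\ge q$, whereas the paper first builds a full fair solution agreeing with $\F_{2\OPT}$, shows componentwise that it has no more outliers than the optimum, and then restricts to the red points.
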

\iffullappendix
\begin{proof}
    Let $(C^*,\alpha^*)$ be an optimal solution to $1:1$-fair $k$-center with outliers and let $Z^*$ denote the set of outliers. We show how to obtain a solution from $(C^*,\alpha^*)$ for $k$-supplier with outliers. Notice that if $(C^*,\alpha^*)$ agrees with the fairlet decomposition $\mathcal F_{2\OPT}$ then this is straightforward. We set $C=C^*$ and $\alpha(r)=\alpha^*(r)$ for all $r\in R\backslash Z_{2\OPT}$. We have to verify that at most $z_{2\OPT}$ points are declared as outliers by this solution. Since  $\mathcal F_{2\OPT}$ agrees with $(C^*,\alpha^*)$ we have  $Z_{2\OPT}\subset Z^*$ and furthermore every fairlet is either completely contained in $Z^*$ or disjoint to $Z^*$, thus the number of points from $R\backslash Z_{2\OPT}$ which are contained in $Z^*$ is upper bounded by $\frac{z-|Z_{2\OPT}|}{2}$ which is $z_{2\OPT}$ by definition.
    We conclude that it suffices to construct a solution to $1:1$-fair $k$-center with outliers which agrees with $\mathcal F_{2\OPT}$ and costs at most $3\OPT$.

    Consider the following graph $G=(V,E)$ with $V=P$ with $E=\{\{r,b\}\mid r\in R,b\in B \text{ and } d(r,b)\leq 2\OPT\}$. Notice that the graph $G$ is bipartite as the endpoints of edges have different colors. Thus, the graph does not contain cycles of odd length.
    Let $g\colon R\backslash Z^*\rightarrow B\backslash Z^*$ be a bijection such that $r$ and $g(r)$ belong to the same cluster in $(C^*,\alpha^*)$ for all $r\in R\backslash Z^*$. Let $E_1=\{\{r,g(r)\}\mid r \in R\backslash Z^*\}$ and let $E_2=\mathcal F_{2\OPT}$. Notice that we have $E_i\subset E$  and $E_i$ is a matching in $G$ for $i=1,2$. Indeed, $E_2$ is a maximum matching in $G$ by definition of $\mathcal F_{2\OPT}$. Thus, we have $|E_1|\leq |E_2|$. We construct a multi-graph $H=(V,E_1\cupdot E_2)$ where every edge in $E_1\cap E_2$ is added twice. Notice that every point in the graph has degree at most $2$ and thus $H$ can be partitioned into vertex-disjoint cycles and paths. Every path and cycle alternates between edges from $E_1$ and $E_2$. For every cycle, the number of edges from $E_1$ equals the number of edges from $E_2$, and every path contains at least as many edges from $E_2$ as from $E_1$ since $E_2$ is a maximum matching.
    We define a solution to $1:1$-fair $k$-center with outliers that agrees with $\mathcal F_{2\OPT}$ as follows. 
    Let $C'=C^*$ and for every $\{r,b\}\in \mathcal F_{2\OPT}$ we set 
    \[
        \alpha'(x) = 
        \begin{cases}
        \alpha^*(r) & \text{if } \alpha^*(r)\neq \perp \\
        \alpha^*(b) & \text{if } \alpha^*(r)= \perp \text{ and }\alpha^*(b)\neq \perp \\
        \perp & \text{else}
        \end{cases}
    \]
    for $x\in \{r,b\}$. Furthermore, we set $\alpha'(x)=\perp$ for all $x\in Z_{2\OPT}$. By definition $(C',\alpha')$ agrees with $\mathcal F_{2\OPT}$.
    We claim that $(C',\alpha')$ has at most $z$ outliers. For every path and cycle in the decomposition of $H$, we claim that the number of outliers from its vertices with respect to $(C^*,\alpha^*)$ is greater than or equal to the number of outliers with respect to $(C',\alpha')$. 
    
    Let $C$ by a cycle in the decomposition, since the degree of every point $p$ on the cycle is $2$ we know that $p$ is part of a fairlet in $\mathcal F_{2\OPT}$ and not an outlier in $(C^*,\alpha^*)$, by definition of $\alpha'$ it is not an outlier in $(C',\alpha')$. 
    
    Let $Q=(v_1,\ldots, v_t)$ be a path in the decomposition of length at least $2$. For all points $v_i$ with $1<i<t$ we can argue as before that $\alpha'(p)\neq\perp$, since $v_i$ has degree $2$.
    Remember that every path contains at least as many edges from $E_2$ as from $E_1$, so at least one edge incident to $v_1$ or $v_t$ is in $E_2$. We can assume w.l.o.g. that this edge is $\{v_1,v_2\}$. Since $v_1$ is part of a fairlet and $v_2$ has degree at least $2$ we obtain by definition of $\alpha'$ that $\alpha'(v_1)=\alpha'(v_2)=\alpha^*(v_2)\neq \perp$.
    If $\{v_{t-1}, v_t\}\in E_2$ then we can make the same argument to conclude that $\alpha'(v_t)\neq \perp$ and conclude that no points on $Q$ are outliers in $(C',\alpha')$. If $\{v_{t-1},v_t\}\in E_1$ then we know that $v_t\in Z_{2\OPT}$ so $\alpha'(v_t)=\perp$ so $Q$ contains exactly $v_t$ as the only outlier from solution $(C',\alpha')$. However, $v_1$ is an outlier in $(C^*,\alpha^*)$ since it has no incident edges in $E_1$ and thus the number of outliers stays the same.

    It is left to consider paths of length 1 and isolated vertices. Since paths of length 1 always consist of an edge from $E_2$, we know that both endpoints of this edge are outliers in $(C^*,\alpha^*)$. Moreover, isolated vertices are also outliers in $(C^*,\alpha^*)$, so in both cases, there is nothing to show. We conclude that $(C',\alpha')$ has at most $z$ outliers. 
    
    Finally, we bound the cost of this solution. Given a fairlet $\{r,b\}\in \mathcal F_{2\OPT}$ with $\alpha'(r)=\alpha'(b)\neq \perp$. By definition this can only happen when $\alpha^*(r)\neq \perp$ or $\alpha^*(b)\neq \perp$. We assume w.l.o.g. that $\alpha^*(r)\neq \perp$. Then $d(r,\alpha'(r))=d(r,\alpha^*(r))\leq \OPT$ and $d(b,\alpha'(b))\leq d(b,r)+d(r,\alpha^*(r))\leq 3\OPT$. For $x\in Z_{2\OPT}$, there is nothing to show, since we always have $\alpha'(x)=\perp$ by definition. This proves the lemma.   
\end{proof}
\fi

\begin{lemma}
\label{lem:cost_bound_supplier_reduction}
    Let $(C_\delta,\alpha_\delta)$ be the solution computed by \Cref{alg:outliers_both_sides_supplier} on $\mathcal F_\delta$ and  let $r$ be the radius of the solution $(C_R,\alpha_R)$ computed in \cref{alg-line:k-sup-sol}. Then $(C_\delta, \alpha_\delta)$ is $1:1$-fair, has at most $z$ outliers and its cost is upper bounded by $r+\delta$.
\end{lemma}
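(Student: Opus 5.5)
The proof checks the three claims of \Cref{lem:cost_bound_supplier_reduction} directly from the construction in \Cref{alg:outliers_both_sides_supplier}. The only inputs are the properties of $\F_\delta\in\mathscr F$ (a matching in the threshold graph of \Cref{alg:outliers_both_sides_fairlets}) and the feasibility of the $k$-supplier with outliers solution $(C_R,\alpha_R)$ on $(R\setminus Z_\delta, P, d, z_\delta, k)$.

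\emph{Fairness.} Every fairlet $\{r,b\}\in\F_\delta$ receives the common label $\alpha_\delta(r)=\alpha_\delta(b)=\alpha_R(r)$. Hence each cluster $\alpha_\delta^{-1}(c)$ with $c\in C_\delta$ is a disjoint union of red--blue pairs. Every point outside $P(\F_\delta)$, i.e.\ every point of $Z_\delta$, is mapped to $\perp$. So each cluster contains equally many red and blue points and is $1:1$-fair. Also $|C_\delta|=|C_R|\le k$.

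\emph{Number of outliers.} The outliers of $(C_\delta,\alpha_\delta)$ are of two kinds: the points of $Z_\delta$, and the pairs $\{r,b\}\in\F_\delta$ with $\alpha_R(r)=\perp$. The second kind contributes at most $2z_\delta$ points, since $(C_R,\alpha_R)$ has at most $z_\delta$ outliers and distinct fairlets have distinct red points. The total is therefore at most $|Z_\delta|+2z_\delta=|Z_\delta|+(z-|Z_\delta|)=z$. Here we use that $|\F_\delta|\ge\frac{|P|-z}{2}$ implies $|Z_\delta|\le z$, so $z_\delta\ge 0$ is a legitimate outlier budget. If $z-|Z_\delta|$ is odd, we take $z_\delta=\lfloor (z-|Z_\delta|)/2\rfloor$ and the same count still works.

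\emph{Cost.} Take a non-outlier $x$; it lies in some fairlet $\{r,b\}\in\F_\delta$ with $\alpha_R(r)=c\neq\perp$. If $x=r$, then $d(r,c)\le r_{\mathrm{sup}}$, where $r_{\mathrm{sup}}$ is the radius of $(C_R,\alpha_R)$. If $x=b$, the fairlet is an edge of the graph built in \Cref{alg-line:graph-construction}, so $d(b,r)\le\delta$. The triangle inequality then gives $d(b,c)\le d(b,r)+d(r,c)\le\delta+r_{\mathrm{sup}}$. Taking the maximum over all non-outliers yields the bound $r+\delta$. There is no real obstacle in this argument; the only points needing care are the integrality of $z_\delta$ and the observation that centers are chosen from $P$, so the metric applies to them.
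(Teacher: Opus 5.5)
Your proof is correct and follows the paper's argument essentially step for step: fairness because every cluster is a union of red--blue fairlets, the outlier count $|Z_\delta|+2z_\delta=z$, and the cost bound $d(b,\alpha_R(r))\le d(b,r)+d(r,\alpha_R(r))\le\delta+r$. Your remark about rounding $z_\delta$ down when $z-|Z_\delta|$ is odd is extra care the paper leaves implicit; it changes nothing, since the number of outliers of $(C_R,\alpha_R)$ is an integer anyway.
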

\iffullappendix
\begin{proof}
     It is easy to see that $(C_\delta,\alpha_\delta)$ agrees with $\mathcal F_\delta$ and thus satisfies $1:1$-fairness.
     Let $Z$ be the set of outliers, we have $Z=Z_{\delta}\cup \{r,b\mid \{r,b\}\in \mathcal F_{\delta}, \alpha_R(r)=\perp\}$. Thus, the number of outliers is at most $|Z_\delta|+2z_\delta=z$. 
     It is left to bound the cost for all points $p\in P\backslash Z$. If $p\in R$ then we have $d(p,\alpha_\delta(r))=d(p,\alpha_R(r))\leq r$. If $p\in B$ then there exists $r\in R$ with $\{r,b\}\in \mathcal F_\delta$ and $d(b,\alpha_\delta(b))=d(b,\alpha_R(r))\leq d(b,r)+d(r,\alpha_R(r))\leq \delta+r$. This proves the lemma.  
\end{proof}
\fi
There exists a $3$-approximation algorithm for $k$-supplier with outliers~\cite{charikar01supplier}. Using this to compute a \(k\)-supplier with outliers solution \((C_R,\alpha_R)\) on \((R',P,d,z_{\delta},k)\) in \Cref{alg:outliers_both_sides_supplier}, we obtain the following result.

\iffullappendix
\ThmOutBothSides*
\begin{proof}
    Let $\OPT$ be the cost of an optimal $1:1$-fair $k$-center solution with outliers.
    We know by \Cref{lem:delta_min_bound} that $\delta_{\min}\leq 2\OPT$. Let $(C_{2\OPT},\alpha_{2\OPT})$ be the solution that we obtain when we execute \Cref{alg:outliers_both_sides_supplier} for $\mathcal F_{2\OPT}$. To compute the $k$-supplier solution in \Cref{alg-line:k-sup-sol} we use the $3$-approximation for $k$-supplier with outliers~\cite{charikar01supplier}. Let $r$ be the radius of the solution $(C_R,\alpha_R)$ computed in \cref{alg-line:k-sup-sol}. We obtain by \Cref{lem:cost_bound_supplier} that $r\leq 9\OPT$. 
    We conclude by \Cref{lem:cost_bound_supplier_reduction} that the cost of $(C_{2\OPT},\alpha_{2\OPT})$ is upper bounded by $r+2\OPT\leq 11\OPT$.
    Now let $(C,\alpha)$ be the solution returned by \Cref{alg:outliers_both_sides_supplier}. We get that 
    \[\cost(C,\alpha)\leq \cost(C_{2\OPT},\alpha_{2\OPT})\leq 11\OPT.\] Furthermore, we know by \Cref{lem:cost_bound_supplier_reduction} that $(C,\alpha)$ is $1:1$-fair and has at most $z$ outliers. 

    Regarding the running time of \Cref{alg:outliers_both_sides_supplier}, there can be $O(n^2)$ possible fairlet decompositions.
    For every such fairlet decomposition, we compute a solution for $k$-supplier with outliers. This can be done in $O(n^3)$ \cite{charikar01supplier}.
    The subsequent assignments can be computed in $O(n)$.
\end{proof}
\fi

\begin{observation}
Notice that the clustering computed in \Cref{alg:outliers_both_sides_supplier} is not necessarily center-aware, since centers can be outliers. For the center-aware version, we can similarly prove a $14$-approximation in the following way:
Suppose we can ensure that $C_R\subset R\backslash Z_{\delta}$ in \Cref{alg-line:k-sup-sol}. Let $r\in C_R$ and $\{r,b\}\in \mathcal F_\delta$ be the fairlet containing $r$. We set $\alpha_\delta(r)=\alpha_\delta(b)=r$ to obtain a center-aware solution. 
By \Cref{lem:cost_bound_supplier}, there exists a solution $(C,\alpha)$ for $k$-supplier with outliers on $(R\backslash Z_\delta, P\backslash Z_\delta,d, z_\delta, k)$ with cost $3\OPT$. For $c\in C$ and its cluster $D$, we can pick an arbitrary point from $D\cap (R\backslash Z_{\delta})$ as a new center. This increases the cost by $3\OPT$. Thus, there exists a solution with cost $6\OPT$ for $k$-center with $z_\delta$ outliers on $R\backslash Z_\delta$. If we replace \Cref{alg-line:k-sup-sol} by a $2$-approximation for $k$-center with outliers \cite{chak20non-uniform}, this yields a $14$-approximation in total.
\end{observation}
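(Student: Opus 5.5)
The plan is to rerun the proof of \Cref{thm:outliers-both-sides} with the single fairlet decomposition $\F_{2\OPT}\in\mathscr F$. The only change is that \Cref{alg-line:k-sup-sol} is replaced by a $2$-approximation for $k$-center with outliers on $R'=R\setminus Z_{2\OPT}$ with $z_{2\OPT}$ outliers~\cite{chak20non-uniform}. Since \Cref{alg:outliers_both_sides_supplier} returns the cheapest solution over all $\F_\delta\in\mathscr F$, it suffices to show that the solution built from $\F_{2\OPT}$ has the following properties: it is $1:1$-fair, it has at most $z$ outliers, it is center-aware, and its cost is at most $14\OPT$.

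\textbf{Step 1 (existence of a cheap $k$-center solution on $R'$).} By \Cref{lem:cost_bound_supplier}, there is a $k$-supplier with outliers solution $(C,\alpha)$ on $(R',P,d,z_{2\OPT},k)$ of cost at most $3\OPT$. For every $c\in C$ whose cluster $D=\alpha^{-1}(c)$ is nonempty, pick an arbitrary $r_c\in D\subseteq R'$ as the new center, and drop centers with empty clusters. For $p\in D$ the triangle inequality gives
\[
d(p,r_c)\le d(p,c)+d(c,r_c)\le 6\OPT .
\]
The outlier set is unchanged. Hence the optimal $k$-center with $z_{2\OPT}$ outliers on $R'$ costs at most $6\OPT$, and the $2$-approximation returns centers $C_R\subseteq R'$ and an assignment $\alpha_R$ of radius at most $12\OPT$.

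\textbf{Step 2 (make it center-aware).} If some $r\in C_R$ is declared an outlier by $\alpha_R$, we reassign it to itself. This costs distance $0$ and only decreases the number of outliers, so we may assume $\alpha_R(r)=r$ for all $r\in C_R$.

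\textbf{Step 3 (lift to fairlets).} For every $\{r,b\}\in\F_{2\OPT}$, set $\alpha_\delta(r)=\alpha_\delta(b)=\alpha_R(r)$, and set $\alpha_\delta(p)=\perp$ for $p\in Z_{2\OPT}$. We then verify three properties.
\begin{itemize}
\item Fairness and outlier count follow exactly as in \Cref{lem:cost_bound_supplier_reduction}. Every cluster is a union of fairlets, and there are at most $|Z_{2\OPT}|+2z_{2\OPT}\le z$ outliers; if $z-|Z_{2\OPT}|$ is odd, take $z_{2\OPT}$ as the floor.
\item Center-awareness holds by Step 2: each $r\in C_R$ lies in a fairlet and is assigned to itself, and its blue partner goes with it.
\item For the cost, a red inlier has distance at most $12\OPT$ to its center. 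A blue inlier $b$ with partner $r$ has
\[
d(b,\alpha_R(r))\le d(b,r)+d(r,\alpha_R(r))\le 2\OPT+12\OPT=14\OPT .
\]
\end{itemize}

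The main point needing care is Step 1. \Cref{lem:cost_bound_supplier} only provides a supplier solution whose centers may lie anywhere in $P$, including outliers or blue points. The recentering argument must therefore be applied cluster by cluster on the inliers of $R'$, and it must leave the outlier set unchanged so that the budget $z_{2\OPT}$ is still respected. The remaining steps are routine repetitions of \Cref{lem:cost_bound_supplier_reduction}.
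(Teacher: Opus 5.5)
Your proposal is correct and is essentially the paper's argument. You recenter the $3\OPT$ supplier solution of \Cref{lem:cost_bound_supplier} at a red inlier of each cluster to get a $6\OPT$ solution for $k$-center with $z_{2\OPT}$ outliers on $R\setminus Z_{2\OPT}$. You then run the $2$-approximation to get radius $12\OPT$, let each center's fairlet be served by that center, and add $d(b,r)\le 2\OPT$ to reach $14\OPT$.

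Your extra details are sound: using the facility set $P$ (which is what the lemma actually provides), forcing centers to serve themselves, and flooring $z_\delta$. One point of phrasing: fairness, the outlier bound and center-awareness must be verified for every $\F_\delta\in\mathscr F$, not only $\F_{2\OPT}$, because the argmin may return a different one. Your Steps~2 and~3 apply verbatim to any $\delta$, so this is only a matter of stating it.
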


\iffullappendix
\section{\texorpdfstring{$1:1$}{1:1} fairness without outliers if \texorpdfstring{$|B|=|R|$}{there are equally many red and blue points}} \label{sec:appendix-1:1-fairness}

For the case that $|R|=|B|$ in the underlying point set, there exists a fairlet decomposition that covers $P$ completely. Chierichetti et al.\ construct fairlets with cost at most $\OPT$ and use these to achieve a 3-approximation for $1:1$-fair $k$-center \cite{chierichetti2017fair}. In the following, we give a slightly modified version of their algorithm and the proof.

The crucial difference from the fairlet construction used in the main body is that here, anchors can come from the whole point set $P$ rather than $A$. Initially, anchors for all point pairs are computed by setting $\anc(r,b) \coloneqq \arg\min_{x\in P} \max\{d(r,x),d(b,x)\}$ for all $(r,b)\in R\times B$. 
We denote the actual distance to be minimized by $w(r,b) \coloneqq \max\{d(r,\anc(r,b)), d(b,\anc(r,b))\}$. 
Using the anchors, a bipartite graph $G=(V, E)$ is constructed, with $V=R\cup B$ and $E=\{(r,b)\in R\times B \mid w(r,b) \leq \delta\}$, i.e., edges between any pair of a red and a blue point are added if the distance to their anchor is bounded by some threshold $\delta$. For $\delta$ large enough (at least for $\delta \ge \OPT$), there exists a perfect matching $M$ in $G$. The smallest such $\delta$ can be found by binary search on the sorted list of pairwise distances. This is formalized in \Cref{alg:compute-1-1-fairlets}.

\begin{algorithm}
    \LinesNumbered
    \caption{\textsc{compute-\(1:1\)-fairlets-OPT}}
    \label{alg:compute-1-1-fairlets}
    \SetKwInOut{Input}{Input}
    \SetKwInOut{Output}{Output}
    \BlankLine
    \Input{Point set $P$, distance metric $d$, integer $k$}
    \Output{Set of fairlets $\F$, anchor assignment $\anc\colon R\times B\to P$}
    \BlankLine
    \For{all $r\in R,\ b\in B$}{
        $\anc(r,b) \gets \arg\min_{x\in P}\max\{d(r,x), d(b,x)\}$\\
        $w(r,b) \gets \max\{d(r,\anc(r,b)), d(b,\anc(r,b))\}$
    }
    $D\gets \{d(x,y)\mid x,y\in P\}$\\
    sort $D$ increasingly\\
    \For{all $\delta \in D$}{
        construct bipartite graph $G_{\delta}=(R\cup B,\ \{(r,b)\in R\times B \mid w(r,b) \leq \delta\})$\\
        $\F\gets$ maximum matching in $G_{\delta}$\\
        \If{$|\F|=|R|$ \label{alg-line:enough-fairlets-found-1:1}}{
            \Return{$\F, \anc$}
        }
    }
\end{algorithm}

\begin{lemma} \label{lem:properties-computed-fairlets-1:1}
	\Cref{alg:compute-1-1-fairlets} computes a fairlet decomposition $\F$ and an anchor assignment $\anc\colon R\times B \to P$ such that
	$\cost(f) = \max_{x\in f}d(x,\anc(f)) \leq \OPT_\text{fair}$ for all $f\in \F$. 
\end{lemma}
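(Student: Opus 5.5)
The plan has two parts. First, exhibit a perfect matching in $G_{\OPTfair}$ whose edges all have anchor cost $w(r,b)\le\OPTfair$. Second, argue that \Cref{alg:compute-1-1-fairlets} stops at some $\delta\le\OPTfair$ and that any perfect matching it returns has cost at most that $\delta$.

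For the existence part, fix an optimal $1:1$-fair $k$-center clustering $(C^*,\alpha^*)$ with cost $\OPTfair$; since $|R|=|B|$ there are no outliers. Every cluster $C^*_j$ satisfies $|C^*_j\cap R|=|C^*_j\cap B|$, exactly as in \Cref{lem:find-agreeing-fairlet-decomposition-1:t} with $t=1$. So inside each cluster we can pair red and blue points bijectively. Together these pairs give a perfect matching $M^*$ between $R$ and $B$ that agrees with $\C^*$.

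For each pair $(r,b)\in M^*$, both points are assigned to the same center $c=\alpha^*(r)=\alpha^*(b)\in P$. Hence $\max\{d(r,c),d(b,c)\}\le\OPTfair$. The anchor $\anc(r,b)$ minimizes $\max\{d(r,x),d(b,x)\}$ over all $x\in P$, and $c$ is one of the candidates, so $w(r,b)\le\OPTfair$. This is the key point of the argument: anchors range over all of $P$, so the optimal center itself is a legal anchor and no factor $2$ is lost.

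For the algorithmic part, note that $\OPTfair$ is a pairwise distance between a point and its center, so $\OPTfair\in D$. In the iteration $\delta=\OPTfair$, every edge of $M^*$ is present in $G_\delta$. The maximum matching therefore has size $|R|$ and the test in \Cref{alg-line:enough-fairlets-found-1:1} succeeds. Since $D$ is scanned in increasing order, the algorithm returns at the first $\delta$ where the test succeeds, and this $\delta$ is at most $\OPTfair$.

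The returned $\F$ is a perfect matching in $G_\delta$, so every $f=\{r,b\}\in\F$ has $\cost(f)=w(r,b)\le\delta\le\OPTfair$. Being perfect, $\F$ covers $P$ with disjoint $1:1$-fairlets, so it is a fairlet decomposition.

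I expect no real obstacle here. The only care needed is to check that $\OPTfair\in D$ (the cost is realized by some point–center distance) and that the pairing respects cluster membership.
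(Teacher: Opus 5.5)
Your proof is correct and follows the same route as the paper. Both arguments pair red and blue points inside each optimal cluster, use the optimal center as an admissible anchor so that every such pair satisfies $w(r,b)\le\OPTfair$, and conclude that the perfect-matching test succeeds at $\delta=\OPTfair$. You are slightly more careful than the paper on two points: you state explicitly why $w(r,b)\le\OPTfair$, and you note that the algorithm may stop at an earlier $\delta\le\OPTfair$, which only improves the bound.
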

\iffullappendix
\begin{proof}
	$\F$ is computed as a maximum matching inside a bipartite graph with edges only between red and blue vertices. Hence, every element of $\F$ is a set $\{r,b\}$ consisting of a red point $r$ and a blue point $b$. Further, all these sets are pairwise disjoint due to the definition of a matching.
	The algorithm stops if a maximum matching of size $|R|$ is found. Such a matching has to cover all red points. This directly implies that $R\subseteq P(\F)$. As $|B|=|R|$, it follows that $P(\F)=P$.
	
	It remains to argue that the condition $|\F| = |R|$ in \Cref{alg-line:enough-fairlets-found-1:1} eventually holds true in one of the loop iterations. 
	Let $\C^*$ be an optimal 1:1-fair clustering excluding $z$ blue outliers with value $\OPT_\text{fair}$.
	Consider the iteration in which $\delta = \OPT_\text{fair}$. Such an iteration exists as the optimal radius can be described by the distance between two points in $P$, and $D$ contains all pairwise distances. 
	Every cluster $C^*\in \C^*$ fulfills $|C^*\cap R|=|C^*\cap B|$. Let $c^*$ be the optimal center of $C^*$. For all $p\in C^*$, $d(p,c^*)\leq \OPT_\text{fair}$. Therefore, for every pair of red and blue points $r\in C^*\cap R,\ b\in C^*\cap B$, an edge $\{r,b\}$ exists inside the bipartite graph $G$. Hence, we can find a matching of size $|C^*\cap R|$ between red and blue points within $C^*$. As $\C^*$ covers $R$ completely, we can find a matching of size $|R|$. 
	The bound on the distances follows directly as $\delta=\OPT_\text{fair}$.
\end{proof}
\fi

\begin{corollary} \label{cor:diameter-of-fairlet}
	Let $\F$ be the fairlet decomposition computed by \Cref{alg:compute-1-1-fairlets}. Let $f = (r,b) \in \F$. Then
	\[ d(r,b) \leq 2\OPT_{\text{fair}}. \]
\end{corollary}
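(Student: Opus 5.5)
The bound follows from \Cref{lem:properties-computed-fairlets-1:1} and the triangle inequality, so no new structural argument is needed. Fix a fairlet $f=(r,b)\in\F$ returned by \Cref{alg:compute-1-1-fairlets}. Its anchor is $\anc(f)=\anc(r,b)$, the point $x\in P$ minimizing $\max\{d(r,x),d(b,x)\}$. The algorithm only adds the pair $(r,b)$ to the matching if it is an edge of $G_\delta$, that is, if $w(r,b)\le\delta$ for the value $\delta$ at which it stops.

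\Cref{lem:properties-computed-fairlets-1:1} gives $\cost(f)=\max_{x\in f}d(x,\anc(f))\le\OPT_\text{fair}$. Unpacking this, both $d(r,\anc(r,b))\le\OPT_\text{fair}$ and $d(b,\anc(r,b))\le\OPT_\text{fair}$. The bound holds because the algorithm stops at the first $\delta\in D$ admitting a perfect matching, and that lemma showed $\delta=\OPT_\text{fair}$ already admits one, so the stopping value satisfies $\delta\le\OPT_\text{fair}$. Applying the triangle inequality through the anchor then yields
\[
d(r,b)\le d(r,\anc(r,b))+d(\anc(r,b),b)\le 2\OPT_\text{fair}.
\]

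There is no real obstacle. The only point to check is that the anchor is a genuine point of $P$ under the metric $d$, so the triangle inequality applies. It is, because $\anc(r,b)$ is chosen from $P$ by the $\arg\min$.
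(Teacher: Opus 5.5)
Your proof is correct and matches the paper's approach. The paper states this corollary without a separate proof, as an immediate consequence of the bound $\cost(f)\le\OPT_{\text{fair}}$ from the preceding lemma, combined with the triangle inequality through $\anc(r,b)$.
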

The algorithm can be implemented to run in time $\Theta(n^{3})$. The bottleneck is the computation of the anchors of all possible point pairs.

\begin{lemma} \label{lem:running-time-compute-1-1-fairlets}
    The overall running time of \Cref{alg:compute-1-1-fairlets} is in $\Theta(n^{3})$ if it uses binary search to find the smallest $\delta$ such that a maximum matching on the bipartite graph has size $|R|$.
\end{lemma}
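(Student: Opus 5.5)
The plan is to bound the three phases of \Cref{alg:compute-1-1-fairlets} separately and show that the anchor computation dominates. This gives the matching lower bound $\Omega(n^3)$ as well.

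\textbf{Anchor loop.} The loop runs over all pairs $(r,b)\in R\times B$. There are $|R|\cdot|B| = (n/2)^2 = \Theta(n^2)$ such pairs, since $|R|=|B|=n/2$ in this section. For each pair, computing $\anc(r,b)=\arg\min_{x\in P}\max\{d(r,x),d(b,x)\}$ means scanning all $x\in P$ and evaluating two distances and a maximum for each. That is $\Theta(n)$ work per pair, so this phase costs $\Theta(n^3)$. The computation is deterministic and always scans all of $P$, so the bound is tight.

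\textbf{Distance set.} Building $D$ takes $O(n^2)$ and sorting it takes $O(n^2\log n)$, since $|D|\in O(n^2)$.

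\textbf{Binary search.} Instead of the linear for-loop over $D$, we binary search for the smallest $\delta\in D$ such that $G_\delta$ has a perfect matching. This is valid by monotonicity: for $\delta\le\delta'$, $G_\delta$ is a subgraph of $G_{\delta'}$, so the maximum matching size is nondecreasing in $\delta$. The proof of \Cref{lem:properties-computed-fairlets-1:1} shows that some $\delta\in D$, namely $\OPT_\text{fair}$, admits a matching of size $|R|$. The search therefore makes $O(\log |D|)=O(\log n)$ iterations.

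Each iteration builds $G_\delta$ from the precomputed weights $w(r,b)$ in $O(n^2)$ time and computes a maximum bipartite matching. Using the max-flow algorithm of~\cite{brand2023maxflowalmostlinear}, as in \Cref{lem:running-time-compute-1:t-fairlets}, this costs $O(n^{2+o(1)})$. Hopcroft--Karp in $O(n^{2.5})$ would also suffice. This phase totals $O(n^{2+o(1)}\log n)$.

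\textbf{Conclusion.} Summing the phases gives $\Theta(n^3)+O(n^2\log n)+O(n^{2+o(1)}\log n)$. Because $n^{2+o(1)}\log n\in o(n^3)$, the total is $\Theta(n^3)$.

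The only delicate step is justifying that the matching phase is strictly subcubic. The argument needs to commit to a specific matching algorithm and use the logarithmic number of binary-search rounds, rather than iterating over all of $D$. A linear scan over $D$ with $O(n^2)$ values, each requiring a matching computation, would exceed $n^3$.
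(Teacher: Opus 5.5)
Your proposal is correct and follows the paper's proof essentially step by step. Both arguments use the same accounting: $\Theta(n^2)$ red--blue pairs times $\Theta(n)$ per anchor gives the dominating $\Theta(n^3)$ term, sorting $D$ costs $O(n^2\log n)$, and the binary search makes $O(\log n)$ rounds, each with $O(n^2)$ graph construction and an $O(n^{2+o(1)})$ max-flow matching via \cite{brand2023maxflowalmostlinear}. Your explicit monotonicity argument, showing that the binary search is valid, is a small addition the paper leaves implicit.
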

\iffullappendix
\begin{proof}
    There are $\Theta(n^2)$ possible pairs consisting of one red and one blue point. Computing the anchor of any such pair takes $\Theta(n)$ time. This amounts to a total running time of $\Theta(n^3)$ for the computation of all anchors in the first for-loop.
    We show that this part dominates the algorithm's running time.

    The set $D$ consists of $O(n^2)$ elements and can be sorted in $O(n^2\log(n))$.
    For a given $\delta \in D$, we construct a bipartite graph $G_{\delta}$ in $O(n^2)$. Computing a maximum matching can be done in $O(|E|^{1+o(1)}) = O(n^{2+o(1)})$ by finding the maximum flow in the corresponding source-target network \cite{brand2023maxflowalmostlinear}.
    We find the smallest $\delta \in D$ such that $G_{\delta}$ contains a perfect matching by performing a binary search on $D$.
\end{proof}
\fi
The centers are computed by performing farthest-first traversal on $P$ for $k$ iterations and assigning fairlets to the center that is closest to their anchor point.
This approach already yields a valid 3-approximation. Note that it can be made center-aware as follows: As soon as a point $c$ is added to the set of centers, we assign all points inside its containing fairlet $f_c$ to $c$ and remove them from $P$. That way, they cannot become centers later on. The whole algorithm is stated in \Cref{alg:1:1-balanced-3-approx}.

\begin{algorithm}
    \LinesNumbered
    \caption{\textsc{\texorpdfstring{$1:1$}{1:1}-fair-\texorpdfstring{$k$}{k}-center-without-outliers}}
    \label{alg:1:1-balanced-3-approx}
    \SetKwInOut{Input}{Input}
    \SetKwInOut{Output}{Output}
    \BlankLine
    \Input{Point set $P$, distance metric $d$, integer $k$}
    \Output{Fair $k$-center clustering given by a set of centers $C$ and a fair assignment $\alpha\colon P \to C$}
    \BlankLine
    $\mathcal{F}, \anc \gets \textsc{compute-1:1-fairlets}(P,d,k) $ \label{alg-line:call-compute-1:1-fairlets}\\
    choose $c_1\in P$ arbitrarily\\
    $P \gets P\setminus f_{c_1}$\\
    \For{$i=2,\ldots, k$}{
        $c_i \gets \arg\max_{c\in P}\min_{j<i}d(c,c_j)$\\
        $P\gets P\setminus f_{c_i}$\\
    } \label{alg-line:end-of-farthest-first-traversal-1:1-balanced}
    \For{$i=1,\ldots,k$}{
        \For{all $x\in f_{c_i}$}{
            $\alpha(x) \gets c_i$ \label{alg-line:assign-fairlets-to-centers-within}
        }
    }
    \For{$f\in \F\setminus \{f_{c_i}\mid i\leq k\}$}{
        $c_f \gets \arg\min_{c\in C} d(c,\anc(f))$\\
        \For{all $x\in f$}{
            $\alpha(x) \gets c_f$ \label{alg-line:assign-remaining-fairlets-1:1-balanced}
        }
    }
    \Return{$c_1, \ldots, c_k, \alpha$}
\end{algorithm}

\begin{theorem}[$1:1$ fairness, $|R|=|B|$]
    Given a set of points $P=R\cupdot B$ with $|R|=|B|$,
    \Cref{alg:1:1-balanced-3-approx} computes a center-aware 3-approximation for 1:1-fair $k$-center.
\end{theorem}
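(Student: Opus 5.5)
The plan is to prove the $3$-approximation by the usual fairlet-plus-Gonzalez argument, adapted to anchors that may lie anywhere in $P$.

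\textbf{Validity.} By \Cref{lem:properties-computed-fairlets-1:1}, the returned $\F$ is a perfect red--blue matching covering $P$. Each fairlet is assigned as a whole, either in \Cref{alg-line:assign-fairlets-to-centers-within} or in \Cref{alg-line:assign-remaining-fairlets-1:1-balanced}, so every cluster is a union of $1:1$-fairlets and is therefore $1:1$-fair. Removing $f_{c_i}$ from $P$ after choosing $c_i$ ensures that no later center lies in an already used fairlet. Hence the fairlets $f_{c_1},\ldots,f_{c_k}$ are distinct, and each $c_i$ is assigned to itself, which gives center-awareness. This presumes $k\le |R|$, i.e.\ that $P$ is not emptied before $k$ centers are chosen; otherwise we simply stop early.

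\textbf{Cost of fairlets assigned within.} Take $x\in f_{c_i}$ with $f_{c_i}=\{r,b\}$. Then
\[d(x,c_i)\le d(r,b)\le 2\OPT_{\text{fair}}\]
by \Cref{cor:diameter-of-fairlet}. Note that the bound uses the diameter of the fairlet, not its anchor, since $c_i$ is one of the two points of $f_{c_i}$.

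\textbf{Cost of the remaining fairlets.} Let $f\notin\{f_{c_i}\}$. Then $d(x,\anc(f))\le \OPT_{\text{fair}}$ for $x\in f$, by \Cref{lem:properties-computed-fairlets-1:1}. It remains to show that $\min_i d(\anc(f),c_i)\le 2\OPT_{\text{fair}}$, which gives the total bound $3\OPT_{\text{fair}}$.

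This is the main obstacle, for two reasons. First, $\anc(f)$ need not be a point of $f$. Second, it may already have been removed from $P$, so it need not be a candidate for farthest-first traversal. I would argue via the points of $f$ instead. Both points of $f$ survive all removals, because $f$ is not one of the removed fairlets. So at the end of the traversal, the farthest remaining point $p^*$ has some distance $\Delta$ to $\{c_1,\dots,c_k\}$. The points $c_1,\ldots,c_k,p^*$ are pairwise at distance at least $\Delta$: when each $c_i$ was chosen it was farthest among the remaining points, and distances to the growing center set only shrink, while $p^*$ was present in every round. By pigeonhole, two of these $k+1$ points share an optimal cluster, so $\Delta\le 2\OPT_{\text{fair}}$.

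Hence each $x\in f$ is within $2\OPT_{\text{fair}}$ of some center. But the algorithm assigns $f$ to the center nearest to $\anc(f)$, so the triangle inequality only gives
\[d(x,c_f)\le d(x,\anc(f))+d(\anc(f),c_f)\le \OPT_{\text{fair}}+\OPT_{\text{fair}}+2\OPT_{\text{fair}}=4\OPT_{\text{fair}},\]
since $d(\anc(f),c_f)\le d(\anc(f),r)+d(r,c)\le \OPT_{\text{fair}}+2\OPT_{\text{fair}}$ for the center $c$ nearest to the point $r\in f$.

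To reach $3$, I would first try to sharpen the pigeonhole step so that it is centered at anchors. Take the optimal cluster containing $f$, with optimal center $c^*$; its points lie within $\OPT_{\text{fair}}$ of $c^*$. Some center $c_j$ was drawn from the same optimal cluster, or else the $k+1$ points $c_1,\dots,c_k$ and a point of $f$ lie in distinct clusters. Neither alternative directly bounds $d(\anc(f),c_j)$. A cleaner fix is to read the assignment rule as choosing the center nearest to the fairlet's own points, or to note that $\anc(f)$ can be taken as a point realizing the minimax bound. If neither works, I would fall back to defining $c_f$ as $\arg\min_c \max_{x\in f}d(x,c)$. Then the bound $\max_{x\in f}d(x,c)\le \OPT_{\text{fair}}+2\OPT_{\text{fair}}$ is obtained by taking a point $p\in f$, the center $c_j$ in the same optimal cluster as $p$ (which exists by pigeonhole unless $p$ lies at distance at most $2\OPT_{\text{fair}}$ from some center by the $\Delta$ argument), and writing
\[d(x,c_j)\le d(x,c^*)+d(c^*,c_j)\le 2\OPT_{\text{fair}}.\]
This is the step whose exact form I expect to require the most care.
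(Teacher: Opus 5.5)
There is a genuine gap. As written, your argument proves only a factor of $4$. Your proposed repairs do not close it. Assigning $f$ to the center nearest its own points, or redefining $c_f$ as a minimax center, changes the algorithm, whereas the theorem concerns the rule $c_f=\arg\min_{c\in C}d(c,\anc(f))$. The fallback also contains an unjustified step. The inequality $d(x,c_j)\le d(x,c^*)+d(c^*,c_j)\le 2\OPT_{\text{fair}}$ presumes that $f$ and $c_j$ lie in one optimal cluster with center $c^*$. But the computed fairlet $f$ need not agree with an optimal clustering. Moreover, pigeonhole on $c_1,\dots,c_k$ and $p$ only puts \emph{some} two of these points in a common cluster, not $p$ together with a center. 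In the other branch you only get $d(p,c)\le 2\OPT_{\text{fair}}$, which again yields $4\OPT_{\text{fair}}$.

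The missing idea is to apply your two bounds to the anchor itself, via a case split on where $\anc(f)\in P$ ends up. You observed that $\anc(f)$ may have been removed from $P$ during the traversal. If so, it lies in some removed fairlet $f_{c_i}$, and that fairlet also contains the center $c_i$. Hence $d(\anc(f),c_f)\le d(\anc(f),c_i)\le 2\OPT_{\text{fair}}$ by the diameter bound of \Cref{cor:diameter-of-fairlet}. If it was not removed, then $\anc(f)$ survived every round, and your own farthest-first argument applies to it directly: $d(\anc(f),c_f)\le\Delta\le 2\OPT_{\text{fair}}$. Here $\Delta$ is the distance from the farthest surviving point to $\{c_1,\dots,c_k\}$. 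In both cases
\[
d(x,c_f)\le d(x,\anc(f))+d(\anc(f),c_f)\le \OPT_{\text{fair}}+2\OPT_{\text{fair}}=3\OPT_{\text{fair}},
\]
which is exactly the paper's proof. Routing through a point $r\in f$ is what costs you the extra $\OPT_{\text{fair}}$, namely the leg $d(\anc(f),r)$.
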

\iffullappendix
\begin{proof}
    \Cref{alg:compute-1-1-fairlets} called in \Cref{alg-line:call-compute-1:1-fairlets} outputs a set of fairlets $\F$ and their corresponding anchors.
    Let $C = \{c_1,\ldots,c_k\}$ be the set of centers computed with farthest-first traversal until \Cref{alg-line:end-of-farthest-first-traversal-1:1-balanced}. Note that this set does not change throughout the algorithm.
    Let $\F_C = \{f_c \mid c\in C\} \subseteq \F$ denote the set of fairlets containing centers.

    For $x\in P$ let $f_x\in \F$ denote the fairlet with $x\in f_x$.
    For $x\in P$, we distinguish the following two cases.
    Either $x\in P(\F_C)$. Then, $\alpha(x) \in f_x$ by \Cref{alg-line:assign-fairlets-to-centers-within}. Therefore, $d(x,\alpha(x)) \le 2\OPT_{\text{fair}}$ by \Cref{lem:properties-computed-fairlets-1:1} and triangle inequality.

    Otherwise, if $x\in P\setminus P(\F_C)$, then $x$ is assigned to the center closest to $\anc(f_x)$.
    By \Cref{lem:properties-computed-fairlets-1:1}, $d(x,\anc(f_x)) \leq \OPT_{fair}$.  
    It remains to show that $d(\anc(f),c_f) \leq 2\OPT_{\text{fair}}$ for all $f\in \F$ and $c_f \coloneqq \arg\min_{c\in C}d(c,\anc(f))$. Then, using triangle inequality and \Cref{lem:properties-computed-fairlets-1:1}, we can conclude that \[d(x,\alpha(x)) = d(x,c_f) \leq d(x,\anc(f_x)) + d(\anc(f_x), c_f) \leq 3\OPT_{\text{fair}}.\]

    We make a case distinction. 
    \begin{itemize}
        \item If $\anc(f)\in P(\F_C)$, then there exists a center $c'\in C \cap f_{\anc(f)}$. By \Cref{cor:diameter-of-fairlet},  $d(\anc(f),c_f) \leq d(\anc(f),c') \leq 2\OPT_{\text{fair}}$.
        \item Otherwise, $\anc(f)\in P\setminus P(\F_C)$. 
        Let $c_{k+1} \coloneqq \arg\max_{x\in P\setminus P(\F_C)}\min_{j\leq k}d(x,c_j) $ and\linebreak $\Delta \coloneqq \min_{j\leq k}d(c_{k+1},c_j) $. 
        By construction, $c_1,\ldots,c_k,c_{k+1}$ form $k+1$ points with pairwise distance of at least $\Delta$. In any (optimal) solution, at least two of these points need to be contained in the same cluster. This implies that the diameter of any optimal clustering is at least $\Delta$, and therefore, $\OPT_{\text{fair}} \geq \frac{\Delta}{2} $.
        By construction of $c_{k+1}$, 
        \[d(c_f,\anc(f)) \leq \min_{j\leq k} d(c_{k+1},c_j) = \Delta \leq 2\OPT_{\text{fair}}.\] 
    \end{itemize}
\end{proof}
\fi
\fi

\iffullappendix
\section{$8$-approximation guarantee when choosing centers from the inlier point set}
\label{sec:appendix-choice-of-centers-8-approx}

For completeness, we complement our lower bound from Section~\ref{sec:considerations:choosingcenters} by showing that this algorithm variant guarantees an \(8\)-approximation.
A difficulty in the analysis is that the set of covered points might not coincide with the set of points covered by some optimal solution. 
Consequently, our solution might choose points as centers that are outliers in the optimal solution, which makes it harder to compare with. 

Let $(C^*,\OPT)$ be some optimal solution.
We will first show that $(C^*,3\OPT)$ is a feasible $k$-center solution on $P(\F)$. We can use this observation to show that $(C,6\OPT)$ is a feasible $k$-center solution on $P(\F)$ as well, where $C$ are the centers computed by $\textsc{compute-centers}(P(\F),d,k)$. Through the assignment of fairlets, we incur an extra cost of $2\OPT$ as the anchors might be outliers in our solution. 

\begin{corollary} \label{cor:old-approach-our-solution-can-be-covered-by-3OPT}
Consider an optimal solution with centers $c_1^*,...,c_k^*$ and optimal radius OPT. 
Then, the set $P\setminus Z$ can be covered with centers $c_1^*,...,c_k^*$ and radius $3\OPT$.
\end{corollary}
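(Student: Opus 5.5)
Here $Z$ is the outlier set $P\setminus P(\F)$ of the computed fairlet decomposition, so $P\setminus Z=P(\F)$. I will show that every point of $P(\F)$ lies within $3\OPT$ of some optimal center $c_j^*$. The plan is to take an arbitrary inlier $p\in P(\F)$ and find a point of the same computed fairlet that the optimal solution clusters. The cost bound from \Cref{cor:1-t-fairlet-decomposition-2OPT} then transfers the $\OPT$ radius to $p$ with an additive $2\OPT$.

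The key observation concerns the anchor. Since outliers are only ever taken from the non-minority colors, every optimal solution covers all of $R$ (respectively $H_1$). Take $f\in\F$ with $p\in f$, and let $r=\anc(f)\in R$ be its anchor. Then $r$ is an inlier of the optimal solution, so $d(r,c_j^*)\le\OPT$ for the optimal center $c_j^*$ of the cluster containing $r$.

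Next I split into two cases. If $p=r$, then $p$ is covered with radius $\OPT$. Otherwise $p$ is a blue point of $f$, and \Cref{cor:1-t-fairlet-decomposition-2OPT} gives $d(p,r)\le 2\OPT$, since the algorithm is run with $\tau=2$ and $\delta=\OPT$. The triangle inequality then yields $d(p,c_j^*)\le d(p,r)+d(r,c_j^*)\le 3\OPT$.

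The only step needing care is justifying that the anchor is never an optimal outlier. This rests on the one-sided outlier model and the choice $A=R$. In the multicolor case, the same argument goes through with \Cref{cor:1-t2-...-tm-fairlet-decomposition-2OPT} in place of \Cref{cor:1-t-fairlet-decomposition-2OPT} and $H_1$ in place of $R$. No assignment constraints are needed here, since the statement is about covering $P\setminus Z$ as an unconstrained $k$-center instance.
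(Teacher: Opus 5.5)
Your proof is correct and is essentially the paper's argument. Red points are never optimal outliers, so the red point $r$ of $p$'s fairlet lies within $\OPT$ of some $c_i^*$, and the triangle inequality then puts the blue partner within $3\OPT$. The paper's proof differs only in treating the anchor as an arbitrary point of $P$ with $d(r,\anc(f)),\,d(\anc(f),b)\le\OPT$ (so the anchor need not be red and may even be an outlier) and routing the $2\OPT$ through it, whereas you set $\anc(f)=r$ and use $d(r,b)\le 2\OPT$ directly; your argument really only needs the red point of the fairlet, not its anchor, so it works under either convention.
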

\iffullappendix
\begin{proof}
    Let $p\in P\backslash Z$. Then, there exists a fairlet $f\in \F$ such that $p\in f$. Either $p$ is a red or a blue point inside $f=\{b,r\}$.
    If $p=r$, then $d(p,c_i^*) \le \OPT$ for some $i\le k$ because red points are never outliers.

    Otherwise, let $p=b$. As argued above, there exists a center $c_i^*$ such that $d(r,c_i^*)\le \OPT$ for the red point $r$ in $f$. The distance from any point to its anchor point is at most $\OPT$ by construction. 
    By, triangle inequality
    \[ d(c^*_i,b)\leq d(c^*_i,r)+d(r,\anc(f))+d(\anc(f),b)\leq \OPT + \OPT + \OPT = 3\OPT. \]
\end{proof}
\fi

\begin{theorem}
    \Cref{alg:general-case-outlier-algorithm} that uses \textsc{compute-centers}($P(\F),d,k$) to compute the centers yields an $8$-approximation for $1:1$-fair $k$-center with outliers.
\end{theorem}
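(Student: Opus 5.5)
We bound the final cost as in \Cref{lem:4-approximation}, but we decompose each point's distance differently. The issue is that the centers now come from $P(\F)$ rather than from the anchor set. Fix a point $p\in f\in\F$ with anchor $\anc(f)=r$, the red point of $f$. Let $\rho'\coloneqq\max_{q\in P(\F)}\min_{c\in C}d(q,c)$ be the covering radius of the farthest-first centers $C$ over $P(\F)$. Since $r\in P(\F)$, we have $d(r,c_f)\le\rho'$. We also have $d(p,r)\le\varphi$, so $d(p,c_f)\le\varphi+\rho'$. The plan is to show $\varphi\le 2\OPT$ and $\rho'\le 6\OPT$.

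The bound $\varphi\le 2\OPT$ is exactly \Cref{cor:1-t-fairlet-decomposition-2OPT} with $t=1$. For $\rho'$, first show that $P(\F)$ can be covered by the optimal centers $c_1^*,\dots,c_k^*$ with radius $3\OPT$. A red point $r$ is never an outlier, so some $c_i^*$ satisfies $d(r,c_i^*)\le\OPT$. A blue point $b$ lies in a fairlet with its red partner $r$, so
\[
d(b,c_i^*)\le d(b,r)+d(r,c_i^*)\le 2\OPT+\OPT = 3\OPT.
\]
This is the analogue of \Cref{cor:old-approach-our-solution-can-be-covered-by-3OPT}, routed through the red point instead of an external anchor. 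Take care here: that corollary assumes anchor distance $\OPT$, whereas in our setting only $d(b,r)\le 2\OPT$ is available.

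Next we run the argument of \Cref{lem:compute-centers-2OPT-Gonzalez} with $A=P(\F)$, using the valid $k$-center solution on $P(\F)$ of radius $S=3\OPT$ just constructed. Its centers may lie outside $P(\F)$, and the lemma explicitly allows this. Pick $c_{k+1}\in P(\F)$ farthest from $C$. Among the $k+1$ points, two share a ball of radius $3\OPT$, so they are within distance $6\OPT$ of each other. By the farthest-first property this gives $\rho'\le 6\OPT$. Combining the two bounds, every inlier is within $2\OPT+6\OPT=8\OPT$ of its assigned center.

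The main obstacle is the covering step. A careless bound routes $b$ through an external anchor and pays an extra $\OPT$, while the figure suggests $8$ is tight. We therefore need the anchor to be exactly the red point, so that only the $2\OPT$ fairlet edge is used. We also need to check that the pigeonhole argument in \Cref{lem:compute-centers-2OPT-Gonzalez} only requires the balls to cover $A$, not all of $P$. This holds by the proof as written, since every one of the $k+1$ chosen points lies in $P(\F)$.
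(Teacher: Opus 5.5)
Your proof is correct, but it uses a different anchor model from the paper's proof, so the $8\OPT$ bound is assembled differently.

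\textbf{Your route.} You take \Cref{alg:general-case-outlier-algorithm} literally with \Cref{alg:compute-1-t-fairlets} for $t=1$. Each anchor is the red point of its fairlet, and the fairlet guarantee is only $d(r,b)\le 2\OPT$. Because the anchor lies in $P(\F)$, it is itself within $6\OPT$ of a farthest-first center, which gives $2\OPT+6\OPT$.

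\textbf{The paper's route.} The paper's proof assumes anchors as in \Cref{sec:appendix-1:1-fairness}. There an anchor is an arbitrary point of $P$ within $\OPT$ of both fairlet points, and it may be an outlier, hence outside $P(\F)$. The covering step (\Cref{cor:old-approach-our-solution-can-be-covered-by-3OPT}) still yields $3\OPT$, now via $d(c_i^*,r)+d(r,\anc(f))+d(\anc(f),b)$. The pigeonhole step still yields $6\OPT$ on $P(\F)$. Since $\anc(f)$ need not be covered, the paper detours through a fairlet point $p_1$. Some center is within $6\OPT$ of $p_1$, hence within $7\OPT$ of $\anc(f)$. So the center chosen for $f$ is within $7\OPT$ of the anchor and within $8\OPT$ of each fairlet point.

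\textbf{Comparison.} The paper pays an extra $\OPT$ at the anchor-to-center step and recovers it through the smaller fairlet radius. You pay it in the fairlet radius and get the anchor covered for free. Your route matches the max-flow algorithm actually specified in the main body and is a one-line modification of \Cref{lem:4-approximation}. The paper's route handles anchors outside $P(\F)$.

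Your closing remark overstates things. Routing through an external anchor of radius $\OPT$ loses nothing overall, as the paper's proof shows. What your argument really needs is only that the anchor lies in $P(\F)$ with fairlet radius at most $2\OPT$, and red anchors provide exactly that.
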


\iffullappendix
\begin{proof}
    Let $c_1,...,c_k$ be the set of centers computed by farthest-first traversal on $P(\F)$ and $c_1^*,\ldots,c_k^*$ be the set of centers of some fixed optimal $1:1$-fair $k$-center with outliers solution. Let $c_{k+1}$ be the point that is farthest away from $c_1,...,c_k$. By \Cref{cor:old-approach-our-solution-can-be-covered-by-3OPT} and the pigeonhole principle, there exist two distinct points $c_i,c_j$ in $c_1,...,c_{k+1}$ and an optimal center $c^*$ such that $d(c_i,c^*)\le 3\OPT$ and $d(c_j,c^*)\le 3\OPT$. By triangle inequality, $d(c_i,c_j) \le 6\OPT$. 
    The distance $d(c_{k+1},\{c_1,...,c_k\})$ corresponds to the radius of the solution that assigns every point in $P(\F)$ to its closest center. It must be less than or equal to $d(c_i,c_j)$, since otherwise $c_{k+1}$ would be selected as a center in iteration $i$ or $j$: 
    \[d(c_{k+1},\{c_1,...,c_k\})\leq d(c_i,c_j)\leq 6\OPT.\] 
    Hence, for all $p\in P\backslash Z$, there exists a center $c_i$ with $i\in\{1,...,k\}$ such that $d(p,c_i)\leq 6\OPT$. 
    However, such an assignment is not necessarily fair.
    
    The algorithm assigns the points of a fairlet to the center that is closest to their anchor.
    Let $f=\{p_1,p_2\}\in \F$ and $\anc(f)$ be the anchor.
    It might happen that $\anc(f)$ is an outlier.
    By construction, we have $d(p_1,\anc(f))\leq \OPT$ and $d(\anc(f),p_2)\leq \OPT$. The algorithm assigns $p_1$ and $p_2$ to the center that is closest to $\anc(f)$. There exists a center $c$ such that $d(c,p_1)\leq 6\OPT$ and therefore,
    \[d(c,\anc(f))\leq d(c,p_1)+d(p_1,\anc(f))\leq 6\OPT + \OPT = 7\OPT.\]
    Since $\alpha(p_1)$ minimizes the distance to $\anc(f)$, we have: $d(\alpha(p_1),\anc(f)) \le d(c,\anc(f))\le 7\OPT$. It follows that 
    \[d(\alpha(p_1),p_1) \leq d(\alpha(p_1),\anc(f)) + d(\anc(f),p_1)\leq 7\OPT+\OPT=8\OPT\]
    and analogously $d(\alpha(p_2),p_2)\le 8\OPT$.
\end{proof}
\fi
\fi

\iffullappendix
\clearpage
\section{Numerical results for clustering costs}\label{appendix:costvalues}

\begin{minipage}[t]{0.48\textwidth}
Data Set \textbf{bank}\\[4pt]
{\small
\begin{tblr}{lrrrr}
\toprule
$k$ & \SetCell[c=2]{c} cost & & \SetCell[c=2]{c} fairlet cost & \\
 & ours & Chier & ours & Chier \\
\midrule
1 & \textbf{1.391} & 1.496 & \textbf{0.726} & 0.957 \\
2 & \textbf{1.178} & 1.271 & \textbf{0.726} & 0.929 \\
3 & \textbf{1.078} & 1.183 & \textbf{0.726} & 0.929 \\
4 & \textbf{1.014} & 1.123 & \textbf{0.726} & 0.949 \\
5 & \textbf{0.939} & 1.062 & \textbf{0.726} & 0.936 \\
6 & \textbf{0.920} & 1.058 & \textbf{0.726} & 0.955 \\
7 & \textbf{0.901} & 1.049 & \textbf{0.726} & 0.970 \\
8 & \textbf{0.874} & 1.005 & \textbf{0.726} & 0.939 \\
9 & \textbf{0.854} & 0.998 & \textbf{0.726} & 0.945 \\
10 & \textbf{0.829} & 0.973 & \textbf{0.726} & 0.918 \\
15 & \textbf{0.787} & 0.948 & \textbf{0.726} & 0.929 \\
20 & \textbf{0.764} & 0.941 & \textbf{0.726} & 0.935 \\
25 & \textbf{0.751} & 0.917 & \textbf{0.726} & 0.914 \\
30 & \textbf{0.737} & 0.909 & \textbf{0.726} & 0.911 \\
40 & \textbf{0.728} & 0.955 & \textbf{0.726} & 0.957 \\
50 & \textbf{0.724} & 0.934 & \textbf{0.726} & 0.935 \\
100 & \textbf{0.731} & 0.910 & \textbf{0.726} & 0.912 \\
\bottomrule
\end{tblr}}
\end{minipage}
\hfill
\begin{minipage}[t]{0.48\textwidth}
Data Set \textbf{census}\\[4pt]
{\small
\begin{tblr}{lrrrr}
\toprule
$k$ & \SetCell[c=2]{c} cost & & \SetCell[c=2]{c} fairlet cost & \\
 & ours & Chier & ours & Chier \\
\midrule
1 & \textbf{1.558} & 1.817 & \textbf{0.854} & 1.625 \\
2 & \textbf{1.388} & 1.726 & \textbf{0.854} & 1.704 \\
3 & \textbf{1.217} & 1.688 & \textbf{0.854} & 1.675 \\
4 & \textbf{1.158} & 1.650 & \textbf{0.854} & 1.655 \\
5 & \textbf{1.111} & 1.650 & \textbf{0.854} & 1.659 \\
6 & \textbf{1.046} & 1.641 & \textbf{0.854} & 1.662 \\
7 & \textbf{1.019} & 1.643 & \textbf{0.854} & 1.648 \\
8 & \textbf{0.984} & 1.617 & \textbf{0.854} & 1.638 \\
9 & \textbf{0.960} & 1.657 & \textbf{0.854} & 1.679 \\
10 & \textbf{0.951} & 1.653 & \textbf{0.854} & 1.688 \\
15 & \textbf{0.902} & 1.626 & \textbf{0.854} & 1.653 \\
20 & \textbf{0.881} & 1.599 & \textbf{0.854} & 1.620 \\
25 & \textbf{0.879} & 1.576 & \textbf{0.854} & 1.613 \\
30 & \textbf{0.863} & 1.586 & \textbf{0.854} & 1.615 \\
40 & \textbf{0.853} & 1.659 & \textbf{0.854} & 1.677 \\
50 & \textbf{0.850} & 1.635 & \textbf{0.854} & 1.666 \\
100 & \textbf{0.851} & 1.579 & \textbf{0.854} & 1.600 \\
\bottomrule
\end{tblr}}
\end{minipage}

\begin{minipage}[t]{0.48\textwidth}
Data Set \textbf{census} (pr. attr. \textbf{race})\\[4pt]
{\small
\begin{tblr}{lrrrr}
\toprule
$k$ & \SetCell[c=2]{c} cost & & \SetCell[c=2]{c} fairlet cost & \\
 & ours & Chier & ours & Chier \\
\midrule
1 & \textbf{1.655} & 1.827 & \textbf{0.831} & 0.904 \\
2 & \textbf{1.637} & 1.810 & \textbf{0.821} & 0.901 \\
3 & \textbf{1.619} & 1.839 & \textbf{0.822} & 0.914 \\
4 & \textbf{1.524} & 1.841 & \textbf{0.811} & 0.918 \\
5 & \textbf{1.520} & 1.838 & \textbf{0.855} & 0.957 \\
6 & \textbf{1.520} & 1.818 & \textbf{0.901} & 1.003 \\
7 & \textbf{1.447} & 1.809 & \textbf{0.965} & 1.015 \\
8 & \textbf{1.702} & 1.947 & \textbf{1.088} & 1.151 \\
9 & 1.838 & \textbf{1.811} & \textbf{0.604} & 0.707 \\
\bottomrule
\end{tblr}}
\end{minipage}
\hfill
\begin{minipage}[t]{0.48\textwidth}
Data set \textbf{diabetes}\\[4pt]
{\small
\begin{tblr}{lrrrr}
\toprule
$k$ & \SetCell[c=2]{c} cost & & \SetCell[c=2]{c} fairlet cost & \\
 & ours & Chier & ours & Chier \\
\midrule
1 & \textbf{2.609} & 2.641 & \textbf{1.690} & 1.812 \\
2 & \textbf{2.442} & 2.471 & \textbf{1.690} & 1.820 \\
3 & \textbf{2.321} & 2.378 & \textbf{1.690} & 1.813 \\
4 & \textbf{2.266} & 2.325 & \textbf{1.690} & 1.805 \\
5 & \textbf{2.218} & 2.250 & \textbf{1.690} & 1.808 \\
6 & \textbf{2.166} & 2.212 & \textbf{1.690} & 1.797 \\
7 & \textbf{2.141} & 2.170 & \textbf{1.690} & 1.808 \\
8 & \textbf{2.099} & 2.168 & \textbf{1.690} & 1.811 \\
9 & \textbf{2.080} & 2.111 & \textbf{1.690} & 1.816 \\
10 & \textbf{2.048} & 2.112 & \textbf{1.690} & 1.817 \\
15 & \textbf{1.953} & 2.030 & \textbf{1.690} & 1.822 \\
20 & \textbf{1.897} & 1.967 & \textbf{1.690} & 1.809 \\
25 & \textbf{1.858} & 1.940 & \textbf{1.690} & 1.815 \\
30 & \textbf{1.824} & 1.907 & \textbf{1.690} & 1.804 \\
40 & \textbf{1.792} & 1.885 & \textbf{1.690} & 1.808 \\
50 & \textbf{1.790} & 1.892 & \textbf{1.690} & 1.806 \\
100 & \textbf{1.819} & 1.917 & \textbf{1.690} & 1.805 \\
\bottomrule
\end{tblr}}
\end{minipage}

\begin{minipage}[t]{0.48\textwidth}
Data set \textbf{diabetes} (pr. attr. \textbf{race})\\[4pt]
{\small
\begin{tblr}{lrrrr}
\toprule
$k$ & \SetCell[c=2]{c} cost & & \SetCell[c=2]{c} fairlet cost & \\
 & ours & Chier & ours & Chier \\
\midrule
1 & \textbf{2.601} & 2.651 & \textbf{1.726} & 1.896 \\
2 & \textbf{2.550} & 2.596 & \textbf{1.726} & 1.907 \\
3 & \textbf{2.523} & 2.581 & \textbf{1.721} & 1.920 \\
4 & \textbf{2.496} & 2.574 & \textbf{1.688} & 1.874 \\
5 & \textbf{2.485} & 2.586 & \textbf{1.683} & 1.892 \\
6 & \textbf{2.467} & 2.573 & \textbf{1.653} & 1.832 \\
7 & \textbf{2.450} & 2.555 & \textbf{1.648} & 1.864 \\
8 & \textbf{2.430} & 2.581 & \textbf{1.581} & 1.772 \\
9 & \textbf{2.418} & 2.578 & \textbf{1.574} & 1.749 \\
10 & \textbf{2.311} & 2.522 & \textbf{1.567} & 1.698 \\
\bottomrule
\end{tblr}}
\end{minipage}
\hfill
\begin{minipage}[t]{0.48\textwidth}
Data set \textbf{income}\\[4pt]
{\small
\begin{tblr}{lrrrr}
\toprule
$k$ & \SetCell[c=2]{c} cost & & \SetCell[c=2]{c} fairlet cost & \\
 & ours & Chier & ours & Chier \\
\midrule
1 & \textbf{2.972} & 3.015 & \textbf{1.604} & 1.791 \\
2 & \textbf{2.702} & 2.752 & \textbf{1.604} & 1.789 \\
3 & \textbf{2.552} & 2.598 & \textbf{1.604} & 1.788 \\
4 & \textbf{2.430} & 2.496 & \textbf{1.604} & 1.784 \\
5 & \textbf{2.354} & 2.422 & \textbf{1.604} & 1.787 \\
6 & \textbf{2.292} & 2.377 & \textbf{1.604} & 1.787 \\
7 & \textbf{2.239} & 2.336 & \textbf{1.604} & 1.790 \\
8 & \textbf{2.200} & 2.287 & \textbf{1.604} & 1.789 \\
9 & \textbf{2.162} & 2.239 & \textbf{1.604} & 1.790 \\
10 & \textbf{2.130} & 2.222 & \textbf{1.604} & 1.791 \\
15 & \textbf{2.006} & 2.086 & \textbf{1.604} & 1.782 \\
20 & \textbf{1.927} & 2.031 & \textbf{1.604} & 1.787 \\
25 & \textbf{1.865} & 1.969 & \textbf{1.604} & 1.778 \\
30 & \textbf{1.814} & 1.933 & \textbf{1.604} & 1.781 \\
40 & \textbf{1.758} & 1.890 & \textbf{1.604} & 1.794 \\
50 & \textbf{1.719} & 1.849 & \textbf{1.604} & 1.780 \\
100 & \textbf{1.665} & 1.837 & \textbf{1.604} & 1.790 \\
\bottomrule
\end{tblr}}
\end{minipage}
\clearpage

\twocolumn[]
\section{Input ratios and output ratios}\label{appendixinputandoutputratios}

{\small
\begin{longtblr}[
  caption = {bank},
]{
  colspec = {lrrrr},
  rowhead = 1,
}
\toprule
Inst. & G1 & G2 & Ratio & Ideal \\
\midrule
00000 & \textbf{421} & 579 & 1.00:1.38 & 1:1 \\
00001 & \textbf{376} & 624 & 1.00:1.66 & 1:1 \\
00002 & \textbf{425} & 575 & 1.00:1.35 & 1:1 \\
00003 & \textbf{415} & 585 & 1.00:1.41 & 1:1 \\
00004 & \textbf{380} & 620 & 1.00:1.63 & 1:1 \\
00005 & \textbf{395} & 605 & 1.00:1.53 & 1:1 \\
00006 & \textbf{410} & 590 & 1.00:1.44 & 1:1 \\
00007 & \textbf{393} & 607 & 1.00:1.54 & 1:1 \\
00008 & \textbf{419} & 581 & 1.00:1.39 & 1:1 \\
00009 & \textbf{417} & 583 & 1.00:1.40 & 1:1 \\
00010 & \textbf{406} & 594 & 1.00:1.46 & 1:1 \\
00011 & \textbf{388} & 612 & 1.00:1.58 & 1:1 \\
00012 & \textbf{418} & 582 & 1.00:1.39 & 1:1 \\
00013 & \textbf{422} & 578 & 1.00:1.37 & 1:1 \\
00014 & \textbf{397} & 603 & 1.00:1.52 & 1:1 \\
00015 & \textbf{406} & 594 & 1.00:1.46 & 1:1 \\
00016 & \textbf{380} & 620 & 1.00:1.63 & 1:1 \\
00017 & \textbf{401} & 599 & 1.00:1.49 & 1:1 \\
00018 & \textbf{421} & 579 & 1.00:1.38 & 1:1 \\
00019 & \textbf{359} & 641 & 1.00:1.79 & 1:1 \\
00020 & \textbf{376} & 624 & 1.00:1.66 & 1:1 \\
00021 & \textbf{387} & 613 & 1.00:1.58 & 1:1 \\
00022 & \textbf{407} & 593 & 1.00:1.46 & 1:1 \\
00023 & \textbf{401} & 599 & 1.00:1.49 & 1:1 \\
00024 & \textbf{381} & 619 & 1.00:1.62 & 1:1 \\
00025 & \textbf{396} & 604 & 1.00:1.53 & 1:1 \\
00026 & \textbf{402} & 598 & 1.00:1.49 & 1:1 \\
00027 & \textbf{412} & 588 & 1.00:1.43 & 1:1 \\
00028 & \textbf{377} & 623 & 1.00:1.65 & 1:1 \\
00029 & \textbf{397} & 603 & 1.00:1.52 & 1:1 \\
00030 & \textbf{393} & 607 & 1.00:1.54 & 1:1 \\
00031 & \textbf{386} & 614 & 1.00:1.59 & 1:1 \\
00032 & \textbf{390} & 610 & 1.00:1.56 & 1:1 \\
00033 & \textbf{388} & 612 & 1.00:1.58 & 1:1 \\
00034 & \textbf{388} & 612 & 1.00:1.58 & 1:1 \\
00035 & \textbf{412} & 588 & 1.00:1.43 & 1:1 \\
00036 & \textbf{390} & 610 & 1.00:1.56 & 1:1 \\
00037 & \textbf{405} & 595 & 1.00:1.47 & 1:1 \\
00038 & \textbf{391} & 609 & 1.00:1.56 & 1:1 \\
00039 & \textbf{387} & 613 & 1.00:1.58 & 1:1 \\
00040 & \textbf{407} & 593 & 1.00:1.46 & 1:1 \\
00041 & \textbf{396} & 604 & 1.00:1.53 & 1:1 \\
00042 & \textbf{425} & 575 & 1.00:1.35 & 1:1 \\
00043 & \textbf{373} & 627 & 1.00:1.68 & 1:1 \\
00044 & \textbf{402} & 598 & 1.00:1.49 & 1:1 \\
\bottomrule
\end{longtblr}}
\onecolumn

\twocolumn[]
{\small
\begin{longtblr}[
  caption = {census},
]{
  colspec = {lrrrr},
  rowhead = 1,
}
\toprule
Inst. & G1 & G2 & Ratio & Ideal \\
\midrule
00000 & 388 & \textbf{212} & 1.83:1.00 & 1:1 \\
00001 & 379 & \textbf{221} & 1.71:1.00 & 1:1 \\
00002 & 368 & \textbf{232} & 1.59:1.00 & 1:1 \\
00003 & 406 & \textbf{194} & 2.09:1.00 & 2:1 \\
00004 & 416 & \textbf{184} & 2.26:1.00 & 2:1 \\
00005 & 415 & \textbf{185} & 2.24:1.00 & 2:1 \\
00006 & 411 & \textbf{189} & 2.17:1.00 & 2:1 \\
00007 & 398 & \textbf{202} & 1.97:1.00 & 1:1 \\
00008 & 408 & \textbf{192} & 2.12:1.00 & 2:1 \\
00009 & 411 & \textbf{189} & 2.17:1.00 & 2:1 \\
00010 & 399 & \textbf{201} & 1.99:1.00 & 1:1 \\
00011 & 404 & \textbf{196} & 2.06:1.00 & 2:1 \\
00012 & 402 & \textbf{198} & 2.03:1.00 & 2:1 \\
00013 & 398 & \textbf{202} & 1.97:1.00 & 1:1 \\
00014 & 423 & \textbf{177} & 2.39:1.00 & 2:1 \\
00015 & 406 & \textbf{194} & 2.09:1.00 & 2:1 \\
00016 & 390 & \textbf{210} & 1.86:1.00 & 1:1 \\
00017 & 364 & \textbf{236} & 1.54:1.00 & 1:1 \\
00018 & 390 & \textbf{210} & 1.86:1.00 & 1:1 \\
00019 & 413 & \textbf{187} & 2.21:1.00 & 2:1 \\
00020 & 405 & \textbf{195} & 2.08:1.00 & 2:1 \\
00021 & 386 & \textbf{214} & 1.80:1.00 & 1:1 \\
00022 & 404 & \textbf{196} & 2.06:1.00 & 2:1 \\
00023 & 401 & \textbf{199} & 2.02:1.00 & 2:1 \\
00024 & 405 & \textbf{195} & 2.08:1.00 & 2:1 \\
00025 & 417 & \textbf{183} & 2.28:1.00 & 2:1 \\
00026 & 398 & \textbf{202} & 1.97:1.00 & 1:1 \\
00027 & 396 & \textbf{204} & 1.94:1.00 & 1:1 \\
00028 & 423 & \textbf{177} & 2.39:1.00 & 2:1 \\
00029 & 407 & \textbf{193} & 2.11:1.00 & 2:1 \\
00030 & 391 & \textbf{209} & 1.87:1.00 & 1:1 \\
00031 & 398 & \textbf{202} & 1.97:1.00 & 1:1 \\
00032 & 401 & \textbf{199} & 2.02:1.00 & 2:1 \\
00033 & 402 & \textbf{198} & 2.03:1.00 & 2:1 \\
00034 & 395 & \textbf{205} & 1.93:1.00 & 1:1 \\
00035 & 410 & \textbf{190} & 2.16:1.00 & 2:1 \\
00036 & 390 & \textbf{210} & 1.86:1.00 & 1:1 \\
00037 & 382 & \textbf{218} & 1.75:1.00 & 1:1 \\
00038 & 413 & \textbf{187} & 2.21:1.00 & 2:1 \\
00039 & 418 & \textbf{182} & 2.30:1.00 & 2:1 \\
00040 & 416 & \textbf{184} & 2.26:1.00 & 2:1 \\
00041 & 410 & \textbf{190} & 2.16:1.00 & 2:1 \\
00042 & 377 & \textbf{223} & 1.69:1.00 & 1:1 \\
00043 & 412 & \textbf{188} & 2.19:1.00 & 2:1 \\
00044 & 407 & \textbf{193} & 2.11:1.00 & 2:1 \\
00045 & 403 & \textbf{197} & 2.05:1.00 & 2:1 \\
00046 & 398 & \textbf{202} & 1.97:1.00 & 1:1 \\
00047 & 402 & \textbf{198} & 2.03:1.00 & 2:1 \\
00048 & 404 & \textbf{196} & 2.06:1.00 & 2:1 \\
00049 & 417 & \textbf{183} & 2.28:1.00 & 2:1 \\
00050 & 411 & \textbf{189} & 2.17:1.00 & 2:1 \\
00051 & 395 & \textbf{205} & 1.93:1.00 & 1:1 \\
00052 & 402 & \textbf{198} & 2.03:1.00 & 2:1 \\
00053 & 393 & \textbf{207} & 1.90:1.00 & 1:1 \\
\bottomrule
\end{longtblr}}
\onecolumn

{\small
\begin{longtblr}[
  caption = {census (protected attribute: race)},
  label = {census-race},
]{
  colspec = {lrrrrrrr},
  rowhead = 1,
}
\toprule
Inst. & G1 & G2 & G3 & G4 & G5 & Ratio & Ideal \\
\midrule
00000 & 4 & \textbf{3} & 18 & 53 & 522 & 1.33:1.00:6.00:17.67:174.00 & 1:1:6:17:174 \\
00001 & \textbf{4} & 11 & 18 & 54 & 513 & 1.00:2.75:4.50:13.50:128.25 & 1:2:4:13:128 \\
00002 & \textbf{5} & 6 & 29 & 54 & 506 & 1.00:1.20:5.80:10.80:101.20 & 1:1:5:10:101 \\
00003 & \textbf{4} & 5 & 16 & 70 & 505 & 1.00:1.25:4.00:17.50:126.25 & 1:1:4:17:126 \\
00004 & \textbf{3} & 9 & 14 & 54 & 520 & 1.00:3.00:4.67:18.00:173.33 & 1:3:4:18:173 \\
00005 & 8 & \textbf{5} & 15 & 49 & 523 & 1.60:1.00:3.00:9.80:104.60 & 1:1:3:9:104 \\
00006 & 5 & \textbf{4} & 14 & 48 & 529 & 1.25:1.00:3.50:12.00:132.25 & 1:1:3:12:132 \\
00007 & \textbf{3} & 5 & 15 & 61 & 516 & 1.00:1.67:5.00:20.33:172.00 & 1:1:5:20:172 \\
00008 & \textbf{3} & 5 & 24 & 64 & 504 & 1.00:1.67:8.00:21.33:168.00 & 1:1:8:21:168 \\
00009 & 5 & \textbf{2} & 16 & 55 & 522 & 2.50:1.00:8.00:27.50:261.00 & 2:1:8:27:261 \\
00010 & 8 & \textbf{7} & 15 & 57 & 513 & 1.14:1.00:2.14:8.14:73.29 & 1:1:2:8:73 \\
00011 & \textbf{8} & \textbf{8} & 19 & 57 & 508 & 1.00:1.00:2.38:7.12:63.50 & 1:1:2:7:63 \\
00012 & \textbf{7} & \textbf{7} & 24 & 66 & 496 & 1.00:1.00:3.43:9.43:70.86 & 1:1:3:9:70 \\
00013 & \textbf{5} & 7 & 12 & 59 & 517 & 1.00:1.40:2.40:11.80:103.40 & 1:1:2:11:103 \\
00014 & 7 & \textbf{5} & 14 & 50 & 524 & 1.40:1.00:2.80:10.00:104.80 & 1:1:2:10:104 \\
00015 & \textbf{7} & 8 & 20 & 59 & 506 & 1.00:1.14:2.86:8.43:72.29 & 1:1:2:8:72 \\
00016 & \textbf{5} & 7 & 24 & 55 & 509 & 1.00:1.40:4.80:11.00:101.80 & 1:1:4:11:101 \\
00017 & 3 & \textbf{2} & 20 & 66 & 509 & 1.50:1.00:10.00:33.00:254.50 & 1:1:10:33:254 \\
00018 & 6 & \textbf{3} & 12 & 56 & 523 & 2.00:1.00:4.00:18.67:174.33 & 2:1:4:18:174 \\
00019 & 9 & \textbf{3} & 12 & 64 & 512 & 3.00:1.00:4.00:21.33:170.67 & 3:1:4:21:170 \\
00020 & 8 & \textbf{3} & 16 & 62 & 511 & 2.67:1.00:5.33:20.67:170.33 & 2:1:5:20:170 \\
00021 & 9 & \textbf{5} & 28 & 60 & 498 & 1.80:1.00:5.60:12.00:99.60 & 1:1:5:12:99 \\
00022 & \textbf{1} & 5 & 21 & 60 & 513 & 1.00:5.00:21.00:60.00:513.00 & 1:5:21:60:513 \\
00023 & \textbf{4} & 6 & 15 & 61 & 514 & 1.00:1.50:3.75:15.25:128.50 & 1:1:3:15:128 \\
00024 & \textbf{4} & \textbf{4} & 16 & 66 & 510 & 1.00:1.00:4.00:16.50:127.50 & 1:1:4:16:127 \\
00025 & 9 & \textbf{7} & 11 & 50 & 523 & 1.29:1.00:1.57:7.14:74.71 & 1:1:1:7:74 \\
00026 & \textbf{2} & 3 & 25 & 77 & 493 & 1.00:1.50:12.50:38.50:246.50 & 1:1:12:38:246 \\
00027 & 6 & \textbf{5} & 18 & 50 & 521 & 1.20:1.00:3.60:10.00:104.20 & 1:1:3:10:104 \\
00028 & 4 & \textbf{3} & 24 & 56 & 513 & 1.33:1.00:8.00:18.67:171.00 & 1:1:8:18:171 \\
00029 & \textbf{3} & \textbf{3} & 20 & 51 & 523 & 1.00:1.00:6.67:17.00:174.33 & 1:1:6:17:174 \\
00030 & \textbf{3} & 6 & 27 & 57 & 507 & 1.00:2.00:9.00:19.00:169.00 & 1:2:9:19:169 \\
00031 & \textbf{5} & 9 & 14 & 63 & 509 & 1.00:1.80:2.80:12.60:101.80 & 1:1:2:12:101 \\
00032 & 6 & \textbf{5} & 17 & 61 & 511 & 1.20:1.00:3.40:12.20:102.20 & 1:1:3:12:102 \\
00033 & \textbf{3} & 7 & 21 & 51 & 518 & 1.00:2.33:7.00:17.00:172.67 & 1:2:7:17:172 \\
00034 & 7 & \textbf{6} & 21 & 55 & 511 & 1.17:1.00:3.50:9.17:85.17 & 1:1:3:9:85 \\
00035 & \textbf{6} & \textbf{6} & 18 & 41 & 529 & 1.00:1.00:3.00:6.83:88.17 & 1:1:3:6:88 \\
00036 & \textbf{3} & 7 & 21 & 63 & 506 & 1.00:2.33:7.00:21.00:168.67 & 1:2:7:21:168 \\
00037 & \textbf{5} & 6 & 23 & 65 & 501 & 1.00:1.20:4.60:13.00:100.20 & 1:1:4:13:100 \\
00038 & \textbf{1} & 6 & 22 & 49 & 522 & 1.00:6.00:22.00:49.00:522.00 & 1:6:22:49:522 \\
00039 & 4 & \textbf{3} & 27 & 68 & 498 & 1.33:1.00:9.00:22.67:166.00 & 1:1:9:22:166 \\
00040 & 6 & \textbf{2} & 34 & 62 & 496 & 3.00:1.00:17.00:31.00:248.00 & 3:1:17:31:248 \\
00041 & \textbf{5} & 7 & 13 & 35 & 540 & 1.00:1.40:2.60:7.00:108.00 & 1:1:2:7:108 \\
00042 & \textbf{5} & 6 & 23 & 44 & 522 & 1.00:1.20:4.60:8.80:104.40 & 1:1:4:8:104 \\
00043 & 6 & \textbf{5} & 22 & 73 & 494 & 1.20:1.00:4.40:14.60:98.80 & 1:1:4:14:98 \\
00044 & \textbf{2} & \textbf{2} & 20 & 62 & 514 & 1.00:1.00:10.00:31.00:257.00 & 1:1:10:31:257 \\
00045 & \textbf{5} & 7 & 14 & 46 & 528 & 1.00:1.40:2.80:9.20:105.60 & 1:1:2:9:105 \\
00046 & \textbf{4} & 5 & 19 & 62 & 510 & 1.00:1.25:4.75:15.50:127.50 & 1:1:4:15:127 \\
00047 & \textbf{4} & 6 & 21 & 54 & 515 & 1.00:1.50:5.25:13.50:128.75 & 1:1:5:13:128 \\
00048 & \textbf{5} & 7 & 19 & 56 & 513 & 1.00:1.40:3.80:11.20:102.60 & 1:1:3:11:102 \\
00049 & \textbf{8} & 9 & 21 & 63 & 499 & 1.00:1.12:2.62:7.88:62.38 & 1:1:2:7:62 \\
00050 & \textbf{3} & 12 & 15 & 66 & 504 & 1.00:4.00:5.00:22.00:168.00 & 1:4:5:22:168 \\
00051 & \textbf{9} & 12 & 22 & 53 & 504 & 1.00:1.33:2.44:5.89:56.00 & 1:1:2:5:56 \\
00052 & \textbf{5} & 6 & 20 & 68 & 501 & 1.00:1.20:4.00:13.60:100.20 & 1:1:4:13:100 \\
00053 & \textbf{2} & 6 & 16 & 44 & 532 & 1.00:3.00:8.00:22.00:266.00 & 1:3:8:22:266 \\
\bottomrule
\end{longtblr}}
\onecolumn

\twocolumn[]
{\small
\begin{longtblr}[
  caption = {diabetes},
]{
  colspec = {lrrrr},
  rowhead = 1,
}
\toprule
Inst. & G1 & G2 & Ratio & Ideal \\
\midrule
00000 & \textbf{469} & 531 & 1.00:1.13 & 1:1 \\
00001 & \textbf{437} & 563 & 1.00:1.29 & 1:1 \\
00002 & \textbf{466} & 534 & 1.00:1.15 & 1:1 \\
00003 & \textbf{468} & 532 & 1.00:1.14 & 1:1 \\
00004 & \textbf{445} & 555 & 1.00:1.25 & 1:1 \\
00005 & \textbf{453} & 547 & 1.00:1.21 & 1:1 \\
00006 & \textbf{465} & 535 & 1.00:1.15 & 1:1 \\
00007 & \textbf{450} & 550 & 1.00:1.22 & 1:1 \\
00008 & \textbf{471} & 529 & 1.00:1.12 & 1:1 \\
00009 & \textbf{444} & 556 & 1.00:1.25 & 1:1 \\
00010 & \textbf{461} & 539 & 1.00:1.17 & 1:1 \\
00011 & \textbf{463} & 537 & 1.00:1.16 & 1:1 \\
00012 & \textbf{464} & 536 & 1.00:1.16 & 1:1 \\
00013 & \textbf{464} & 536 & 1.00:1.16 & 1:1 \\
00014 & \textbf{470} & 530 & 1.00:1.13 & 1:1 \\
00015 & \textbf{436} & 564 & 1.00:1.29 & 1:1 \\
00016 & \textbf{442} & 558 & 1.00:1.26 & 1:1 \\
00017 & \textbf{495} & 505 & 1.00:1.02 & 1:1 \\
00018 & \textbf{430} & 570 & 1.00:1.33 & 1:1 \\
00019 & \textbf{443} & 557 & 1.00:1.26 & 1:1 \\
00020 & \textbf{458} & 542 & 1.00:1.18 & 1:1 \\
00021 & \textbf{470} & 530 & 1.00:1.13 & 1:1 \\
00022 & \textbf{482} & 518 & 1.00:1.07 & 1:1 \\
00023 & \textbf{462} & 538 & 1.00:1.16 & 1:1 \\
00024 & \textbf{457} & 543 & 1.00:1.19 & 1:1 \\
00025 & \textbf{476} & 524 & 1.00:1.10 & 1:1 \\
00026 & \textbf{468} & 532 & 1.00:1.14 & 1:1 \\
00027 & \textbf{473} & 527 & 1.00:1.11 & 1:1 \\
00028 & \textbf{453} & 547 & 1.00:1.21 & 1:1 \\
00029 & \textbf{456} & 544 & 1.00:1.19 & 1:1 \\
00030 & \textbf{452} & 548 & 1.00:1.21 & 1:1 \\
00031 & \textbf{468} & 532 & 1.00:1.14 & 1:1 \\
00032 & \textbf{445} & 555 & 1.00:1.25 & 1:1 \\
00033 & \textbf{463} & 537 & 1.00:1.16 & 1:1 \\
00034 & \textbf{468} & 532 & 1.00:1.14 & 1:1 \\
00035 & \textbf{441} & 559 & 1.00:1.27 & 1:1 \\
00036 & \textbf{449} & 551 & 1.00:1.23 & 1:1 \\
00037 & \textbf{417} & 583 & 1.00:1.40 & 1:1 \\
00038 & \textbf{463} & 537 & 1.00:1.16 & 1:1 \\
00039 & \textbf{436} & 564 & 1.00:1.29 & 1:1 \\
00040 & \textbf{469} & 531 & 1.00:1.13 & 1:1 \\
00041 & \textbf{479} & 521 & 1.00:1.09 & 1:1 \\
00042 & \textbf{463} & 537 & 1.00:1.16 & 1:1 \\
00043 & \textbf{465} & 535 & 1.00:1.15 & 1:1 \\
00044 & \textbf{461} & 539 & 1.00:1.17 & 1:1 \\
00045 & \textbf{476} & 524 & 1.00:1.10 & 1:1 \\
00046 & \textbf{443} & 557 & 1.00:1.26 & 1:1 \\
00047 & \textbf{452} & 548 & 1.00:1.21 & 1:1 \\
00048 & \textbf{462} & 538 & 1.00:1.16 & 1:1 \\
00049 & \textbf{496} & 504 & 1.00:1.02 & 1:1 \\
00050 & \textbf{480} & 520 & 1.00:1.08 & 1:1 \\
00051 & \textbf{492} & 508 & 1.00:1.03 & 1:1 \\
00052 & \textbf{470} & 530 & 1.00:1.13 & 1:1 \\
00053 & \textbf{485} & 515 & 1.00:1.06 & 1:1 \\
00054 & \textbf{469} & 531 & 1.00:1.13 & 1:1 \\
00055 & \textbf{460} & 540 & 1.00:1.17 & 1:1 \\
00056 & \textbf{456} & 544 & 1.00:1.19 & 1:1 \\
00057 & \textbf{483} & 517 & 1.00:1.07 & 1:1 \\
00058 & \textbf{470} & 530 & 1.00:1.13 & 1:1 \\
00059 & \textbf{464} & 536 & 1.00:1.16 & 1:1 \\
00060 & \textbf{436} & 564 & 1.00:1.29 & 1:1 \\
00061 & \textbf{445} & 555 & 1.00:1.25 & 1:1 \\
00062 & \textbf{466} & 534 & 1.00:1.15 & 1:1 \\
00063 & \textbf{481} & 519 & 1.00:1.08 & 1:1 \\
00064 & \textbf{442} & 558 & 1.00:1.26 & 1:1 \\
00065 & \textbf{482} & 518 & 1.00:1.07 & 1:1 \\
00066 & \textbf{468} & 532 & 1.00:1.14 & 1:1 \\
00067 & \textbf{467} & 533 & 1.00:1.14 & 1:1 \\
00068 & \textbf{455} & 545 & 1.00:1.20 & 1:1 \\
00069 & \textbf{458} & 542 & 1.00:1.18 & 1:1 \\
00070 & \textbf{471} & 529 & 1.00:1.12 & 1:1 \\
00071 & \textbf{436} & 564 & 1.00:1.29 & 1:1 \\
00072 & \textbf{455} & 545 & 1.00:1.20 & 1:1 \\
00073 & \textbf{491} & 509 & 1.00:1.04 & 1:1 \\
00074 & \textbf{479} & 521 & 1.00:1.09 & 1:1 \\
00075 & \textbf{459} & 541 & 1.00:1.18 & 1:1 \\
00076 & \textbf{441} & 559 & 1.00:1.27 & 1:1 \\
00077 & \textbf{454} & 546 & 1.00:1.20 & 1:1 \\
00078 & \textbf{479} & 521 & 1.00:1.09 & 1:1 \\
00079 & \textbf{451} & 549 & 1.00:1.22 & 1:1 \\
00080 & \textbf{489} & 511 & 1.00:1.04 & 1:1 \\
00081 & \textbf{459} & 541 & 1.00:1.18 & 1:1 \\
00082 & \textbf{468} & 532 & 1.00:1.14 & 1:1 \\
00083 & \textbf{489} & 511 & 1.00:1.04 & 1:1 \\
00084 & \textbf{474} & 526 & 1.00:1.11 & 1:1 \\
00085 & \textbf{434} & 566 & 1.00:1.30 & 1:1 \\
00086 & \textbf{475} & 525 & 1.00:1.11 & 1:1 \\
00087 & \textbf{437} & 563 & 1.00:1.29 & 1:1 \\
00088 & \textbf{456} & 544 & 1.00:1.19 & 1:1 \\
00089 & \textbf{442} & 558 & 1.00:1.26 & 1:1 \\
00090 & \textbf{457} & 543 & 1.00:1.19 & 1:1 \\
\bottomrule
\end{longtblr}}
\onecolumn

{\small
\begin{longtblr}[
  caption = {diabetes (protected attribute: race)},
  label = {diabetes-race},
]{
  colspec = {lrrrrrrrr},
  rowhead = 1,
}
\toprule
Inst. & G1 & G2 & G3 & G4 & G5 & G6 & Ratio & Ideal \\
\midrule
00000 & \textbf{8} & 11 & 29 & 13 & 196 & 743 & 1.00:1.38:3.62:1.62:24.50:92.88 & 1:1:3:1:24:92 \\
00001 & \textbf{9} & 13 & 26 & 16 & 190 & 746 & 1.00:1.44:2.89:1.78:21.11:82.89 & 1:1:2:1:21:82 \\
00002 & \textbf{9} & 17 & 25 & 31 & 190 & 728 & 1.00:1.89:2.78:3.44:21.11:80.89 & 1:1:2:3:21:80 \\
00003 & \textbf{9} & 15 & 24 & 13 & 163 & 776 & 1.00:1.67:2.67:1.44:18.11:86.22 & 1:1:2:1:18:86 \\
00004 & \textbf{4} & 10 & 23 & 25 & 185 & 753 & 1.00:2.50:5.75:6.25:46.25:188.25 & 1:2:5:6:46:188 \\
00005 & \textbf{10} & 15 & 23 & 16 & 180 & 756 & 1.00:1.50:2.30:1.60:18.00:75.60 & 1:1:2:1:18:75 \\
00006 & \textbf{3} & 17 & 22 & 11 & 174 & 773 & 1.00:5.67:7.33:3.67:58.00:257.67 & 1:5:7:3:58:257 \\
00007 & \textbf{8} & 15 & 20 & 13 & 208 & 736 & 1.00:1.88:2.50:1.62:26.00:92.00 & 1:1:2:1:26:92 \\
00008 & \textbf{6} & 20 & 27 & 19 & 208 & 720 & 1.00:3.33:4.50:3.17:34.67:120.00 & 1:3:4:3:34:120 \\
00009 & \textbf{2} & 15 & 24 & 21 & 178 & 760 & 1.00:7.50:12.00:10.50:89.00:380.00 & 1:7:12:10:89:380 \\
00010 & \textbf{9} & 12 & 17 & 26 & 187 & 749 & 1.00:1.33:1.89:2.89:20.78:83.22 & 1:1:1:2:20:83 \\
00011 & \textbf{8} & 12 & 25 & 19 & 181 & 755 & 1.00:1.50:3.12:2.38:22.62:94.38 & 1:1:3:2:22:94 \\
00012 & \textbf{6} & 21 & 23 & 19 & 189 & 742 & 1.00:3.50:3.83:3.17:31.50:123.67 & 1:3:3:3:31:123 \\
00013 & 7 & \textbf{5} & 21 & 24 & 185 & 758 & 1.40:1.00:4.20:4.80:37.00:151.60 & 1:1:4:4:37:151 \\
00014 & \textbf{8} & 13 & 17 & 22 & 187 & 753 & 1.00:1.62:2.12:2.75:23.38:94.12 & 1:1:2:2:23:94 \\
00015 & \textbf{8} & 21 & 27 & 18 & 187 & 739 & 1.00:2.62:3.38:2.25:23.38:92.38 & 1:2:3:2:23:92 \\
00016 & \textbf{5} & 12 & 18 & 20 & 177 & 768 & 1.00:2.40:3.60:4.00:35.40:153.60 & 1:2:3:4:35:153 \\
00017 & \textbf{4} & 24 & 27 & 20 & 211 & 714 & 1.00:6.00:6.75:5.00:52.75:178.50 & 1:6:6:5:52:178 \\
00018 & \textbf{3} & 14 & 17 & 21 & 172 & 773 & 1.00:4.67:5.67:7.00:57.33:257.67 & 1:4:5:7:57:257 \\
00019 & \textbf{8} & 22 & 20 & 18 & 195 & 737 & 1.00:2.75:2.50:2.25:24.38:92.12 & 1:2:2:2:24:92 \\
00020 & \textbf{5} & 15 & 30 & 19 & 190 & 741 & 1.00:3.00:6.00:3.80:38.00:148.20 & 1:3:6:3:38:148 \\
00021 & \textbf{6} & 14 & 25 & 21 & 195 & 739 & 1.00:2.33:4.17:3.50:32.50:123.17 & 1:2:4:3:32:123 \\
00022 & \textbf{5} & 16 & 28 & 18 & 199 & 734 & 1.00:3.20:5.60:3.60:39.80:146.80 & 1:3:5:3:39:146 \\
00023 & \textbf{7} & 14 & 18 & 17 & 202 & 742 & 1.00:2.00:2.57:2.43:28.86:106.00 & 1:2:2:2:28:106 \\
00024 & \textbf{8} & 14 & 20 & 12 & 187 & 759 & 1.00:1.75:2.50:1.50:23.38:94.88 & 1:1:2:1:23:94 \\
00025 & \textbf{3} & 13 & 29 & 23 & 212 & 720 & 1.00:4.33:9.67:7.67:70.67:240.00 & 1:4:9:7:70:240 \\
00026 & \textbf{5} & 15 & 17 & 20 & 201 & 742 & 1.00:3.00:3.40:4.00:40.20:148.40 & 1:3:3:4:40:148 \\
00027 & \textbf{9} & 16 & 16 & 13 & 181 & 765 & 1.00:1.78:1.78:1.44:20.11:85.00 & 1:1:1:1:20:85 \\
00028 & \textbf{8} & 20 & 15 & 23 & 211 & 723 & 1.00:2.50:1.88:2.88:26.38:90.38 & 1:2:1:2:26:90 \\
00029 & \textbf{5} & 18 & 24 & 19 & 184 & 750 & 1.00:3.60:4.80:3.80:36.80:150.00 & 1:3:4:3:36:150 \\
00030 & \textbf{7} & 19 & 19 & 26 & 193 & 736 & 1.00:2.71:2.71:3.71:27.57:105.14 & 1:2:2:3:27:105 \\
00031 & \textbf{7} & 25 & 23 & 19 & 170 & 756 & 1.00:3.57:3.29:2.71:24.29:108.00 & 1:3:3:2:24:108 \\
00032 & \textbf{7} & 14 & 20 & 21 & 189 & 749 & 1.00:2.00:2.86:3.00:27.00:107.00 & 1:2:2:3:27:107 \\
00033 & \textbf{5} & 12 & 28 & 20 & 189 & 746 & 1.00:2.40:5.60:4.00:37.80:149.20 & 1:2:5:4:37:149 \\
00034 & \textbf{6} & 15 & 23 & 22 & 205 & 729 & 1.00:2.50:3.83:3.67:34.17:121.50 & 1:2:3:3:34:121 \\
00035 & \textbf{6} & 14 & 26 & 27 & 183 & 744 & 1.00:2.33:4.33:4.50:30.50:124.00 & 1:2:4:4:30:124 \\
00036 & \textbf{5} & 20 & 30 & 15 & 189 & 741 & 1.00:4.00:6.00:3.00:37.80:148.20 & 1:4:6:3:37:148 \\
00037 & \textbf{9} & 14 & 21 & 19 & 187 & 750 & 1.00:1.56:2.33:2.11:20.78:83.33 & 1:1:2:2:20:83 \\
00038 & \textbf{8} & 20 & 26 & 22 & 186 & 738 & 1.00:2.50:3.25:2.75:23.25:92.25 & 1:2:3:2:23:92 \\
00039 & \textbf{3} & 11 & 29 & 24 & 198 & 735 & 1.00:3.67:9.67:8.00:66.00:245.00 & 1:3:9:8:66:245 \\
00040 & \textbf{5} & 17 & 25 & 23 & 183 & 747 & 1.00:3.40:5.00:4.60:36.60:149.40 & 1:3:5:4:36:149 \\
00041 & \textbf{5} & 6 & 17 & 21 & 175 & 776 & 1.00:1.20:3.40:4.20:35.00:155.20 & 1:1:3:4:35:155 \\
00042 & 14 & \textbf{13} & 20 & 192 & 761 & & 1.08:1.00:1.54:14.77:58.54 & 1:1:1:14:58 \\
00043 & \textbf{7} & 17 & 22 & 23 & 187 & 744 & 1.00:2.43:3.14:3.29:26.71:106.29 & 1:2:3:3:26:106 \\
00044 & \textbf{6} & 10 & 30 & 17 & 204 & 733 & 1.00:1.67:5.00:2.83:34.00:122.17 & 1:1:5:2:34:122 \\
00045 & \textbf{6} & 11 & 27 & 14 & 170 & 772 & 1.00:1.83:4.50:2.33:28.33:128.67 & 1:1:4:2:28:128 \\
00046 & \textbf{4} & 18 & 14 & 16 & 202 & 746 & 1.00:4.50:3.50:4.00:50.50:186.50 & 1:4:3:4:50:186 \\
00047 & \textbf{6} & 14 & 23 & 16 & 189 & 752 & 1.00:2.33:3.83:2.67:31.50:125.33 & 1:2:3:2:31:125 \\
00048 & \textbf{4} & 15 & 20 & 15 & 201 & 745 & 1.00:3.75:5.00:3.75:50.25:186.25 & 1:3:5:3:50:186 \\
00049 & \textbf{4} & 13 & 27 & 27 & 214 & 715 & 1.00:3.25:6.75:6.75:53.50:178.75 & 1:3:6:6:53:178 \\
00050 & \textbf{5} & 12 & 19 & 21 & 181 & 762 & 1.00:2.40:3.80:4.20:36.20:152.40 & 1:2:3:4:36:152 \\
00051 & 10 & \textbf{6} & 19 & 18 & 192 & 755 & 1.67:1.00:3.17:3.00:32.00:125.83 & 1:1:3:3:32:125 \\
00052 & \textbf{6} & 10 & 22 & 28 & 189 & 745 & 1.00:1.67:3.67:4.67:31.50:124.17 & 1:1:3:4:31:124 \\
00053 & \textbf{12} & 18 & 17 & 21 & 211 & 721 & 1.00:1.50:1.42:1.75:17.58:60.08 & 1:1:1:1:17:60 \\
00054 & \textbf{3} & 12 & 14 & 26 & 205 & 740 & 1.00:4.00:4.67:8.67:68.33:246.67 & 1:4:4:8:68:246 \\
00055 & \textbf{6} & 12 & 27 & 17 & 193 & 745 & 1.00:2.00:4.50:2.83:32.17:124.17 & 1:2:4:2:32:124 \\
00056 & \textbf{8} & 27 & 22 & 28 & 187 & 728 & 1.00:3.38:2.75:3.50:23.38:91.00 & 1:3:2:3:23:91 \\
00057 & 7 & \textbf{4} & 18 & 29 & 189 & 753 & 1.75:1.00:4.50:7.25:47.25:188.25 & 1:1:4:7:47:188 \\
00058 & \textbf{5} & 16 & 17 & 24 & 204 & 734 & 1.00:3.20:3.40:4.80:40.80:146.80 & 1:3:3:4:40:146 \\
00059 & \textbf{3} & 12 & 18 & 21 & 205 & 741 & 1.00:4.00:6.00:7.00:68.33:247.00 & 1:4:6:7:68:247 \\
00060 & \textbf{6} & 16 & 23 & 22 & 198 & 735 & 1.00:2.67:3.83:3.67:33.00:122.50 & 1:2:3:3:33:122 \\
00061 & \textbf{11} & \textbf{11} & 18 & 19 & 182 & 759 & 1.00:1.00:1.64:1.73:16.55:69.00 & 1:1:1:1:16:69 \\
00062 & \textbf{8} & 15 & 26 & 19 & 193 & 739 & 1.00:1.88:3.25:2.38:24.12:92.38 & 1:1:3:2:24:92 \\
00063 & \textbf{4} & 10 & 22 & 22 & 171 & 771 & 1.00:2.50:5.50:5.50:42.75:192.75 & 1:2:5:5:42:192 \\
00064 & \textbf{7} & 17 & 30 & 23 & 193 & 730 & 1.00:2.43:4.29:3.29:27.57:104.29 & 1:2:4:3:27:104 \\
00065 & \textbf{8} & 16 & 22 & 17 & 189 & 748 & 1.00:2.00:2.75:2.12:23.62:93.50 & 1:2:2:2:23:93 \\
00066 & \textbf{4} & 15 & 18 & 19 & 178 & 766 & 1.00:3.75:4.50:4.75:44.50:191.50 & 1:3:4:4:44:191 \\
00067 & \textbf{5} & 10 & 32 & 22 & 183 & 748 & 1.00:2.00:6.40:4.40:36.60:149.60 & 1:2:6:4:36:149 \\
00068 & \textbf{3} & 20 & 14 & 13 & 208 & 742 & 1.00:6.67:4.67:4.33:69.33:247.33 & 1:6:4:4:69:247 \\
00069 & \textbf{9} & 10 & 22 & 15 & 185 & 759 & 1.00:1.11:2.44:1.67:20.56:84.33 & 1:1:2:1:20:84 \\
00070 & \textbf{9} & 14 & 18 & 15 & 195 & 749 & 1.00:1.56:2.00:1.67:21.67:83.22 & 1:1:2:1:21:83 \\
00071 & \textbf{11} & 15 & 18 & 29 & 183 & 744 & 1.00:1.36:1.64:2.64:16.64:67.64 & 1:1:1:2:16:67 \\
00072 & \textbf{3} & 14 & 30 & 18 & 181 & 754 & 1.00:4.67:10.00:6.00:60.33:251.33 & 1:4:10:6:60:251 \\
00073 & \textbf{5} & 11 & 17 & 14 & 210 & 743 & 1.00:2.20:3.40:2.80:42.00:148.60 & 1:2:3:2:42:148 \\
00074 & 5 & \textbf{4} & 22 & 17 & 174 & 778 & 1.25:1.00:5.50:4.25:43.50:194.50 & 1:1:5:4:43:194 \\
00075 & \textbf{9} & 13 & 25 & 21 & 193 & 739 & 1.00:1.44:2.78:2.33:21.44:82.11 & 1:1:2:2:21:82 \\
00076 & 10 & \textbf{9} & 21 & 22 & 178 & 760 & 1.11:1.00:2.33:2.44:19.78:84.44 & 1:1:2:2:19:84 \\
00077 & \textbf{4} & 22 & 22 & 25 & 182 & 745 & 1.00:5.50:5.50:6.25:45.50:186.25 & 1:5:5:6:45:186 \\
00078 & \textbf{5} & 18 & 20 & 29 & 206 & 722 & 1.00:3.60:4.00:5.80:41.20:144.40 & 1:3:4:5:41:144 \\
00079 & \textbf{6} & 15 & 24 & 17 & 190 & 748 & 1.00:2.50:4.00:2.83:31.67:124.67 & 1:2:4:2:31:124 \\
00080 & \textbf{5} & 17 & 22 & 10 & 200 & 746 & 1.00:3.40:4.40:2.00:40.00:149.20 & 1:3:4:2:40:149 \\
00081 & \textbf{11} & 27 & 25 & 22 & 175 & 740 & 1.00:2.45:2.27:2.00:15.91:67.27 & 1:2:2:2:15:67 \\
00082 & \textbf{4} & 16 & 23 & 25 & 165 & 767 & 1.00:4.00:5.75:6.25:41.25:191.75 & 1:4:5:6:41:191 \\
00083 & \textbf{6} & 16 & 23 & 10 & 181 & 764 & 1.00:2.67:3.83:1.67:30.17:127.33 & 1:2:3:1:30:127 \\
00084 & \textbf{6} & 12 & 18 & 20 & 179 & 765 & 1.00:2.00:3.00:3.33:29.83:127.50 & 1:2:3:3:29:127 \\
00085 & \textbf{4} & 10 & 26 & 19 & 191 & 750 & 1.00:2.50:6.50:4.75:47.75:187.50 & 1:2:6:4:47:187 \\
00086 & \textbf{3} & 21 & 24 & 21 & 183 & 748 & 1.00:7.00:8.00:7.00:61.00:249.33 & 1:7:8:7:61:249 \\
00087 & \textbf{4} & 12 & 25 & 19 & 215 & 725 & 1.00:3.00:6.25:4.75:53.75:181.25 & 1:3:6:4:53:181 \\
00088 & \textbf{11} & 14 & 28 & 19 & 200 & 728 & 1.00:1.27:2.55:1.73:18.18:66.18 & 1:1:2:1:18:66 \\
00089 & \textbf{7} & 14 & 29 & 21 & 177 & 752 & 1.00:2.00:4.14:3.00:25.29:107.43 & 1:2:4:3:25:107 \\
00090 & \textbf{7} & 17 & 27 & 28 & 197 & 724 & 1.00:2.43:3.86:4.00:28.14:103.43 & 1:2:3:4:28:103 \\
\bottomrule
\end{longtblr}}
\onecolumn

\twocolumn[]
{\small
\begin{longtblr}[
  caption = {income},
]{
  colspec = {lrrrr},
  rowhead = 1,
}
\toprule
Inst. & G1 & G2 & Ratio & Ideal \\
\midrule
00000 & \textbf{499} & 501 & 1.00:1.00 & 1:1 \\
00001 & 543 & \textbf{457} & 1.19:1.00 & 1:1 \\
00002 & 531 & \textbf{469} & 1.13:1.00 & 1:1 \\
00003 & 522 & \textbf{478} & 1.09:1.00 & 1:1 \\
00004 & 525 & \textbf{475} & 1.11:1.00 & 1:1 \\
00005 & 550 & \textbf{450} & 1.22:1.00 & 1:1 \\
00006 & 523 & \textbf{477} & 1.10:1.00 & 1:1 \\
00007 & 528 & \textbf{472} & 1.12:1.00 & 1:1 \\
00008 & 558 & \textbf{442} & 1.26:1.00 & 1:1 \\
00009 & 551 & \textbf{449} & 1.23:1.00 & 1:1 \\
00010 & 525 & \textbf{475} & 1.11:1.00 & 1:1 \\
00011 & 523 & \textbf{477} & 1.10:1.00 & 1:1 \\
00012 & 516 & \textbf{484} & 1.07:1.00 & 1:1 \\
00013 & 514 & \textbf{486} & 1.06:1.00 & 1:1 \\
00014 & 508 & \textbf{492} & 1.03:1.00 & 1:1 \\
00015 & 555 & \textbf{445} & 1.25:1.00 & 1:1 \\
00016 & 522 & \textbf{478} & 1.09:1.00 & 1:1 \\
00017 & 548 & \textbf{452} & 1.21:1.00 & 1:1 \\
00018 & 512 & \textbf{488} & 1.05:1.00 & 1:1 \\
00019 & 522 & \textbf{478} & 1.09:1.00 & 1:1 \\
00020 & 528 & \textbf{472} & 1.12:1.00 & 1:1 \\
00021 & 525 & \textbf{475} & 1.11:1.00 & 1:1 \\
00022 & 537 & \textbf{463} & 1.16:1.00 & 1:1 \\
00023 & 532 & \textbf{468} & 1.14:1.00 & 1:1 \\
00024 & 535 & \textbf{465} & 1.15:1.00 & 1:1 \\
00025 & 525 & \textbf{475} & 1.11:1.00 & 1:1 \\
00026 & 531 & \textbf{469} & 1.13:1.00 & 1:1 \\
00027 & 541 & \textbf{459} & 1.18:1.00 & 1:1 \\
00028 & 552 & \textbf{448} & 1.23:1.00 & 1:1 \\
00029 & 530 & \textbf{470} & 1.13:1.00 & 1:1 \\
00030 & 539 & \textbf{461} & 1.17:1.00 & 1:1 \\
00031 & 527 & \textbf{473} & 1.11:1.00 & 1:1 \\
00032 & 525 & \textbf{475} & 1.11:1.00 & 1:1 \\
00033 & 509 & \textbf{491} & 1.04:1.00 & 1:1 \\
00034 & 534 & \textbf{466} & 1.15:1.00 & 1:1 \\
00035 & 539 & \textbf{461} & 1.17:1.00 & 1:1 \\
00036 & 546 & \textbf{454} & 1.20:1.00 & 1:1 \\
00037 & 547 & \textbf{453} & 1.21:1.00 & 1:1 \\
00038 & 527 & \textbf{473} & 1.11:1.00 & 1:1 \\
00039 & 532 & \textbf{468} & 1.14:1.00 & 1:1 \\
00040 & 529 & \textbf{471} & 1.12:1.00 & 1:1 \\
00041 & 527 & \textbf{473} & 1.11:1.00 & 1:1 \\
00042 & 518 & \textbf{482} & 1.07:1.00 & 1:1 \\
00043 & 511 & \textbf{489} & 1.04:1.00 & 1:1 \\
00044 & 524 & \textbf{476} & 1.10:1.00 & 1:1 \\
00045 & 536 & \textbf{464} & 1.16:1.00 & 1:1 \\
00046 & 532 & \textbf{468} & 1.14:1.00 & 1:1 \\
00047 & 530 & \textbf{470} & 1.13:1.00 & 1:1 \\
00048 & 525 & \textbf{475} & 1.11:1.00 & 1:1 \\
00049 & 524 & \textbf{476} & 1.10:1.00 & 1:1 \\
00050 & 505 & \textbf{495} & 1.02:1.00 & 1:1 \\
00051 & 519 & \textbf{481} & 1.08:1.00 & 1:1 \\
00052 & 538 & \textbf{462} & 1.16:1.00 & 1:1 \\
00053 & 534 & \textbf{466} & 1.15:1.00 & 1:1 \\
00054 & 529 & \textbf{471} & 1.12:1.00 & 1:1 \\
00055 & 534 & \textbf{466} & 1.15:1.00 & 1:1 \\
00056 & 501 & \textbf{499} & 1.00:1.00 & 1:1 \\
00057 & 516 & \textbf{484} & 1.07:1.00 & 1:1 \\
00058 & \textbf{494} & 506 & 1.00:1.02 & 1:1 \\
00059 & 533 & \textbf{467} & 1.14:1.00 & 1:1 \\
00060 & \textbf{497} & 503 & 1.00:1.01 & 1:1 \\
00061 & 514 & \textbf{486} & 1.06:1.00 & 1:1 \\
00062 & \textbf{491} & 509 & 1.00:1.04 & 1:1 \\
00063 & 542 & \textbf{458} & 1.18:1.00 & 1:1 \\
00064 & 522 & \textbf{478} & 1.09:1.00 & 1:1 \\
00065 & 545 & \textbf{455} & 1.20:1.00 & 1:1 \\
00066 & 553 & \textbf{447} & 1.24:1.00 & 1:1 \\
00067 & 533 & \textbf{467} & 1.14:1.00 & 1:1 \\
00068 & 542 & \textbf{458} & 1.18:1.00 & 1:1 \\
00069 & 557 & \textbf{443} & 1.26:1.00 & 1:1 \\
00070 & 541 & \textbf{459} & 1.18:1.00 & 1:1 \\
00071 & 525 & \textbf{475} & 1.11:1.00 & 1:1 \\
00072 & 506 & \textbf{494} & 1.02:1.00 & 1:1 \\
00073 & 527 & \textbf{473} & 1.11:1.00 & 1:1 \\
00074 & 533 & \textbf{467} & 1.14:1.00 & 1:1 \\
00075 & 508 & \textbf{492} & 1.03:1.00 & 1:1 \\
00076 & 513 & \textbf{487} & 1.05:1.00 & 1:1 \\
00077 & 556 & \textbf{444} & 1.25:1.00 & 1:1 \\
00078 & 555 & \textbf{445} & 1.25:1.00 & 1:1 \\
00079 & 547 & \textbf{453} & 1.21:1.00 & 1:1 \\
00080 & 554 & \textbf{446} & 1.24:1.00 & 1:1 \\
00081 & 508 & \textbf{492} & 1.03:1.00 & 1:1 \\
00082 & 533 & \textbf{467} & 1.14:1.00 & 1:1 \\
00083 & 518 & \textbf{482} & 1.07:1.00 & 1:1 \\
00084 & 529 & \textbf{471} & 1.12:1.00 & 1:1 \\
00085 & 521 & \textbf{479} & 1.09:1.00 & 1:1 \\
00086 & 534 & \textbf{466} & 1.15:1.00 & 1:1 \\
00087 & \textbf{498} & 502 & 1.00:1.01 & 1:1 \\
00088 & 524 & \textbf{476} & 1.10:1.00 & 1:1 \\
00089 & 532 & \textbf{468} & 1.14:1.00 & 1:1 \\
00090 & 539 & \textbf{461} & 1.17:1.00 & 1:1 \\
00091 & 531 & \textbf{469} & 1.13:1.00 & 1:1 \\
00092 & 519 & \textbf{481} & 1.08:1.00 & 1:1 \\
00093 & 543 & \textbf{457} & 1.19:1.00 & 1:1 \\
00094 & 529 & \textbf{471} & 1.12:1.00 & 1:1 \\
00095 & 534 & \textbf{466} & 1.15:1.00 & 1:1 \\
00096 & 542 & \textbf{458} & 1.18:1.00 & 1:1 \\
00097 & \textbf{498} & 502 & 1.00:1.01 & 1:1 \\
00098 & 524 & \textbf{476} & 1.10:1.00 & 1:1 \\
00099 & 512 & \textbf{488} & 1.05:1.00 & 1:1 \\
00100 & 503 & \textbf{497} & 1.01:1.00 & 1:1 \\
00101 & 547 & \textbf{453} & 1.21:1.00 & 1:1 \\
00102 & 519 & \textbf{481} & 1.08:1.00 & 1:1 \\
00103 & 524 & \textbf{476} & 1.10:1.00 & 1:1 \\
00104 & 546 & \textbf{454} & 1.20:1.00 & 1:1 \\
00105 & 518 & \textbf{482} & 1.07:1.00 & 1:1 \\
00106 & 509 & \textbf{491} & 1.04:1.00 & 1:1 \\
00107 & 520 & \textbf{480} & 1.08:1.00 & 1:1 \\
00108 & 529 & \textbf{471} & 1.12:1.00 & 1:1 \\
00109 & 538 & \textbf{462} & 1.16:1.00 & 1:1 \\
00110 & 515 & \textbf{485} & 1.06:1.00 & 1:1 \\
00111 & 516 & \textbf{484} & 1.07:1.00 & 1:1 \\
00112 & 520 & \textbf{480} & 1.08:1.00 & 1:1 \\
00113 & 519 & \textbf{481} & 1.08:1.00 & 1:1 \\
00114 & 518 & \textbf{482} & 1.07:1.00 & 1:1 \\
00115 & 513 & \textbf{487} & 1.05:1.00 & 1:1 \\
00116 & 520 & \textbf{480} & 1.08:1.00 & 1:1 \\
00117 & 529 & \textbf{471} & 1.12:1.00 & 1:1 \\
00118 & 540 & \textbf{460} & 1.17:1.00 & 1:1 \\
00119 & 528 & \textbf{472} & 1.12:1.00 & 1:1 \\
00120 & 545 & \textbf{455} & 1.20:1.00 & 1:1 \\
00121 & 525 & \textbf{475} & 1.11:1.00 & 1:1 \\
00122 & 558 & \textbf{442} & 1.26:1.00 & 1:1 \\
00123 & 523 & \textbf{477} & 1.10:1.00 & 1:1 \\
00124 & 506 & \textbf{494} & 1.02:1.00 & 1:1 \\
00125 & 539 & \textbf{461} & 1.17:1.00 & 1:1 \\
00126 & 531 & \textbf{469} & 1.13:1.00 & 1:1 \\
00127 & 545 & \textbf{455} & 1.20:1.00 & 1:1 \\
00128 & 567 & \textbf{433} & 1.31:1.00 & 1:1 \\
00129 & 506 & \textbf{494} & 1.02:1.00 & 1:1 \\
00130 & 519 & \textbf{481} & 1.08:1.00 & 1:1 \\
00131 & 523 & \textbf{477} & 1.10:1.00 & 1:1 \\
00132 & 527 & \textbf{473} & 1.11:1.00 & 1:1 \\
00133 & 528 & \textbf{472} & 1.12:1.00 & 1:1 \\
00134 & 510 & \textbf{490} & 1.04:1.00 & 1:1 \\
00135 & 522 & \textbf{478} & 1.09:1.00 & 1:1 \\
00136 & \textbf{497} & 503 & 1.00:1.01 & 1:1 \\
00137 & 557 & \textbf{443} & 1.26:1.00 & 1:1 \\
00138 & 514 & \textbf{486} & 1.06:1.00 & 1:1 \\
00139 & 532 & \textbf{468} & 1.14:1.00 & 1:1 \\
00140 & 536 & \textbf{464} & 1.16:1.00 & 1:1 \\
00141 & 524 & \textbf{476} & 1.10:1.00 & 1:1 \\
00142 & 512 & \textbf{488} & 1.05:1.00 & 1:1 \\
00143 & 537 & \textbf{463} & 1.16:1.00 & 1:1 \\
00144 & 524 & \textbf{476} & 1.10:1.00 & 1:1 \\
00145 & 536 & \textbf{464} & 1.16:1.00 & 1:1 \\
00146 & 511 & \textbf{489} & 1.04:1.00 & 1:1 \\
00147 & 542 & \textbf{458} & 1.18:1.00 & 1:1 \\
00148 & 506 & \textbf{494} & 1.02:1.00 & 1:1 \\
00149 & 551 & \textbf{449} & 1.23:1.00 & 1:1 \\
00150 & 560 & \textbf{440} & 1.27:1.00 & 1:1 \\
00151 & 502 & \textbf{498} & 1.01:1.00 & 1:1 \\
00152 & 552 & \textbf{448} & 1.23:1.00 & 1:1 \\
00153 & 514 & \textbf{486} & 1.06:1.00 & 1:1 \\
00154 & 529 & \textbf{471} & 1.12:1.00 & 1:1 \\
00155 & 540 & \textbf{460} & 1.17:1.00 & 1:1 \\
00156 & 556 & \textbf{444} & 1.25:1.00 & 1:1 \\
00157 & 505 & \textbf{495} & 1.02:1.00 & 1:1 \\
00158 & 550 & \textbf{450} & 1.22:1.00 & 1:1 \\
00159 & 524 & \textbf{476} & 1.10:1.00 & 1:1 \\
00160 & 567 & \textbf{433} & 1.31:1.00 & 1:1 \\
00161 & 561 & \textbf{439} & 1.28:1.00 & 1:1 \\
00162 & 539 & \textbf{461} & 1.17:1.00 & 1:1 \\
00163 & 514 & \textbf{486} & 1.06:1.00 & 1:1 \\
00164 & 548 & \textbf{452} & 1.21:1.00 & 1:1 \\
00165 & 525 & \textbf{475} & 1.11:1.00 & 1:1 \\
00166 & 525 & \textbf{475} & 1.11:1.00 & 1:1 \\
00167 & 514 & \textbf{486} & 1.06:1.00 & 1:1 \\
00168 & 523 & \textbf{477} & 1.10:1.00 & 1:1 \\
00169 & 516 & \textbf{484} & 1.07:1.00 & 1:1 \\
00170 & 536 & \textbf{464} & 1.16:1.00 & 1:1 \\
00171 & 525 & \textbf{475} & 1.11:1.00 & 1:1 \\
00172 & 524 & \textbf{476} & 1.10:1.00 & 1:1 \\
00173 & 504 & \textbf{496} & 1.02:1.00 & 1:1 \\
00174 & 540 & \textbf{460} & 1.17:1.00 & 1:1 \\
00175 & 530 & \textbf{470} & 1.13:1.00 & 1:1 \\
00176 & 537 & \textbf{463} & 1.16:1.00 & 1:1 \\
00177 & 505 & \textbf{495} & 1.02:1.00 & 1:1 \\
00178 & 526 & \textbf{474} & 1.11:1.00 & 1:1 \\
00179 & 553 & \textbf{447} & 1.24:1.00 & 1:1 \\
00180 & 507 & \textbf{493} & 1.03:1.00 & 1:1 \\
00181 & 534 & \textbf{466} & 1.15:1.00 & 1:1 \\
00182 & 524 & \textbf{476} & 1.10:1.00 & 1:1 \\
00183 & 537 & \textbf{463} & 1.16:1.00 & 1:1 \\
00184 & 535 & \textbf{465} & 1.15:1.00 & 1:1 \\
00185 & 525 & \textbf{475} & 1.11:1.00 & 1:1 \\
00186 & \textbf{497} & 503 & 1.00:1.01 & 1:1 \\
00187 & 526 & \textbf{474} & 1.11:1.00 & 1:1 \\
00188 & \textbf{500} & \textbf{500} & 1.00:1.00 & 1:1 \\
00189 & 505 & \textbf{495} & 1.02:1.00 & 1:1 \\
00190 & 538 & \textbf{462} & 1.16:1.00 & 1:1 \\
00191 & 532 & \textbf{468} & 1.14:1.00 & 1:1 \\
00192 & 553 & \textbf{447} & 1.24:1.00 & 1:1 \\
00193 & 524 & \textbf{476} & 1.10:1.00 & 1:1 \\
00194 & 544 & \textbf{456} & 1.19:1.00 & 1:1 \\
\bottomrule
\end{longtblr}}
\onecolumn
\fi

\end{document}